\keywords{Fully abstract compilation, cross-language logical relation, modular compilation}
\newcommand{\citet}[1]{%
    \ifthenelse{\equal{#1}{morris_lambda-calculus_1968}}{Morris \cite{#1}}{%
    \ifthenelse{\equal{#1}{syn-con-rec-ty}}{Abadi and Fiore \cite{#1}}{%
    \ifthenelse{\equal{#1}{prta}}{Urzyczyn \cite{#1}}{%
    \ifthenelse{\equal{#1}{popl-journal}}{Devriese et al.\ \cite{#1}}{%
    \ifthenelse{\equal{#1}{isoequi-popl}}{Patrignani et al.\ \cite{#1}}{%
    \ifthenelse{\equal{#1}{Cai:2016:SFE:2914770.2837660}}{Cai et al.\ \cite{#1}}{%
    \ifthenelse{\equal{#1}{popl-journal,Hur:2011:KLR:1926385.1926402}}{Devriese et al.\ \cite{popl-journal}; Hur and Dreyer \cite{Hur:2011:KLR:1926385.1926402}}{%
    \ifthenelse{\equal{#1}{scsurvey}}{Patrignani et al.\ \cite{#1}}{%
    \ifthenelse{\equal{#1}{Devriese:2016:FCA:2837614.2837618}}{Devriese et al.\ \cite{#1}}{%
    \ifthenelse{\equal{#1}{Devriese:2016:FCA:2837614.2837618,popl-journal}}{Devriese et al.\ \cite{Devriese:2016:FCA:2837614.2837618,popl-journal}}{%
    \ifthenelse{\equal{#1}{exprPA}}{Parrow \cite{#1}}{%
    \ifthenelse{\equal{#1}{obs-pc-corr-trans}}{Schmidt-Schauß et al.\ \cite{#1}}{%
    \ifthenelse{\equal{#1}{max-embed}}{New et al.\ \cite{#1}}{%
    \ifthenelse{\equal{#1}{van_strydonck_linear_2019}}{Van Strydonck et al.\ \cite{#1}}{%
      Error case%
    }}}}}}}}}}}}}}%
}
\newcommand{\citep}[1]{%
  \cite{#1}%
}
\newcommand{\citeauthor}[1]{\hyperlink{cite.syn-con-rec-ty}{Abadi and Fiore}}
\newcommand{\mi}[1]{\ensuremath{\mathit{#1}}}
\newcommand{\mtt}[1]{\ensuremath{\mathtt{#1}}}
\newcommand{\mf}[1]{\ensuremath{\mathbf{#1}}}
\newcommand{\mk}[1]{\ensuremath{\mathfrak{#1}}}
\newcommand{\mc}[1]{\ensuremath{\mathcal{#1}}}
\newcommand{\ms}[1]{\ensuremath{\mathsf{#1}}}
\newcommand{\mb}[1]{\ensuremath{\mathbb{#1}}}
\newcommand{\isdef}[0]{\ensuremath{\mathrel{\overset{\makebox[0pt]{\mbox{\normalfont\tiny\sffamily def}}}{=}}}}
\newcommand{\relmiddle}[1]{\mathrel{}\middle#1\mathrel{}}
\newcommand\bnfdef{\ensuremath{\mathrel{::=}}}
\newcommand{\myset}[2]{\ensuremath{\left\{#1 ~\relmiddle|~ #2\right\}}}
\newcommand{\term}[0]{\ensuremath{\!\!\Downarrow}\xspace}
\newcommand{\termsl}[0]{\ensuremath{\src{\Downarrow}}\xspace}
\newcommand{\termt}[0]{\ensuremath{\trgb{\Downarrow}}\xspace}
\Crefname{lstlisting}{Listing}{Listings}
\Crefname{equation}{Rule}{Rules}
\newcommand{\genlang}[2]{\ensuremath{\lambda^{#1}_{#2}}}
\newcommand{\stlcu}[0]{\com{\genlang{u}{}}\xspace}
\newcommand{\stlcf}[0]{\src{\genlang{fx}{}}\xspace}
\newcommand{\stlcm}[0]{\stlcim}
\newcommand{\stlcim}[0]{\trg{\trgb{\lambda}^{\trgb{\mu}}_{I}}\xspace} %
\newcommand{\stlcem}[0]{\oth{\genlang{\mu}{E}}\xspace}
\newcommand{\stlcemind}[0]{\eqind{\genlang{\mu}{Ei}}\xspace}
\newcommand{\stlcemsimp}[0]{\eqind{\genlang{\mu}{Es}}\xspace}
\newcommand{\compskel}[3]{\ensuremath{\bl{\left\llbracket {#1} \right\rrbracket^{#2}_{#3}}}}
\newcommand{\comp}[1]{\compskel{\bl{#1}}{}{}}
\newcommand{\compstlc}[1]{\compskel{\src{#1}}{\stlcf}{\stlcm}}
\newcommand{\compstlce}[1]{\compskel{\src{#1}}{\stlcf}{\stlcem}}
\newcommand{\compstlcic}[1]{\compskel{\trg{#1}}{\stlcim}{\stlcem}}
\newcommand{\compstlcfi}[1]{\compstlc{#1}}
\newcommand{\compstlcfe}[1]{\compstlce{#1}}
\newcommand{\compstlcie}[1]{\compstlcic{#1}}
\newcommand{\funname}[1]{\mtt{#1}}
\newcommand{\fun}[2]{\ensuremath{{\bl{\funname{#1}\left(#2\right)}}}\xspace}
\newcommand{\ftv}[1]{\fun{ftv}{#1}}
\newcommand{\backtrskel}[3]{\ensuremath{\bl{\left\langle\!\left\langle {#1} \right\rangle\!\right\rangle^{#2}_{#3}}}}
\newcommand{\backtr}[1]{\backtrskel{#1}{}{}}
\newcommand{\backtrstlc}[1]{\backtrskel{\trg{#1}}{\stlcm}{\stlcf}}
\newcommand{\backtrstlce}[1]{\backtrskel{\oth{#1}}{\stlcem}{\stlcim}}
\newcommand{\backtrstlcef}[1]{\backtrskel{\oth{#1}}{\stlcem}{\stlcf}}
\newcommand{\backtrstlcei}[1]{\backtrstlce{#1}}
\newcommand{\backtrstlcif}[1]{\backtrstlc{#1}}
\newcommand{\langT}[0]{\ensuremath{L_{\com{trg}}}\xspace}
\newcommand{\langS}[0]{\ensuremath{L_{\com{src}}}\xspace}
\newcommand{\ctx}[0]{\ensuremath{\mk{C}}}
\newcommand{\ctxs}[0]{\src{\ctx}\xspace}
\newcommand{\ctxt}[0]{\trgb{\ctx}\xspace}%
\newcommand{\ctxo}[0]{\oth{\ctx}\xspace}%
\newcommand{\ctxc}[0]{\com{\ctx}\xspace}%
\newcommand{\ctxhs}[1]{\ctxs\src{\hole{#1}}\xspace}
\newcommand{\ctxht}[1]{\ctxt\trg{\hole{#1}}\xspace}%
\newcommand{\ctxho}[1]{\ctxo\oth{\hole{#1}}\xspace}%
\newcommand{\ctxhc}[1]{\ctxc\com{\hole{#1}}\xspace}%
\newcommand{\hole}[1]{\ensuremath{\left[#1\right]}}
\newcommand{\evalctx}[0]{\ensuremath{\mb{E}}}
\newcommand{\evalctxs}[0]{\src{\evalctx}\xspace}
\newcommand{\evalctxt}[0]{\trgb{\evalctx}\xspace}
\newcommand{\evalctxc}[0]{\com{\evalctx}\xspace}
\newcommand{\evalctxhs}[1]{\src{\evalctx\hole{#1}}\xspace}
\newcommand{\evalctxht}[1]{\trg{\evalctx\hole{#1}}\xspace}
\newcommand{\evalctxhc}[1]{\com{\evalctx\hole{#1}}\xspace}
\DeclareMathOperator\tobl{\bl{\to}}
\DeclareMathOperator\vdashbl{\bl{\vdash}}
\DeclareMathOperator\tot{\trgb{\to}}
\DeclareMathOperator\uplust{\trgb{\uplus}}
\DeclareMathOperator\timest{\trgb{\times}}
\newcommand{\Unit}[0]{Unit\xspace}
\newcommand{\Bool}[0]{Bool\xspace}
\newcommand{\Boolc}[0]{\com{{Bool}}\xspace}
\newcommand{\Bools}[0]{\src{{Bool}}\xspace}
\newcommand{\Boolt}[0]{\trg{{Bool}}\xspace}
\newcommand{\Boolo}[0]{\oth{{Bool}}\xspace}
\newcommand{\Unitc}[0]{\com{{Unit}}\xspace}
\newcommand{\Units}[0]{\src{{Unit}}\xspace}
\newcommand{\Unitt}[0]{\trg{{Unit}}\xspace}
\newcommand{\Unito}[0]{\oth{{Unit}}\xspace}
\newcommand{\truec}[0]{\com{{true}}\xspace}
\newcommand{\falsec}[0]{\com{{false}}\xspace}
\newcommand{\unitc}[0]{\com{{unit}}\xspace}
\newcommand{\trues}[0]{\src{{true}}\xspace}
\newcommand{\falses}[0]{\src{{false}}\xspace}
\newcommand{\units}[0]{\src{{unit}}\xspace}
\newcommand{\truet}[0]{\trg{{true}}\xspace}
\newcommand{\falset}[0]{\trg{false}\xspace}
\newcommand{\unitt}[0]{\trg{unit}\xspace}
\newcommand{\srce}[0]{\src{\emptyset}\xspace}
\newcommand{\trge}[0]{\trgb{\emptyset}\xspace}
\newcommand{\come}[0]{\com{\emptyset}\xspace}
\newcommand{\othe}[0]{\oth{\emptyset}\xspace}
\newcommand{\eqinde}[0]{\eqind{\emptyset}\xspace}
\newcommand{\neutcol}[0]{black}
\newcommand{\stlccol}[0]{RoyalBlue}
\newcommand{\ulccol}[0]{RedOrange}
\newcommand{\commoncol}[0]{black}    %
\newcommand{\othercol}[0]{CarnationPink}%
\newcommand{\orcol}[0]{Emerald}%
\definecolor{dark-gray}{gray}{0.35}
\definecolor{light-gray}{gray}{0.9}
\newcommand{\col}[2]{\ensuremath{{\color{#1}{#2}}}}
\newcommand{\src}[1]{\ms{\col{\stlccol}{#1}}}
\newcommand{\trgb}[1]{\ensuremath{\bm{\col{\ulccol }{#1}}}}
\newcommand{\trg}[1]{{\mf{\col{\ulccol }{#1}}}}
\newcommand{\oth}[1]{\mi{\col{\othercol }{#1}}}
\newcommand{\othb}[1]{\oth{#1}} %
\newcommand{\eqind}[1]{\mtt{\col{\orcol }{#1}}}
\newcommand{\bl}[1]{\col{\neutcol }{#1}}
\renewcommand{\com}[1]{\mi{\col{\commoncol }{#1}}}
\newcounter{typerule}
\crefname{typerule}{rule}{rules}
\newcommand{\typeruleInt}[5]{%
	\def\thetyperule{#1}%
	\refstepcounter{typerule}%
	\label{tr:#4}%
  \ensuremath{\begin{array}{c}#5 \inference{#2}{#3}\end{array}} 
}
\newcommand{\typerule}[4]{%
  \typeruleInt{#1}{#2}{#3}{#4}{\textsf{\scriptsize ({#1})} \\      }
}
\newcommand{\tikzpic}[1]{
\begin{tikzpicture}[shorten >=1pt,auto,node distance=6mm]
\tikzstyle{state} =[fill=white,minimum size=4pt]
\tikzstyle{field} =[fill=gray!5,draw=black!70, rectangle, minimum width={width("whiskersfieldww")+2pt}]]
#1
\end{tikzpicture}
}
\newcommand{\mytoprule}[1]{\vspace{1mm}\noindent\hrulefill\ \raisebox{-0.5ex}{\fbox{\ensuremath{#1}}} \hrulefill\hrulefill\hrulefill\vspace{0.5mm}}
\def\botrule{\vspace{0mm}\hrule\vspace{2mm}}
\definecolor{mygreen}{rgb}{0,0.6,0}
\definecolor{mygray}{rgb}{0.5,0.5,0.5}
\definecolor{mymauve}{rgb}{0.58,0,0.82}
\lstdefinelanguage{Java} %
{morekeywords={abstract, all, and, as, assert, but, check, disj, else, exactly, extends, fact, for, fun, iden, if, iff, implies, in, Int, int, let, lone, module, no, none, not, one, open, or, part, pred, run, seq, set, sig, some, sum, then, univ, package, class, public, private, null, return, new, interface, extern, object, implements, System, static, super, try , catch, throw, throws, Unit, var, val, of, principal, trust},
sensitive=true,
keywordstyle=\bfseries\color{green!40!black},
commentstyle=\itshape\color{purple!40!black},
morecomment=[l][\small\itshape\color{purple!40!black}]{//},
identifierstyle=\color{blue},
stringstyle=\color{orange},
basicstyle=\small,
basicstyle={\small\ttfamily},
numbers=left,
numberstyle=\tiny\color{mygray},
tabsize=2,
numbersep=3pt,
breaklines=true,
lineskip=-2pt,
stepnumber=1,
captionpos=b,
breaklines=true,
breakatwhitespace=false,
showspaces=false,
showtabs=false,
float=!h,
columns=fullflexible,escapeinside={(*@}{@*)},
moredelim=**[is][\color{red!60}]{@}{@},
literate={->}{{$\to$}}1 {^}{{$\mspace{-3mu}\widehat{\quad}\mspace{-3mu}$}}1
{<}{$<$ }2 {>}{$>$ }2 {>=}{$\geq$ }2 {=<}{$\leq$ }2
{<:}{{$<\mspace{-3mu}:$}}2 {:>}{{$:\mspace{-3mu}>$}}2
{=>}{{$\Rightarrow$ }}2 {+}{$+$ }2 {++}{{$+\mspace{-8mu}+$ }}2
{<=>}{{$\Leftrightarrow$ }}2 {+}{$+$ }2 {++}{{$+\mspace{-8mu}+$ }}2
{\~}{{$\mspace{-3mu}\widetilde{\quad}\mspace{-3mu}$}}1
{!=}{$\neq$ }2 {*}{${}^{\ast}$}1 %
{\#}{$\#$}1
}
\DeclareMathOperator\ceq{\ensuremath{\mathrel{\simeq_{{ctx}}}}}
\DeclareMathOperator\ceqs{\src{\ceq}}
\DeclareMathOperator\ceqt{\trgb{\ceq}}
\DeclareMathOperator\ceqo{\oth{\ceq}}
\DeclareMathOperator\ceqc{\com{\ceq}}
\theoremstyle{plain}
\newtheorem{theorem}[thm]{Theorem}
\newtheorem{corollary}[thm]{Corollary}
\newtheorem{lemma}[thm]{Lemma}
\theoremstyle{definition}
\newtheorem{definition}[thm]{Definition}
\newtheorem{example}[thm]{Example}
\Crefname{corollary}{Corollary}{Corollaries}
\Crefname{informal}{Definition}{Definition}
\Crefname{assumption}{Assumption}{Assumptions}
\crefname{assumption}{Assumption}{Assumptions}
\Crefname{property}{Property}{Properties}
\crefname{property}{Property}{Properties}
\newcommand{\lam}[2]{\ensuremath{\lambda #1\ldotp #2}}
\newcommand{\lamt}[2]{\ensuremath{\trgb{\lambda} #1\ldotp #2}}
\newcommand{\pair}[1]{\ensuremath{\left\langle#1\right\rangle}}
\newcommand{\projone}[1]{\ensuremath{#1.1}}
\newcommand{\projtwo}[1]{\ensuremath{#1.2}}
\newcommand{\proji}[1]{\ensuremath{#1.i}}
\newcommand{\fix}[1]{\ensuremath{fix}_{#1}~ }
\newcommand{\case}{\ensuremath{{case}}}
\newcommand{\of}{\ensuremath{{of}}}
\newcommand{\caseof}[3]{\ensuremath{{case}~#1~{of}~\inl{x_1}\mapsto #2\mid\inr{x_2}\mapsto #3}}
\newcommand{\casefoldedc}[3]{
	\com{\case}~#1~\com{\of}\left|\begin{aligned}
			&
			\com{\inl{x_1}\mapsto} #2
			\\
			&
			\com{\inr{x_2}\mapsto} #3
		\end{aligned}\right.
}
\newcommand{\casefoldeds}[3]{
	\src{\case}~#1~\src{\of}\left|\begin{aligned}
			&
			\src{\inl{x_1}\mapsto} #2
			\\
			&
			\src{\inr{x_2}\mapsto} #3
		\end{aligned}\right.
}
\newcommand{\casefoldedt}[3]{
	\trg{\case}~#1~\trg{\of}\left|\begin{aligned}
			&
			\trg{\inl{x_1}\mapsto} #2
			\\
			&
			\trg{\inr{x_2}\mapsto} #3
		\end{aligned}\right.
}
\newcommand{\ifte}[3]{\ensuremath{\com{if}~#1~\com{then}~#2~\com{else}~#3}}
\newcommand{\iftes}[3]{\ensuremath{\src{if}~#1~\src{then}~#2~\src{else}~#3}}
\newcommand{\iftet}[3]{\ensuremath{\trg{if}~#1~\trg{then}~#2~\trg{else}~#3}}
\newcommand{\inl}[1]{\ensuremath{{inl}~#1}}
\newcommand{\inr}[1]{\ensuremath{{inr}~#1}}
\newcommand{\fold}[1]{\ensuremath{{fold}_{\trg{#1}}}}
\newcommand{\unfold}[1]{\ensuremath{{unfold}_{\trg{#1}}}}
\newcommand{\alpt}[0]{\trgb{\alpha}}
\newcommand{\tat}[0]{\trgb{\tau}}
\newcommand{\Gat}[0]{\trgb{\Gamma}}
\newcommand{\matgen}[2]{\ensuremath{\mu #1\ldotp#2}}
\newcommand{\mat}[0]{\matgen{\alpha}{\tau}}
\newcommand{\matgent}[2]{\ensuremath{\trgb{\mu} #1\ldotp#2}}
\newcommand{\matt}[0]{\matgent{\alpt}{\tat}}
\newcommand{\redgen}[1]{\ensuremath{ \hookrightarrow^{#1} }}
\newcommand{\red}[0]{\redgen{}}
\newcommand{\redp}[0]{\ensuremath{ \hookrightarrow_{p} }}
\xdef\@thefnmark{\@empty}
\newcommand{\Thmref}[1]{\Cref{#1}~(\nameref{#1})}
\newcommand{\subst}[2]{\ensuremath{\bl{\left[#1\bl{/}#2\right]}}} %
\newcommand{\subs}[2]{\subst{\src{#1}}{\src{#2}}}
\newcommand{\subt}[2]{\subst{\trg{#1}}{\trg{#2}}}
\newcommand{\subo}[2]{\subst{\oth{#1}}{\oth{#2}}}
\newcommand{\subi}[2]{\subst{\eqind{#1}}{\eqind{#2}}}
\newcommand{\subc}[2]{\subst{\com{#1}}{\com{#2}}}
\newcounter{hps}
\crefname{hps}{}{}
\newcommand{\genlogrelname}[2]{\ensuremath{\mi{LR}^{#1}_{#2}}\xspace}
\newcommand{\lrfi}[0]{\genlogrelname{\src{fx}}{\iso{\isob{\mu} I}}}
\newcommand{\lrfe}[0]{\genlogrelname{\src{fx}}{\equi{\mu E}}}
\newcommand{\lrie}[0]{\genlogrelname{\iso{\isob{\mu} I}}{\equi{\mu E}}}
\newcommand{\oftypefi}[1]{\fun{oftype^{\ftoitext}}{#1}}
\newcommand{\oftypefe}[1]{\fun{oftype^{\ftoetext}}{#1}}
\newcommand{\oftypeie}[1]{\fun{oftype^{\itoetext}}{#1}}
\newcommand{\oftypes}[1]{\src{oftype\left(#1\right)}}
\newcommand{\oftypet}[1]{\trg{oftype\left(#1\right)}}
\newcommand{\oftypeo}[1]{\oth{oftype\left(#1\right)}}
\newcommand{\logrelgen}[1]{\ensuremath{\operatorname{\approx}^{#1}}}
\DeclareMathOperator\logrel{\logrelgen{}}
\DeclareMathOperator\bothlogrel{\logrelgen{}}
\newcommand{\underapproxlogrelgen}[1]{\ensuremath{\operatorname{\lesssim}^{#1}}}
\DeclareMathOperator\underlogrel{\underapproxlogrelgen{}}
\newcommand{\overapproxlogrelgen}[1]{\ensuremath{\operatorname{\gtrsim}^{#1}}}
\DeclareMathOperator\overlogrel{\overapproxlogrelgen{}}
\newcommand{\arbsim}{\ensuremath{\triangledown}}%
\newcommand{\genlogrel}[0]{\arbsim}
\DeclareMathOperator\anylogrel{\genlogrel{}}
\newcommand{\underlogreln}[1]{\ensuremath{\mathrel{\lesssim_{#1}}}}
\newcommand{\overlogreln}[1]{\ensuremath{\mathrel{\gtrsim_{#1}}}}
\newcommand{\anylogreln}[1]{\ensuremath{\mathrel{\triangledown_{#1}}}}
\newcommand{\genrel}[4]{\ensuremath{\bl{\mc{#1}\left\llbracket{#2}\right\rrbracket^{#3}_{#4}}}}
\newcommand{\valrel}[1]{\genrel{V}{\src{#1}}{}{\anylogrel}}
\newcommand{\contrel}[1]{\genrel{K}{\src{#1}}{}{\anylogrel}}
\newcommand{\termrel}[1]{\genrel{E}{\src{#1}}{}{\anylogrel}}
\newcommand{\envrel}[1]{\genrel{G}{\src{#1}}{}{\anylogrel}}
\newcommand{\valrele}[1]{\genrel{V}{\trgb{#1}}{}{\anylogrel}}
\newcommand{\langsp}[1]{\ensuremath{\mi{#1}}}
\newcommand{\langspfun}[2]{\ensuremath{\langsp{#1}(#2)}}
\newcommand{\W}[0]{\langsp{W}\xspace}
\newcommand{\stepsfungen}[2]{\ensuremath{\langspfun{lev^{#1}}{#2}}}
\newcommand{\stepsfun}[1]{\stepsfungen{}{#1}}
\newcommand{\latergen}[1]{\ensuremath{\triangleright^{#1}}}
\DeclareMathOperator\later{\latergen{}}
\newcommand{\laterfun}[1]{\langspfun{\triangleright}{#1}}
\newcommand{\obswfungen}[3]{\langspfun{O^{#1}}{#2}_{#3}}
\newcommand{\obsfun}[2]{\obswfungen{}{#1}{#2}}
\newcommand{\wobsfun}[2]{\mc{W}({#1})_{#2}}
\newcommand{\futwgen}[1]{\ensuremath{\sqsupseteq^{#1}}}
\newcommand{\strfutwgen}[1]{\ensuremath{\sqsupset_{\later}^{#1}}}
\DeclareMathOperator\futw{\futwgen{}}
\DeclareMathOperator\strfutw{\strfutwgen{}}
\newcommand{\precise}{\ensuremath{\mtt{precise}}}
\newcommand{\imprecise}{\ensuremath{\mtt{imprecise}}}
\newcommand{\emuldvtext}{EmulT}
\newcommand{\uvaltext}{BtT}
\newcommand{\uval}[1]{\src{\uvaltext^{\ftoitext}_{#1}}}
\newcommand{\emuldv}[1]{\src{\emuldvtext^{\ftoitext}_{#1}}}
\newcommand{\psd}[1]{\src{\hat{#1}}}
\newcommand{\srctotrgty}[1]{\fun{fxToIs}{#1}}
\newcommand{\emtotau}[1]{\fun{repEmul^{\ftoitext}}{#1}}
\newcommand{\toemul}[2]{\fun{toEmul_{\src{#2}}}{#1}}
\newcommand{\uvalfe}[1]{\src{\uvaltext^{\ftoetext}_{#1}}}
\newcommand{\emuldvfe}[1]{\src{\emuldvtext^{\ftoetext}_{#1}}}
\newcommand{\psdfe}[1]{\src{\hat{#1}}}
\newcommand{\srctoothty}[1]{\fun{fxToEq}{#1}}
\newcommand{\emtotaufe}[1]{\fun{repEmul^{\ftoetext}}{#1}}
\newcommand{\toemulfe}[2]{\fun{toEmul^{\ftoetext}_{\src{#2}}}{#1}}
\newcommand{\uvalic}[1]{\trg{\uvaltext^{\itoetext}_{#1}}}
\newcommand{\emuldvic}[1]{\ensuremath{\trg{\emuldvtext^{\itoetext}_{#1}}}}
\newcommand{\psdic}[1]{\trg{\hat{#1}}}
\newcommand{\srctotrgtyic}[1]{\fun{isToEq}{#1}}
\newcommand{\emtotauic}[1]{\fun{repEmul^{\itoetext}}{#1}}
\newcommand{\ftoetext}[0]{fE}
\newcommand{\ftoitext}[0]{fI}
\newcommand{\itoetext}[0]{IE}
\newcommand{\myomega}[0]{omega}
\newcommand{\unk}[0]{unk}
\newcommand{\emulate}[2]{\src{emulate^{\ftoitext}_{#1}\left(\trg{#2}\right)}}
\newcommand{\upgrade}[1]{\src{upgrade^{\ftoitext}_{#1}~}}
\newcommand{\downgrade}[1]{\src{downgrade^{\ftoitext}_{#1}~}}
\newcommand{\casetag}[1]{\src{case^{\ftoitext}_{#1}~}}
\newcommand{\inject}[1]{\src{inject^{\ftoitext}_{#1}~}}
\newcommand{\extract}[1]{\src{extract^{\ftoitext}_{#1}~}}
\newcommand{\indn}[1]{\src{in\text{-}dn^{\ftoitext}_{#1}~}}
\newcommand{\caseup}[1]{\src{case\text{-}up^{\ftoitext}_{#1}~}}
\newcommand{\upgradepar}[2]{\src{upgrade^{\ftoitext}_{#1}\left(\src{#2}\right)}}
\newcommand{\downgradepar}[2]{\src{downgrade^{\ftoitext}_{#1}\left(\src{#2}\right)}}
\newcommand{\casetagpar}[2]{\src{case^{\ftoitext}_{#1}\left(\src{#2}\right)}}
\newcommand{\injectpar}[2]{\src{inject^{\ftoitext}_{#1}\left(\src{#2}\right)}}
\newcommand{\indnfe}[1]{\src{in\text{-}dn^{\ftoetext}_{#1}~}}
\newcommand{\caseupfe}[1]{\src{case\text{-}up^{\ftoetext}_{#1}~}}
\newcommand{\casetagfe}[1]{\src{case^{\ftoetext}_{#1}~}}
\newcommand{\injectef}[1]{\src{inject^{\ftoetext}_{#1}~}}
\newcommand{\extractef}[1]{\src{extract^{\ftoetext}_{#1}~}}
\newcommand{\emulateef}[2]{\src{emulate^{\ftoetext}_{#1}\left(\oth{#2}\right)}}
\newcommand{\injectfe}[1]{\injectef{#1}}
\newcommand{\extractfe}[1]{\extractef{#1}}
\newcommand{\upgradefe}[1]{\src{upgrade^{\ftoetext}_{#1}~}}
\newcommand{\downgradefe}[1]{\src{downgrade^{\ftoetext}_{#1}~}}
\newcommand{\injectic}[1]{\trg{inject^{\itoetext}_{#1}~}}
\newcommand{\extractic}[1]{\trg{extract^{\itoetext}_{#1}~}}
\newcommand{\emulateic}[2]{\trg{emulate^{\itoetext}_{#1}\left(\oth{#2}\right)}}
\newcommand{\casetagic}[1]{\trg{case^{\itoetext}_{#1}~}}
\newcommand{\indnic}[1]{\trg{in\text{-}dn^{\itoetext}_{#1}~}}
\newcommand{\caseupic}[1]{\trg{case\text{-}up^{\itoetext}_{#1}~}}
\newcommand{\upgradeic}[1]{\trg{upgrade^{\itoetext}_{#1}~}}
\newcommand{\downgradeic}[1]{\trg{downgrade^{\itoetext}_{#1}~}}
\newcommand{\fa}[0]{\mi{FA}\xspace}
\DeclareMathOperator\tyeqind{\bl{\bumpeq}}
\DeclareMathOperator\ntyeqind{\bl{\not\bumpeq}}
\newcommand{\tyeqindbin}[3]{\eqind{#2} \tyeqind \eqind{#3}}
\DeclareMathOperator\tyeq{\bl{\circeq}}
\newcommand{\tyeqbin}[2]{\ensuremath{\oth{#1} \tyeq \oth{#2}}}
\newcommand*{\lMuCount}{\fun{lMuCount}}
\newcommand{\Efrac}[2]{%
  \mathchoice
    {\ooalign{%
      $\genfrac{}{}{1.2pt}0{#1}{#2}$\cr%
      $\color{white}\genfrac{}{}{1pt}0{\phantom{#1}}{\phantom{#2}}$}}%
    {\ooalign{%
      $\genfrac{}{}{1.2pt}1{#1}{#2}$\cr%
      $\color{white}\genfrac{}{}{1pt}1{\phantom{#1}}{\phantom{#2}}$}}%
    {\ooalign{%
      $\genfrac{}{}{1.2pt}2{#1}{#2}$\cr%
      $\color{white}\genfrac{}{}{1pt}2{\phantom{#1}}{\phantom{#2}}$}}%
    {\ooalign{%
      $\genfrac{}{}{1.2pt}3{#1}{#2}$\cr%
      $\color{white}\genfrac{}{}{1pt}3{\phantom{#1}}{\phantom{#2}}$}}%
}
\newcommand{\cotyperule}[4]{
	\cotyperuleInt{#1}{#2}{#3}{#4}{\textsf{\scriptsize ({#1})}}
}
\newcommand{\cotyperuleInt}[5]{%
\def\thetyperule{#1}%
	\refstepcounter{typerule}%
	\label{tr:#4}%
	\ensuremath{%
		\begin{array}{c}
		#5
		\\
		\Efrac{%
			\begin{aligned}%
				#2
			\end{aligned}
		}{%
			\begin{aligned}%
				#3
			\end{aligned}
		}%
		\end{array}
	}
}
\newcommand{\size}[1]{\fun{size}{#1}}
\newcommand{\shrinksym}[0]{\lightning}
\newcommand{\shrinks}[1]{\ensuremath{\src{\shrinksym_{\!#1}}}\xspace}
\newcommand{\shrinkt}[1]{\ensuremath{\trgb{\shrinksym_{\!#1}}}\xspace}
\newcommand{\shrinkc}[1]{\ensuremath{\com{\shrinksym_{\!#1}}}\xspace}
\newcommand{\boundedtermsym}[0]{\Downarrow}
\newcommand{\bterms}[1]{\ensuremath{\src{\boundedtermsym_{#1}}}\xspace}
\newcommand{\btermt}[1]{\ensuremath{\trgb{\boundedtermsym_{#1}}}\xspace}
\newcommand{\btermc}[1]{\ensuremath{\com{\boundedtermsym_{#1}}}\xspace}
\newcommand{\newterm}[0]{\text{size-bound termination}\xspace}
\newcommand{\newterms}[0]{\text{size-bound terminates}\xspace}
\newcommand{\NewTerm}[0]{\text{Size-Bound Termination}\xspace}
\newcommand{\tprobs}[0]{\src{t}\xspace}
\newcommand{\tprobt}[0]{\iso{t}\xspace} %
 \newcommand{\iso}[1]{\trg{#1}}
\newcommand{\isob}[1]{\trgb{#1}}
\newcommand{\equi}[1]{\oth{#1}}
\renewcommand{\lMuCount}[1]{\fun{lmc}{#1}}
\colorlet{NAVYBLUE}{NavyBlue}
\crefname{lem}{Lemma}{Lemmas}
\crefname{thm}{Theorem}{Theorems}
\crefname{defi}{Definition}{Definitions}
\crefname{exa}{Example}{Examples}
\begin{document}

\title[On the Semantic Expressiveness of Iso- and Equi-Recursive Type]{On the Semantic Expressiveness \\ of Iso- and Equi-Recursive Types}         %
\titlecomment{{\lsuper*}extended version of the paper in POPL'21, now including a fixed and mechanized proof. More details are in \Cref{sec:comparison}.}

\author[D.~Devriese]{Dominique Devriese\lmcsorcid{0000-0002-3862-6856}}[a]	%
\address{DistriNet, KU~Leuven, Belgium}	%
\email{dominique.devriese@kuleuven.be}  %

\author[E.~Martin]{Eric M. Martin\lmcsorcid{0000-0002-5739-2186}}[b]	%
\address{ Jane Street Capital }	%
\email{emartin@janestreet.com}  %

\author[M.~Patrignani]{Marco Patrignani\lmcsorcid{0000-0003-3411-9678}}[c]  %
\address{University of Trento}  %
\email{marco.patrignani@unitn.it}  %

\begin{abstract}
Recursive types extend the simply-typed lambda calculus (STLC) with the additional expressive power to enable diverging computation and to encode recursive data-types (e.g., lists).
Two formulations of recursive types exist: iso-recursive and equi-recursive.
The relative advantages of iso- and equi-recursion are well-studied when it comes to their impact on type-inference.
However, the relative semantic expressiveness of the two formulations remains unclear so far.
This paper studies the semantic expressiveness of STLC with iso- and equi-recursive types, proving that these formulations are \emph{equally expressive}.
In fact, we prove that they are both as expressive as STLC with only term-level recursion.
We phrase these equi-expressiveness results in terms of full abstraction of three canonical compilers between these three languages (STLC with iso-, with equi-recursive types and with term-level recursion).
Our choice of languages allows us to study expressiveness when interacting over both a simply-typed and a recursively-typed interface.
The three proofs all rely on a typed version of a proof technique called approximate backtranslation.

Together, our results show that there is no difference in semantic expressiveness between STLCs with iso- and equi-recursive types.
In this paper, we focus on a simply-typed setting but we believe our results scale to more powerful type systems like System~F.
\end{abstract}

\maketitle

\begin{center}
  \small\it 
  To present notions more clearly, this paper uses syntax highlighting accessible to both colourblind and black \& white readers \citep{patrignani2020use}.
  For a better experience, please print or view this in colour.%
  \\
    Specifically, we use a \src{blue}, \src{sans\text{-}serif} font for \src{STLC} with the \src{fix} operator, a \iso{red}, \iso{bold} font for \iso{STLC} with \iso{iso}-\iso{recursive} types, and \equi{pink}, \equi{italics} font for \equi{STLC} with \equi{coinductive} \equi{equi}-\equi{recursive} types.
    Elements common to all languages are typeset in a \com{\commoncol}, \com{italic} font (to avoid repetition).
\end{center}

\section{Introduction}\label{sec:intro}
Recursive types were first proposed by \citet{morris_lambda-calculus_1968} as a way to recover divergence from the untyped lambda calculus in a simply-typed lambda calculus.
They also enable the definition of recursive data-types such as lists, trees, and Lisp S-expressions in typed languages.

Morris' original formulation was equi-recursive: a type \equi{\mat} was regarded as an infinite type and considered equal to its unfolding \equi{\tau\subo{\mat}{\alpha}}.
Subsequent formulations (e.g., \citet{syn-con-rec-ty}) use different type equality relations.
In this paper we will work with \stlcem{}: a standard simply-typed lambda calculus with coinductive equi-recursive types (e.g. \citep{Cai:2016:SFE:2914770.2837660}).

Years after Morris' formulation of recursive types, a different one appeared (e.g. \citep{harper_type_1993,gordon_edinburgh_1979}), where the two types are not considered equal, but \emph{isomorphic}: values can be converted from \iso{\matt} to \iso{\tat\subt{\matt}{\alpt}} and back using explicit \iso{\fold{}} and \iso{\unfold{}} annotations in terms.
These annotations are used to guide typechecking, but they also have a significance at runtime: an explicit reduction step is needed to cancel them out: \iso{\unfold{\matt}~(\fold{\matt}~v) \red v}.
In this paper, we work with a standard iso-recursive calculus \stlcim{}. 

The relation between these two formulations has been studied by \citet{syn-con-rec-ty} and \citet{prta} (the latter focusing on positive recursive types).
Specifically, they show that any term typable in one formulation can also be typed in the other, possibly by adding extra \iso{\unfold{}} or \iso{\fold{}} annotations.
Additionally, \citeauthor{syn-con-rec-ty} prove that for types considered equal in the equi-recursive system, there exist coercion functions in the iso-recursive formulation that are mutually inverse in the (axiomatised) program logic.
The isomorphism properties are proved in a logic for the iso-recursive language (which is only conjectured to be sound), and the authors do not consider an operational semantics.

The relative semantic expressiveness of the two formulations, however, has remained yet unexplored.
In principle, executions that are converging in the equi-recursive language may become diverging in the iso-recursive setting because of the extra fold-unfold reductions.
Because of this, it is unclear whether the two formulations of recursive types produce equally expressive languages.

Concretely, in this paper, we study the expressive power of \stlcim{} and \stlcem{} when interacting over two kinds of language interfaces.
The first is characterized by simply-typed lambda calculus types, which do not mention recursive types themselves.
We consider implementations of this interface in \stlcf{}, a simply typed lambda calculus with term-level recursion in the form of a primitive fixpoint operator.
We embed these implementations into both \stlcim{} and \stlcem{} using two so-called canonical compilers, i.e., compilers that map any construct of the source language into the same -- or the closest -- construct of the target.
We show that if two \stlcf{} terms cannot be distinguished by \stlcf{} contexts, then the same is true for both \stlcim{} and \stlcem{} contexts, i.e., the compiler is fully abstract.
Additionally, we consider STLC types that contain recursive types themselves as interfaces.
We take implementations of them in \stlcim and a canonical compiler for them into \stlcem.
We show that this compiler is also fully abstract.
These three fully-abstract compilation results establish the equi-expressiveness of \stlcim{}, \stlcem{}, and \stlcf{} contexts, interacting over simply-typed interfaces with and without recursive types.
Moreover, these three fully-abstract compilation results have been completely formalised in the Coq proof assistant.

Proving full abstraction for a compiler is notoriously hard, particularly in the preservation direction, i.e., showing that equivalent source terms get compiled to equivalent target terms.
Informally, it requires showing that any behaviour (e.g., termination) of target program contexts can be replicated by source program contexts.
Demonstrating such a claim is particularly complicated in our setting since \stlcem contexts have coinductive (and thus infinite) type equality derivations.
To be able to prove fully-abstract compilation, we adopt the approximate backtranslation proof technique of \citet{popl-journal}.
This technique relies on two key components: a cross-language approximation relation between source and target terms (and source and target program contexts) and a backtranslation function from target to source program contexts.
Intuitively, the approximation relation is used to tell when a source and a target term (or program context) equi-terminate; we use step-indexed logical relations to define this and rely on the step as the measure for the approximation.
The backtranslation is a function that takes a target program context and produces a source program context that approximates the target one.
This is particularly appropriate for backtranslating \stlcem program contexts, since we show that it is sufficient to approximate their coinductive derivations instead of replicating them precisely.

We construct three backtranslations: from \stlcim and \stlcem contexts respectively into \stlcf ones and from \stlcem contexts into \stlcim ones.
We do so by defining a family of types for backtranslated terms that is not just indexed by the approximation level but also by the target type of the backtranslated term.
To the best of our knowledge, this is a novel approach, since all existing work relies on a single type for backtranslated terms~\citep{popl-journal,max-embed}.

For proving the correctness of these backtranslations, we define a step-indexed logical relation to express when compiled and backtranslated terms approximate each other.
While the logical relation is largely the same for the different compilers and backtranslations, differences in the language semantics impose that we treat backtranslated \stlcim terms differently from \stlcem.
Like previous work~\citep{popl-journal,max-embed}, we use a step-indexed logical relation that relates terms (and values) across languages so long as they equi-terminate.
In previous work, the step-indexed logical relation approximates (or, relates) terms (and values) up to an index that is related to the amount of steps that are required for termination.
In this work, we change that approximation to also consider an additional bound on the size of terms encountered during termination.
To provide this new bound, we introduce a novel notion of termination, called \newterm, and state that terms are related when \newterm of one term implies termination of the other.
The need for an additional bound (and thus for \newterm) arose while mechanising these proofs in the Coq proof assistant, as this led to the discovery of a bug in the previous proofs (as we describe in \Cref{sec:comparison}).
The additional bound lets us reason explicitly about the finiteness of values encountered during reductions, and it lets us go through those cases that broke certain proofs (as we describe in detail in \Cref{ex:need-newterm} in \Cref{sec:injext}).

\subsection{Using Fully Abstract Compilation to Compare Language Expressiveness}\label{sec:fac-for-langexp}
To study language expressiveness meaningfully, it is important to phrase the question properly.
If we just consider programs that receive a natural number and return a boolean, then both languages will allow expressing the same set of algorithms, simply by their Turing completeness~\citep{Mitchell-expr-pow}.

The question of comparing language expressiveness is more interesting if we consider programs that interact over a richer interface.
Consider, for example, a term \com{t} from the simply-typed lambda calculus embedded into either the \stlcim or \stlcem calculus.  
An interesting question is whether there are ways in which \stlcem{} contexts (i.e., larger programs) can interact with \com{t} that contexts in \stlcim{} cannot.
The use of contexts in different languages interacting with a common term as a way of measuring language expressiveness has a long history~\citep{Felleisen-expr-pow,Mitchell-expr-pow}, mostly in the study of process calculi~\citep{exprPA}.
In this setting, equal expressiveness of programming languages is sometimes argued for by proving the existence of a fully-abstract compiler from one language to the other \citep{gorla-fa}.
Such a compiler translates contextually-equivalent terms in a source language (indicated as \langS) to contextually-equivalent terms in a target language (indicated as \langT)~\citep{DBLP:conf/icalp/Abadi98,scsurvey}.
That is, if contexts cannot distinguish two terms in \langS, they will also not be able to distinguish them after the compilation to \langT.

Let us now argue why the choice of fully-abstract compilation as a measure of the relative expressiveness of programming languages is the right one in our setting.
After all, several researchers have pointed out that the mere existence of a fully-abstract compilation is not in itself meaningful and only compilers that are sufficiently well-behaved should be considered~\citep{exprPA,gorla-fa}. %
The reason for this is that one can build a degenerate fully-abstract compiler that shows both languages having an equal amount (cardinality) of equivalence classes for terms.
This would indicate that the languages are equally-expressive, but unfortunately this is also trivial to satisfy \citep{exprPA}.
These degenerate examples, as such, clarify the necessity for well-behavedness of the compiler.
However, we have not found a clear argument explaining why well-behaved fully-abstract compilation implies equi-expressiveness of languages, so here it is. %

In our opinion (and we believe this point has not yet been made in the literature), the issue is that fully-abstract compilation results measure language expressiveness \emph{not} by verifying that they can express the same \emph{terms}, but that they can express the same \emph{contexts}.
Defining when a context in \langS{} is the same as a context in \langT{} is hard, and therefore fully-abstract compilation simply requires that \langT{} contexts can express the interaction of \langS contexts with any term that is shared between both languages.
The role of the compiler, the translation from \langS{} to \langT{}, is simply to obtain this common term against which expressiveness of contexts in both languages can be measured.

In other words, expressiveness of a programming language is only meaningful with respect to a certain interface and the role of the compiler is to map \langS implementations of this interface to \langT implementations.
In a sense, the \langS implementation of the interface should be seen as an expressiveness challenge for \langS contexts and the compiler translates it to the corresponding challenge in \langT.
As such, the compiler should be seen as part of the definition of equi-expressiveness and the well-behavedness requirement is there to make sure the \langS challenge is translated to ``the same'' challenge in \langT.
Fortunately, in this work we only rely on canonical compilers that provide the most intuitive translation for a term in our source languages into ``the same'' term in our target ones.
Thus, we believe that in our setting using fully-abstract compilation is the right tool to measure the relative expressiveness of programming languages.

\subsection{Contributions and Outline}
To summarize, the key contribution of this paper is the proof that iso- and coinductive equi-recursive typing are equally expressive.
This result is achieved via the following contributions (depicted in \Cref{fig:dia}).
\begin{itemize}
  \item An adaptation of the approximate backtranslation proof technique to operate on families of backtranslation types that are type-indexed on target types
  \item An adaptation of the proof technique to be more precise when relating terms cross-language by relying on the notion of \newterm;
  \item A proof that the compiler from \stlcf to \stlcim is fully abstract with an approximate backtranslation;
  \item A proof that the compiler from \stlcim to \stlcem is fully abstract with an approximate backtranslation;
  \item A proof that the compiler from \stlcf to \stlcem is fully abstract with an approximate backtranslation;
  \item The mechanisation of these three proofs in the Coq proof assistant.
\end{itemize}

\begin{figure}[!htb]
  \centering
  \tikzpic{
    \node[](lam-fix){\stlcf};
    \node[below left =2 and 2 of lam-fix](lam-mu-i){\stlcim};
    \node[below right =2 and 2 of lam-fix](lam-mu-e){\stlcem};

    \draw[->,thick] ([yshift=.3em]lam-mu-i.0) to node[midway, align=left, font = \footnotesize,above](){ \compstlcie{\cdot} } ([yshift=.3em]lam-mu-e.180);
    \draw[->,thick,dotted] ([yshift=-.3em]lam-mu-e.180) to node[midway, align=left, font = \footnotesize,below](){ \backtrstlcei{\cdot} } ([yshift=-.3em]lam-mu-i.0);

    \draw[->,thick] (lam-fix.180) to 
      node[midway, align=left, font = \footnotesize,above,sloped](){ \compstlcfi{\cdot} } (lam-mu-i.90);
    \draw[->,thick,dotted] (lam-mu-i.north east) to node[midway, align=left, font = \footnotesize,below,sloped](){ \backtrstlcif{\cdot} } (lam-fix.south west);

    \draw[->,thick] (lam-fix.0) to 
      node[midway, align=left, font = \footnotesize,above,sloped](){ \compstlcfe{\cdot} } (lam-mu-e.90);
    \draw[->,thick,dotted] (lam-mu-e.north west) to node[midway, align=left, font = \footnotesize,below,sloped](){ \backtrstlcef{\cdot} } (lam-fix.south east);

  }
\caption{
  Our contributions, visually.
  Full arrows indicate canonical embeddings \comp{\cdot} while dotted ones are (approximate) backtranslations \backtr{\cdot}.
  Translations' superscripts indicate input languages while their subscripts indicate output languages.
  \label{fig:dia}
}
\end{figure}
Note that technically, we can derive the compiler and backtranslation between \stlcf and \stlcem by composing the compilers and backtranslations through \stlcim.
We present this result as a stand-alone one because it offers insights on proofs of fully-abstract compilation for languages with coinductive notions.

The remainder of this paper is organised as follows.
We first formalise the languages we use (\stlcf, \stlcim and \stlcem) as well as the cross-language logical relations which express when two terms in those languages are semantically equivalent (\Cref{sec:langs}).
Next, we present fully-abstract compilation and describe our approximate backtranslation proof technique in detail (\Cref{sec:fac}).
Then we define the three compilers (from \stlcf to \stlcim, from \stlcf to \stlcem and from \stlcim to \stlcem) and prove that they are fully abstract using three approximate backtranslations (\Cref{sec:comp}).
These compilers and their fully-abstract compilation proofs are all formalised in Coq, so we also present the most useful insights into this formalisation (\Cref{sec:coq}).
After a discussion of the presented results (\Cref{sec:disc}), we present related work (\Cref{sec:rw}) and conclude (\Cref{sec:conc}).

For the sake of simplicity we omit some elements of the formalisation such as auxiliary lemmas and proofs.
The Coq mechanisation of this work is available at:
\begin{center}
	\url{https://github.com/dominiquedevriese/fixismu-coq}
\end{center}

\subsection{Comparison with the Previous Version}\label{sec:comparison}
This work extends the work of \citet{isoequi-popl} presented at POPL'21 in the following way:
\begin{itemize}
  \item We fix a bug in the original proof that broke \Cref{thm:inj-ext-sem-pres}.
  The bug is addressed by making the approximate logical relation rely on an additional bound on the size of terms encountered during reductions, as mentioned before.
  This, in turn changes the observation relation of the logical relation, i.e., the part that tells when two terms are related.
  Previously (as well as in related work~\citep{Devriese:2016:FCA:2837614.2837618}), a term \src{t} was related to another one \trg{t} at level \com{n} if termination of \src{t} in at most \com{n} steps implied termination of \trg{t} in some steps (and vice versa).
  Here, we introduce a new notion of bounded termination (called \newterm) that a term fulfils for some steps \com{m} if the term terminates in at most \com{m} steps and (roughly) terms encountered during this reduction have at most size \com{m} (in terms of the depth of the AST of the term).
  We rely on \newterm in the observation relation and state that a term \src{t} is related to another \trg{t} at level \com{j} if \newterm of \src{t} in at most \com{j} steps implies termination of \trg{t} in some steps (and vice versa).
  Intuitively, in the previous formulation the step index \com{n} imposes a bound on the amount of steps required for termination.
  Here instead, the step index \com{j} imposes a bound both on the steps required for termination and on the size of terms encountered during such termination.

  We explain in more detail the problem with the old formulation and how this new idea lets \Cref{thm:inj-ext-sem-pres} go through in \Cref{sec:injext}, where we discuss \Cref{ex:need-newterm}.

  \item We mechanise the three fully-abstract compilation proofs in the Coq proof assistant and report on the formalisation in \Cref{sec:coq}.
\end{itemize}

\section{Languages and Cross-Language Logical Relations}\label{sec:langs}
This section presents the simply-typed lambda calculus (\com{\lambda}) and its extensions with a typed fixpoint operator (\stlcf), with iso-recursive types (\stlcim) and with coinductive equi-recursive types (\stlcem).
We first define the syntax (\Cref{sec:syn}), then the static semantics (\Cref{sec:typ}) and then the operational semantics of these languages (\Cref{sec:sem}).
Finally, this section presents the cross-language logical relations used to reason about the expressiveness of terms in different languages (\Cref{sec:logrel-main}).
Note that these logical relations are partial, the key addition needed to attain fully-abstract compilation is presented in \Cref{sec:rel-bt} only after said addition is justified.

\subsection{Syntax}\label{sec:syn}
All languages include standard terms (\com{t}) and values (\com{v}) from the simply-typed lambda calculus: lambda abstractions, applications, pairs, projections, tagged unions, case destructors, booleans, branching, unit and sequencing.
Additionally, \stlcf has a \src{fix} operator providing general recursion, while \stlcim has \iso{\fold{}} and \iso{\unfold{}} annotations; \stlcem requires no additional syntactic construct.

Regarding types, both \stlcim and \stlcem add recursive types according to the same syntax.
In \stlcim and \stlcem, recursive types are syntactically constrained to be \emph{contractive}.
Note however that for simplicity of presentation we will indicate a type as \com{\tau} and simply report the contractiveness constraints when meaningful.
A recursive type \mat\ is contractive if, the use of the recursion variable \com{\alpha} in \com{\tau} occurs under a type constructor such as \com{\to} or \com{\times}~\citep{10.1145/800017.800528}.
Non-contractive types (e.g., \com{\matgen{\alpha}{\alpha}}) are not inhabited by any value, so it is reasonable to elide them. 
Moreover, they do not have an infinite unfolding and (without restrictions on the type equality relation) can be proven equivalent to any other type~\citep{10.1007/978-3-642-39212-2_28}, which is undesirable.
All languages have evaluation contexts (\evalctxc), which indicate where the next reduction will happen, and program contexts (\ctxc), which are larger programs to link terms with.
\begin{gather*}\small
  \begin{aligned}
  \com{\tau},\com{\sigma} \bnfdef 
    &\ 
    \Unitc \mid \Boolc \mid \com{\tau^s\to\tau^s} \mid \com{\tau^s\times\tau^s} \mid \com{\tau^s\uplus\tau^s} 
    \mid \iso{\matt} %
    \mid \equi{\mat} %
  \\
  \com{\tau^s} \bnfdef 
    &\
    \iso{\alpt} \mid \equi{\alpha} \mid \com{\tau}
  \\
  \com{\Gamma} \bnfdef
    &\ 
    \come \mid \com{\Gamma},\com{x}:\com{\tau}
  \\
  \com{v} \bnfdef
    &\ 
    \unitc \mid \truec \mid \falsec \mid \com{\lam{x:\tau}{t}} \mid \com{\pair{v,v}} \mid \com{\inl{v}} \mid\com{\inr{v}}
    \mid \iso{\fold{\matt}~v}
  \\
  \com{t} \bnfdef
    &\ 
    \unitc \mid \truec \mid \falsec \mid \com{\lam{ x:\tau}{ t}} \mid \com{x} \mid \com{t~t} \mid \com{\projone{t}} \mid \com{\projtwo{t}} \mid \com{\pair{t,t}} 
  \\
  \mid
    &\
    \com{\caseof{t}{t}{t}} 
    \mid
    \com{\inl{t}} \mid \com{\inr{t}} \mid
    \com{\ifte{t}{t}{t}} \mid \com{t;t} 
  \\
  \mid
    &\
    \src{\fix{\tau\to\tau} t} \mid \iso{\fold{\matt}~t} \mid \iso{\unfold{\matt}~t}
  \\
  \evalctxc \bnfdef
    &\ 
    \com{\hole{\cdot}} \mid \evalctxc~\com{t} \mid \com{v}~\evalctxc \mid \com{\projone{\evalctxc}} \mid \com{\projtwo{\evalctxc}} \mid \com{\pair{\evalctxc,t}} \mid \com{\pair{v,\evalctxc}}
    \mid \com{\caseof{\evalctxc}{t_1}{t_2}}
  \\
  \mid
    &\ 
    \com{\inl{\evalctxc}} \mid \com{\inr{\evalctxc}} \mid \com{\evalctxc;t} \mid \com{\ifte{\evalctxc}{t}{t}}
    \mid \src{\fix{\tau\to\tau} \evalctx} 
    \mid \iso{\fold{\matt}~\evalctxt} \mid \iso{\unfold{\matt}~\evalctxt}
  \\
  \ctxc \bnfdef
    &\ 
    \com{\hole{\cdot}} \mid \com{\lam{x:\tau}{\ctxc}} \mid \com{\ctxc~t} \mid \com{t~\ctxc} \mid \com{\projone{\ctxc}} \mid \com{\projtwo{\ctxc}} \mid \com{\pair{\ctxc,t}} \mid \com{\pair{t,\ctxc}} \mid 
    \com{\caseof{\ctxc}{t}{t}}
  \\
  \mid
    &\ 
    \com{\caseof{t}{\ctxc}{t}} 
    \mid \com{\caseof{t}{t}{\ctxc}}
  \\
  \mid
    &\
    \com{\inl{\ctxc}} \mid \com{\inr{\ctxc}} \mid \com{\ctxc;t} \mid \com{t;\ctxc}
  \mid
    \com{\ifte{\ctxc}{t}{t}} \mid \com{\ifte{t}{\ctxc}{t}} 
  \\
  \mid
    &\
    \com{\ifte{t}{t}{\ctxc}} \mid \src{\fix{\tau\to\tau} \ctx}
  \mid
  \iso{\fold{\matt}~\ctx} \mid \iso{\unfold{\matt}~\ctx}
  \end{aligned}
\end{gather*}

As mentioned in \Cref{sec:intro}, we need a measure to define \newterm as the new logical relation requires.
The measure we rely on is the size of a term \com{t}, which we calculate via function $\size{\cdot} : \com{t} \to n\in\mb{N}$.
Intuitively, the size of a measure counts the number of nodes in the term's AST, ignoring the bodies of lambdas.
As a result, apart from bodies of lambdas, any sub-term \com{t'} has size smaller than the super-term \com{t} that contains \com{t'}.
\begin{gather*}
\begin{aligned}
  \size{ \unitc } 
    &=
      1
      &
      \size{ \truec }
        &=
        1
          &
          \size{ \falsec}
            &=
            1
\end{aligned}
\\
\begin{aligned}
  \size{ \com{x}}
    &=
    1
      &
      \size{ \lam{x:\tau}{t}}
        &= 1
  \\
          \size{t~t'}
            &=
            \size{t} + \size{t'} + 1
  &
    \size{\projone{t}}
      &=
      \size{t}+1
  \\
        \size{\projtwo{t}}
          &=
          \size{t}+1
            &
            \size{\pair{t,t'}}
              &=
              \size{t}+\size{t'}\!+\!1
  \\
  \size{\inl{t}}
    &=
    \size{t}+1
      &
      \size{\inr{t}}
        &=
        \size{t}+1
  \\
            \size{{t;t'}}
              &=
              \size{t}+\size{t'} + 1
              &
  \size{\src{\fix{\tau\to\tau} t}}
    &=
    \size{\src{t}}+1
  \\
      \size{\iso{\fold{\matt}~t}}
        &=
        \size{\iso{t}}+1
        &
        \hspace*{-0.5em}\size{\iso{\unfold{\matt}~t}}
          &=
          \size{\iso{t}}+1
\end{aligned}
\\
\begin{aligned}
  \size{\com{\caseof{t}{t'}{t''}}}
    &=\
    \size{t} + \size{t'} + \size{t''} + 1
  \\
  \size{\ifte{t}{t'}{t''}}
    &=\
    \size{t} + \size{t'} + \size{t''} + 1
\end{aligned}
\end{gather*}

\subsection{Static Semantics}\label{sec:typ}
This section presents the (fairly standard) static semantics of our languages, we delay discussing alternative formulations of equi-recursive types to \Cref{sec:rw}.
The static semantics for terms follows the canonical judgement $\com{\Gamma}\vdash\com{t}:\com{\tau}$, which attributes type \com{\tau} to term \com{t} under environment \com{\Gamma} and occasionally relies on function $\ftv{\com{\tau}}$, which returns the free type variables of \com{\tau}.
The only difference in the typing rules regards \iso{\fold{}}/\iso{\unfold{}} terms (\Cref{tr:t-fold,tr:t-unfold}) and the introduction of the type equality ($\tyeq$ in \Cref{tr:t-eq}).

\pagebreak
\begin{center}
\mytoprule{\com{\Gamma}\vdash \com{t}:\com{\tau}}

  \typerule{Type-var}{
    \com{x:\tau}\in\com{\Gamma}
  }{
    \com{\Gamma}\vdash \com{x}:\com{\tau}
  }{t-var}
  \typerule{Type-unit}{}{
    \com{\Gamma}\vdash \unitc :\com{\Unitc}
  }{t-unit}
  \typerule{Type-true}{}{
    \com{\Gamma}\vdash \truec :\com{\Boolc}
  }{t-true}
  \typerule{Type-false}{}{
    \com{\Gamma}\vdash \falsec :\com{\Boolc}
  }{t-false}

  \medskip

  \typerule{Type-p1}{
    \com{\Gamma}\vdash \com{t}:\com{\tau\times\tau'}
  }{
    \com{\Gamma}\vdash \com{\projone{t}}:\com{\tau}
  }{t-p1}
  \typerule{Type-p2}{
    \com{\Gamma}\vdash \com{t}:\com{\tau'\times\tau}
  }{
    \com{\Gamma}\vdash \com{\projtwo{t}}:\com{\tau}
  }{t-p2}
  \typerule{Type-inl}{
    \com{\Gamma}\vdash \com{t}:\com{\tau}
  }{
    \com{\Gamma}\vdash \com{\inl{t}}:\com{\tau\uplus\tau'}
  }{t-inl}
  \typerule{Type-inr}{
    \com{\Gamma}\vdash \com{t}:\com{\tau'}
  }{
    \com{\Gamma}\vdash \com{\inr{t}}:\com{\tau\uplus\tau'}
  }{t-inr}

\medskip

  \typerule{Type-case}{
    \com{\Gamma}\vdash \com{t}:\com{\tau'\uplus\tau''}
    \\
    \com{\Gamma,x_1:\tau'}\vdash \com{t'}:\com{\tau}
    &
    \com{\Gamma,x_2:\tau''}\vdash \com{t''}:\com{\tau}
  }{
    \com{\Gamma}\vdash \com{\caseof{t}{t'}{t''}}:\com{\tau}
  }{t-case}
  \typerule{Type-if}{
    \com{\Gamma}\vdash \com{t}:\com{\Boolc}
    \\
    \com{\Gamma}\vdash \com{t'}:\com{\tau}
    &
    \com{\Gamma}\vdash \com{t''}:\com{\tau}
  }{
    \com{\Gamma}\vdash \com{\ifte{t}{t'}{t''}}:\com{\tau}
  }{t-if}

  \medskip

  \typerule{Type-seq}{
    \com{\Gamma}\vdash \com{t}:\com{\Unitc}
    \\
    \com{\Gamma}\vdash \com{t'}:\com{\tau}
  }{
    \com{\Gamma}\vdash \com{t;t'}:\com{\tau}
  }{t-seq}
  \typerule{Type-lam}{
    \com{\Gamma,x:\tau}\vdash \com{t}:\com{\tau'}
    \\
    \ftv{\com{\tau}} = \come
  }{
    \com{\Gamma}\vdash \com{\lam{x:\tau}{t}}:\com{\tau\to\tau'}
  }{t-lam}
  \typerule{Type-app}{
    \com{\Gamma}\vdash \com{t}:\com{\tau'\to\tau}
    \\
    \com{\Gamma}\vdash \com{t'}:\com{\tau'}
  }{
    \com{\Gamma}\vdash \com{t~t'}:\com{\tau}
  }{t-app}

  \medskip

  \typerule{Type-pair}{
    \com{\Gamma}\vdash \com{t}:\com{\tau}
    \\
    \com{\Gamma}\vdash \com{t'}:\com{\tau'}
  }{
    \com{\Gamma}\vdash \com{\pair{t,t'}}:\com{\tau\times\tau'}
  }{t-pair}
  \typerule{\stlcf-Type-fix}{
    \src{\Gamma}\vdash \src{t} : \src{(\tau_1 \to \tau_2) \to \tau_1 \to \tau_2}
  }{
    \src{\Gamma}\vdash\src{\fix{\tau_1 \to \tau_2}} \src{t} : \src{\tau_1 \to \tau_2}
  }{stlcf-fix}

  \medskip
  
  \typerule{\stlcim-Type-fold}{
    \isob{\Gamma}\vdash\iso{t}:\iso{\tat\subt{\matt}{\alpt}}
  }{
    \isob{\Gamma}\vdash\iso{\fold{\matt}~t}:\iso{\matt} 
  }{t-fold}
  \typerule{\stlcim-Type-unfold}{
    \isob{\Gamma}\vdash\iso{t}:\iso{\matt}  
  }{
    \isob{\Gamma}\vdash\iso{\unfold{\matt}~t}:\iso{\tat\subt{\matt}{\alpt}}
  }{t-unfold}
  \typerule{\stlcem-Type-eq}{
    \equi{\Gamma}\vdash\equi{t}:\equi{\tau}
        &
        \tyeqbin{\tau}{\sigma}
  }{
    \equi{\Gamma}\vdash\equi{t}:\equi{\sigma}
  }{t-eq}
  
  \smallskip

\botrule
\end{center}

Program contexts have an important role in fully-abstract compilation.
They follow the usual typing judgement ($ \com{\ctxc} \vdash \com{\Gamma},\com{\tau} \to \com{\Gamma'},\com{\tau'} $), i.e., program context \ctxc is well typed with a hole of type \com{\tau} that use free variables in \com{\Gamma}, and overall \ctxc returns a term of type \com{\tau'} and uses variables in \com{\Gamma'}.
\begin{center}
\mytoprule{\com{\ctxc} \vdash \com{\Gamma},\com{\tau} \to \com{\Gamma'},\com{\tau'}}

  \typerule{Type-Ctx-Hole}{}{
  	\vdash \cdot : \com{\Gamma},\com{\tau} \to \com{\Gamma},\com{\tau}
  }{stlcf-typectx-hole}
  \typerule{Type-Ctx-Lam}{
  	\vdash \ctxc : \com{\Gamma''},\com{\tau''} \to (\com{\Gamma},\com{x}:\com{\tau'}), \com{\tau}
  }{
  	\vdash \com{\lam{x:\tau'}{\ctxc}} : \com{\Gamma''},\com{\tau''} \to \com{\Gamma},\com{\tau' \to \tau}
  }{stlcf-typectx-lam}

  \medskip

  \typerule{Type-Ctx-Pair1}{
  	\vdash \ctxc : \com{\Gamma'},\com{\tau'} \to \com{\Gamma}, \com{\tau_1} 
  	\\
  	\com{\Gamma}\vdash \com{t_2} : \com{\tau_2}
  }{
  	\vdash \com{\pair{\ctxc,t_2}} : \com{\Gamma'},\com{\tau'}\to\com{\Gamma},\com{\tau_1\times\tau_2}
  }{stlcf-typectx-pair1}
  \typerule{Type-Ctx-Pair2}{
  	\com{\Gamma}\vdash \com{t_1} :\com{\tau_1} 
  	\\
  	\vdash \ctxc : \com{\Gamma'},\com{\tau'} \to \com{\Gamma}, \com{\tau_2}
  }{
  	\vdash \com{\pair{t_1,\ctxc}} : \com{\Gamma'},\com{\tau'}\to\com{\Gamma},\com{\tau_1\times\tau_2}
  }{stlcf-typectx-pair2}

  \medskip

  \typerule{Type-Ctx-Inl}{
  	\vdash \ctxc : \com{\Gamma''},\com{\tau''}\to \com{\Gamma} , \com{\tau}
  }{
  	\vdash \com{\inl{\ctxc}} : \com{\Gamma''},\com{\tau''}\to \com{\Gamma},\com{\tau\uplus\tau'}
  }{stlcf-typectx-inl}
  \typerule{Type-Ctx-Inr}{
  	\vdash \ctxc : \com{\Gamma''},\com{\tau''}\to \com{\Gamma} , \com{\tau}'
  }{
  	\vdash \com{\inr{\ctxc}} : \com{\Gamma''},\com{\tau''}\to \com{\Gamma},\com{\tau\uplus\tau'}
  }{stlcf-typectx-inr}

  \medskip
  
  \typerule{Type-Ctx-App1}{
  	\vdash \ctxc : \com{\Gamma'},\com{\tau'} \to \com{\Gamma},\com{\tau_1\to\tau_2} 
  	\\
  	\com{\Gamma} \vdash \com{t_2} : \com{\tau_1}
  }{
  	\vdash \com{\ctxc\ t_2} : \com{\Gamma'},\com{\tau'} \to \com{\Gamma}, \com{\tau_2}
  }{stlcf-typectx-app1}
  \typerule{Type-Ctx-App2}{
  	\com{\Gamma}\vdash \com{t_1} : \com{\tau_1\to\tau_2}
  	\\
  	\vdash \ctxc : \com{\Gamma'},\com{\tau'} \to \com{\Gamma}, \com{\tau_1}
  }{
  	\vdash \com{t_1\ \ctxc} : \com{\Gamma'},\com{\tau'} \to \com{\Gamma}, \com{\tau_2}
  }{stlcf-typectx-app2}

  \medskip
  
  \typerule{Type-Ctx-Proj1}{
  	\vdash \ctxc : \com{\Gamma'},\com{\tau'} \to \com{\Gamma}, \com{\tau_1\uplus\tau_2}
  }{
  	\vdash \com{\projone{\ctxc}} : \com{\Gamma'},\com{\tau'} \to \com{\Gamma}, \com{\tau_1}
  }{stlcf-typectx-proj1}
  \typerule{Type-Ctx-Proj2}{
  	\vdash \ctxc : \com{\Gamma'},\com{\tau'} \to \com{\Gamma}, \com{\tau_1\uplus\tau_2}
  }{
  	\vdash \com{\projtwo{\ctxc}} : \com{\Gamma'},\com{\tau'} \to \com{\Gamma}, \com{\tau_2}
  }{stlcf-typectx-proj2}

  \medskip
  
  \typerule{Type-Ctx-Case1}{
  	\vdash\ctxc:\com{\Gamma'},\com{\tau'} \to\com{\Gamma},\com{\tau_1\uplus\tau_2} 
  	\\
  	\com{\Gamma},\com{x_1}:\com{\tau_1} \vdash \com{t_1} :\com{\tau_3} 
  	&
  	\com{\Gamma},\com{x_2} :\com{\tau_2} \vdash \com{t_2} : \com{\tau_3}
  }{
  	\vdash \com{\caseof{\ctxc}{t_1}{t_2}} : \com{\Gamma'},\com{\tau'} \to \com{\Gamma}, \com{\tau_3}
  }{stlcf-typectx-case1}

  \medskip
  
  \typerule{Type-Ctx-Case2}{
  	\com{\Gamma}\vdash \com{t}:\com{\tau_1\uplus\tau_2} 
  	&
  	\vdash \ctxc :\com{\Gamma'},\com{\tau'} \to (\com{\Gamma},\com{x_1}:\com{\tau_1}),\com{\tau_3} 
  	&
  	\com{\Gamma},\com{x_2} :\com{\tau_2} \vdash \com{t_2} : \com{\tau_3}
  }{
  	\vdash \com{\caseof{t}{\ctxc}{t_2}} : \com{\Gamma'},\com{\tau'} \to \com{\Gamma}, \com{\tau_3}
  }{stlcf-typectx-case2}

  \medskip
  
  \typerule{Type-Ctx-Case3}{
  	\com{\Gamma}\vdash\com{t}:\com{\tau_1\uplus\tau_2} 
  	&
  	\com{\Gamma},\com{x_1}:\com{\tau_1} \vdash \com{t_1} :\com{\tau_3} 
  	&
  	\vdash \ctxc : \com{\Gamma'},\com{\tau'} \to (\com{\Gamma},\com{x_2} :\com{\tau_2}),\com{\tau_3}
  }{
  	\vdash \com{\caseof{t}{t_1}{\ctxc}} : \com{\Gamma'},\com{\tau'} \to \com{\Gamma}, \com{\tau_3}
  }{stlcf-typectx-case3}

  \medskip
  
  \typerule{Type-Ctx-If1}{
  	\vdash \ctxc : \com{\Gamma},\com{\tau}\to\com{\Gamma'},\Boolc
  	\\
  	\com{\Gamma'} \vdash \com{t_1} : \com{\tau'}				
  	&
  	\com{\Gamma'} \vdash \com{t_2} : \com{\tau'}
  }{
  	\vdash \com{\ifte{\ctxc}{t_1}{t_2}} : \com{\Gamma},\com{\tau}\to\com{\Gamma'},\com{\tau'}
  }{stlcf-typectx-if1}
  \typerule{Type-Ctx-If2}{
  	\com{\Gamma}\vdash\com{t}:\Boolc								
  	\\
  	\vdash \ctxc : \com{\Gamma},\com{\tau} \to\com{\Gamma'},\com{\tau'}
  	&
  	\com{\Gamma} \vdash \com{t_2} : \com{\tau'}
  }{
  	\vdash \com{\ifte{t}{\ctxc}{t_2}} : \com{\Gamma},\com{\tau}\to\com{\Gamma'},\com{\tau'}
  }{stlcf-typectx-if2}

  \medskip
  
  \typerule{Type-Ctx-If3}{
  	\com{\Gamma}\vdash\com{t}:\Boolc
  	&
  	\com{\Gamma} \vdash \com{t_1} : \com{\tau'}							
  	&
  	\vdash \ctxc : \com{\Gamma},\com{\tau} \to\com{\Gamma'},\com{\tau'}
  }{
  	\vdash \com{\ifte{t}{t_1}{\ctxc}} : \com{\Gamma},\com{\tau}\to\com{\Gamma'},\com{\tau'}
  }{stlcf-typectx-if3}
  \typerule{Type-Ctx-Seq1}{
  	\ctxc : \com{\Gamma},\com{\tau}\to\com{\Gamma'},\Unitc
  	\\
  	\com{\Gamma'}\vdash\com{t}:\com{\tau''}
  }{
  	\vdash \com{\ctxc;t} : \com{\Gamma},\com{\tau}\to\com{\Gamma'},\com{\tau''}
  }{stlcf-typectx-seq1}

  \medskip
  
  \typerule{Type-Ctx-Seq2}{
  	\com{\Gamma}\vdash\com{t}:\Unitc
  	\\
  	\vdash \ctxc : \com{\Gamma},\com{\tau}\to\com{\Gamma'},\com{\tau'}
  }{
  	\vdash \com{t;\ctxc} : \com{\Gamma},\com{\tau}\to\com{\Gamma'},\com{\tau'}
  }{stlcf-typectx-seq2}
  \typerule{\stlcf-Type-Ctx-Fix}{
  	\ctxs : \src{\Gamma'},\src{\tau'}\to\src{\Gamma},\src{\tau\to\tau}
  }{
  	\vdash \src{\fix{\tau\to\tau} \ctxs} : \src{\Gamma'},\src{\tau'}\to\src{\Gamma},\src{\tau}
  }{stlcf-typectx-fix}  

  \medskip
  
  \typerule{\stlcm-Type-Ctx-Fold}{
  	\vdash \ctxt : \trg{\Gamma'},\trgb{\tau'}\to\trg{\Gamma},\trgb{\tau\subt{\mat}{\alpha}}
  }{
  	\vdash \trg{\fold{\matt}~ \ctxt} : \trg{\Gamma'},\trgb{\tau'}\to\trg{\Gamma},\trgb{\mat}
  }{stlcm-typectx-fold}
  \typerule{\stlcm-Type-Ctx-Unfold}{
  	\vdash \ctxt : \trg{\Gamma'},\trgb{\tau'}\to\trg{\Gamma},\trgb{\mat}
  }{
  	\vdash \trg{\unfold{\matt}\ctxt} : \trg{\Gamma'},\trgb{\tau'}\to\trg{\Gamma},\trgb{\tau\subt{\mat}{\alpha}}
  }{stlcm-typectx-unfold}

  \medskip
  
  \typerule{\stlcem-Type-Eq}{
  	\vdash \ctxo : \oth{\Gamma'},\othb{\tau'}\to\oth{\Gamma},\othb{\tau}
    &
    \tyeqbin{\tau}{\sigma}
  }{
  	\vdash \ctxo : \oth{\Gamma'},\othb{\tau'}\to\oth{\Gamma},\othb{\sigma}
  }{stlce-typectx-eq}
  \smallskip

\botrule
\end{center}

We use the same coinductive type equality relation of \citet{Cai:2016:SFE:2914770.2837660}, with a cosmetic difference only.
Two types are equal if they are the same base type \oth{\iota} or variable (\Cref{tr:tyeq-prim,tr:tyeq-var}).
If the types are composed of two types, the connectors must be the same and each sub-type must be equivalent (\Cref{tr:tyeq-bin}).
If the left type starts with a \oth{\mu} (or if that does not but the right one does), then we unfold the type for checking the equality (\Cref{tr:tyeq-mu-l,tr:tyeq-mu-r}).
Note that these last two rules are defined in an asymmetric fashion to make equality derivation deterministic.
Finally, we make explicit the rules for reflexivity, symmetry and transitivity (\Cref{tr:tyeq-refl}, \Cref{tr:tyeq-symm,tr:tyeq-trans}) whose derivations we have proved from the other rules.
\begin{center}
\mytoprule{ \tyeqbin{\tau}{\tau'} }\\
    \cotyperule{$\tyeq$-prim}{
      \equi{\iota} = \equi{\Unito} ~\vee~
      \equi{\iota} = \equi{\Boolo}
    }{
        \tyeqbin{\iota}{\iota}
    }{tyeq-prim}
    \cotyperule{$\tyeq$-var}{
    }{
        \tyeqbin{\alpha}{\alpha}
    }{tyeq-var}
    \cotyperule{$\tyeq$-bin}{
        \equi{\star} \in \{\equi{\to}, \equi{\times}, \equi{\uplus}\}
      &&
        \tyeqbin{\tau_1}{\sigma_1}
        &&
        \tyeqbin{\tau_2}{\sigma_2}
    }{
      \tyeqbin{\tau_1 \star \tau_2}{\sigma_1 \star \sigma_2}
    }{tyeq-bin}
    \cotyperule{$\tyeq$-${\mu_l}$}{
        \tyeqbin{\tau\subo{\mat}{\alpha}}{\sigma}
    }{
      \tyeqbin{\mat}{\sigma}
    }{tyeq-mu-l}
    \cotyperule{$\tyeq$-${\mu_r}$}{
        \tyeqbin{\tau}{\sigma\subo{\matgen{\alpha}{\sigma}}{\alpha}}
    }{
      \tyeqbin{\tau}{\matgen{\alpha}{\sigma}}
    }{tyeq-mu-r}
    \cotyperule{$\tyeq$-refl}{
    }{
      \tyeqbin{\tau}{\tau}
    }{tyeq-refl}
    \cotyperule{$\tyeq$-symm}{
      \tyeqbin{\sigma}{\tau}
    }{
      \tyeqbin{\tau}{\sigma}
    }{tyeq-symm}
    \cotyperule{$\tyeq$-trans}{
      \tyeqbin{\tau}{\sigma}
      &&
      \tyeqbin{\sigma}{\tau'}
    }{
      \tyeqbin{\tau}{\tau'}
    }{tyeq-trans}

\botrule
\end{center}

To prove results about this equality relation, we will often induct on the ``leading-mu-count'' (\mtt{lmc}) measure.
Intuitively, that measure counts the amount of \equi{\mu}s that a \stlcem type has before a different connector is found.
This is almost the same as the number of times a type can be unfolded before it is no longer recursive at the top level (e.g. $\lMuCount{\Unito} = 0$, $\lMuCount{\equi{\matgen{\alpha}{\alpha \uplus \Unito}}} = 1$).
\begin{align*}
    \lMuCount{\equi{\tau}} &\isdef 
                             \begin{cases}
                               \lMuCount{\equi{\tau'}} + 1 
                               & 
                               \equi{\tau} = \equi{\matgen{\alpha}{\tau'}} 
                               \\
                               0 
                               & 
                               \text{otherwise}
                             \end{cases}
  \end{align*}%
Non-contractive types such as \equi{\matgen{\alpha}{\alpha}}, however, create problems here, for they always unfold into another top level recursive type.
This motivates our restriction to contractive types only: a contractive type \equi{\tau} can be unfolded exactly \lMuCount{\tau} times.

\subsection{Dynamic Semantics}\label{sec:sem}

All our languages are given a small-step, contextual, call-by-value, operational semantics.
We highlight primitive reductions as $\redp$ and non-primitive ones as $\red$.
We indicate the capture-avoiding substitution of variable (or type variable) \com{x} in \com{t} with value (or type) \com{v} as $\com{t}\subc{v}{x}$.
Note that since \stlcem has no peculiar syntactic construct, it also has no specific reduction rule.
\begin{center}
\mytoprule{\com{t \red t'} \quad \text{ and } \quad \com{t \redp t'}}\\%
  \typerule{Eval-ctx}{
    \com{t \redp t'}
  }{
    \com{\evalctxhc{t}\red\evalctxhc{t'}}
  }{e-ctx}
  \typerule{Eval-beta}{}{
    \com{(\lam{x:\tau}{t})~v}\redp \com{t}\subc{v}{x}
  }{e-beta}
  \typerule{Eval-pi}{
    \com{i}\in\com{1..2}
  }{
    \com{\proji{\pair{v_1,v_2}}}\redp\com{v_i}
  }{e-p1}

  \medskip
  
  \typerule{Eval-seq}{
  }{
    \com{\unitc;t \redp t}
  }{e-seq}
  \typerule{Eval-inl}{}{
    \com{
      \casefoldedc{\inl{v}}{t}{t'}
    }\redp\com{t}\subc{v}{x_1}
  }{e-inl}

  \medskip
  
  \typerule{Eval-inr}{}{
    \com{\casefoldedc{\inr{v}}{t}{t'}}\redp\com{t'}\subc{v}{x_2}
  }{e-inr}
  \typerule{Eval-if}{
    \com{v}=\com{\truec}\vee\com{\falsec}
  }{
    \com{\ifte{v}{t_{\truec}}{t_{\falsec}} \redp \com{t_v}}
  }{e-if}

  \medskip
  
  \typerule{\stlcf-Eval-fix}{
  }{
    \src{\fix{\tau \to \tau}\ (\lam{x:\tau}{t}) \redp t\ \subs{\fix{\tau\to\tau} (\lam{x:\tau}{t}) }{x}}
  }{stlcf-eval-fix}
  \typerule{\stlcim-Eval-fold}{}{
    \trg{\unfold{\matt}~(\fold{\matt}~v) \redp v}
  }{e-i-fold}

  \smallskip

\botrule
\end{center}

\subsection{Notions of Termination}\label{sec:termination}
For technical reasons, we need to define two notions of termination for our languages.
To define contextual equivalence (which is required for fully-abstract compilation), we rely on the canonical definition of termination, which tells that a term eventually reduces to a value in some number of steps.
\begin{definition}[Termination]\label{def:term}
	\[
		\com{t}\term \isdef \exists n\in\mb{N}, \com{v} \ldotp \com{t \btermc{n} v}
	\]
\end{definition}

We rely on the auxiliary judgement for bounded termination in order to say that a term \com{t} reduces to a value \com{v} in $n$ steps.
\begin{center}
  \typerule{Bounded termination-value}{
  }{
    \com{v}\btermc{0}\com{v}
  }{bter-v}
  \typerule{Bounded termination-term}{
    \com{t}\red\com{t'}
    &
    \com{t'}\btermc{n}\com{v}
  }{
    \com{t}\btermc{n+1}\com{v}
  }{bter-t}
\end{center}

As mentioned in \Cref{sec:intro}, to make the logical relation more precise, we need another notion of bounded termination that not only bounds the number of steps needed for reaching a value but also the size of intermediate terms encountered during these steps. 
\begin{center}
  \typerule{\newterm-value}{
    \size{v}\leq n
  }{
    \com{v}\shrinkc{n}\com{v}
  }{shr-v}
  \typerule{\newterm-term}{
    \com{t}\red\com{t'}
    &
    \com{t'}\shrinkc{n}\com{v}
    &
    \size{t}\leq n
  }{
    \com{t}\shrinkc{n+1}\com{v}
  }{shr-t}
\end{center}
It is worth noting that this definition does not apply the same size bound to all terms encountered during execution, but the bound decreases as execution progresses.
This approach has minor technical benefits in the definition, but we think a definition with a single bound on all terms would work as well. 

The two termination notions are related by \Cref{thm:term-def-rel} below.
For any term \com{t} that terminates there exists a \com{n} such that \newterm holds for \com{t} in \com{n} steps.
Conversely, if \newterm holds for a term then it also terminates.
\begin{theorem}[Relation between Termination and \NewTerm]\label{thm:term-def-rel}
	\begin{align*}
		\text{ if } 
			&
			\com{t}\term
			\text{ then }
			\exists \com{n}\in\mb{N}, \com{v} \ldotp \com{t \shrinkc{n} v}
		\\
		\text{ if }
			&
			\com{t}\shrinkc{\_}
			\text{ then }
			\com{t}\term
	\end{align*}
\end{theorem}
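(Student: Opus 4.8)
The plan is to prove the two implications of \Cref{thm:term-def-rel} independently, by rule induction. The backward direction (\newterm{} implies termination) is immediate; the forward direction (termination implies \newterm{}) needs one auxiliary monotonicity lemma to reconcile the strictly-decreasing index of $\shrinkc{}$ with the fact that the sizes of intermediate terms along a reduction sequence need not decrease.

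For the backward direction, assume $\com{t \shrinkc{n} v}$ for some $n$ and $\com{v}$, and show $\com{t \btermc{m} v}$ for some $m$ (whence $\com{t}\term$ by \Cref{def:term}) by induction on the derivation of $\com{t \shrinkc{n} v}$. In the \Cref{tr:shr-v} case $\com{t} = \com{v}$, and \Cref{tr:bter-v} gives $\com{v \btermc{0} v}$. In the \Cref{tr:shr-t} case $\com{t \red t'}$ and $\com{t' \shrinkc{n'} v}$, so the IH yields $\com{t' \btermc{m'} v}$ and \Cref{tr:bter-t} gives $\com{t \btermc{m'+1} v}$. The size premises of the \newterm{} rules are simply dropped here.

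For the forward direction I would first prove a monotonicity lemma: if $\com{t \shrinkc{n} v}$ and $n \le n'$, then $\com{t \shrinkc{n'} v}$. This is again a routine induction on the derivation — in the \Cref{tr:shr-t} case, from $n \le n'$ we get $n-1 \le n'-1$, the IH upgrades the sub-derivation, and the size premise $\size{t} \le n-1$ still holds \emph{a fortiori}. Using this, I would prove by induction on the derivation of $\com{t \btermc{m} v}$ that there is some $n$ with $\com{t \shrinkc{n} v}$: in the value case take $n = \size{v}$, so that \Cref{tr:shr-v} applies; in the step case $\com{t \red t'}$ with $\com{t' \btermc{m'} v}$, the IH gives $\com{t' \shrinkc{n'} v}$, and setting $k = \max(n', \size{t})$, monotonicity gives $\com{t' \shrinkc{k} v}$, so $\size{t} \le k$ lets \Cref{tr:shr-t} conclude $\com{t \shrinkc{k+1} v}$. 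The first clause of \Cref{thm:term-def-rel} then follows by unfolding \Cref{def:term}.

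The one place requiring care — and the only real obstacle — is the realisation that one cannot read off a $\shrinkc{}$ derivation directly from a $\btermc{}$ derivation, since a later, larger intermediate term would demand an index larger than the budget already committed to by an earlier step; the monotonicity lemma is exactly what resolves this, by letting us pick, at each point of the reduction, an index large enough to dominate the sizes of all terms still to come. Everything else is mechanical rule induction and never inspects the shape of reductions or the clauses of $\size{\cdot}$.
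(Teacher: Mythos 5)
Your proof is correct: the backward direction is the routine rule induction that simply drops the size premises, and your monotonicity (index-weakening) lemma for $\com{t \shrinkc{n} v}$ is exactly the ingredient needed in the forward direction to choose, at each step, a bound dominating the sizes of all terms yet to come. The paper omits its own proof of \Cref{thm:term-def-rel} (deferring it to the Coq mechanisation), but your argument is the natural one for this statement and I see no gap in it.
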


Although this theorem is quite easy to prove, it does capture a non-trivial property of the programming language, namely the fact that it only contains finite values.
If we would define a variant of the language with infinite values (e.g.\ if we had interpreted $\mu$ as producing a coinductive fixpoint rather than an inductive one, perhaps with a call-by-need semantics), then the property would no longer hold.

\subsection{Logical Relations Between Our Languages}\label{sec:logrel-main}
As mentioned in \Cref{sec:intro}, we need cross-language relations that indicate when related source and target terms approximate each other.
Intuitively, one such relation is needed by each one of the compilers we define later.
Thus, we need to define three logical relations: 
\begin{enumerate}
  \item[A] one between \stlcf and \stlcim, which we dub \lrfi;
  \item[B] one between \stlcim and \stlcem, which we dub \lrie;
  \item[C] one between \stlcf and \stlcem, which we dub \lrfe.
\end{enumerate}
These relations are all indexed by a step and then by the source type, so logical relations (A) and (C) look the same. For brevity we present only one of them.
Additionally, given that \stlcim has the same types of \stlcf plus recursive types, we only show that case for logical relation (B).
Ours are Kripke, step-indexed logical relations that are based on those of \citet{popl-journal,Hur:2011:KLR:1926385.1926402}.
The step-indexing is not inherently needed for relations (A) and (C), which could be defined just by induction on \stlcf types (since they do not include recursive types).
However, all of our relations are step-indexed anyway because the steps also determine for how many steps one term should approximate the other and this detail is key for the backtranslation proof technique.

Before presenting the details, note that the relations we show here are \emph{not} complete.
Specifically they only talk about the terms needed to conclude reflection of fully-abstract compilation but not preservation (admittedly, the most interesting part).
Completing the logical relations relies on technical insights regarding the backtranslations, so we do this later in \Cref{sec:rel-bt}.
The goal of this section is to provide an understanding of what it means for two terms to approximate each other.

\begin{figure}[!htb]
  \begin{gather*}
    \begin{aligned}
    \W \isdef
      &\
      n \in\mb{N}
    & 
    \stepsfun{n} =
      &\
      n
    &
    \laterfun{0}=
      &\ 
      0
    &
    \laterfun{n+1} =
      &\
      n
    \end{aligned}
    \\
    \begin{aligned}
    \W\futw\W' =
      &\
      \stepsfun{\W}\leq\stepsfun{\W'}
    &
    \W\strfutw\W' =
      &\
      \stepsfun{\W}<\stepsfun{\W'}
    \end{aligned}
    \\
    \begin{aligned}
    \obsfun{\W}{\underlogrel}\isdef
      &\
      \myset{(\src{t},\trg{t})}{ \text{if } \stepsfun{\W}>n \text{ and } \src{t\shrinks{n}v} \text{ then } \exists \trg{k}, \trg{v}.~ \trg{t\btermt{k}v} }
    \\
    \obsfun{\W}{\overlogrel}\isdef
      &\
      \myset{(\src{t},\trg{t})}{ \text{if } \stepsfun{\W}>n \text{ and } \trg{t\shrinkt{n}v} \text{ then } \exists \src{k}, \src{v}.~ \src{t\bterms{k}v} }
    \\
    \obsfun{\W}{\bothlogrel}\isdef
      &\
      \obsfun{\W}{\underlogrel}\cap\obsfun{\W}{\overlogrel}
    \end{aligned}
  \end{gather*}
  \caption{
    Worlds, observations and related technicalities. 
    These are typeset for the relation between \stlcf and \stlcim but the other ones do not change.
    \label{fig:logrels-worlds}
  }
\end{figure}
All three relations rely on the same notion of very simple Kripke worlds \W (\cref{fig:logrels-worlds}). 
Worlds consist of just a step-index $k$ that is accessed via function \stepsfun{\W}.
The use of this function is intended to facilitate future extensions of the Kripke worlds with additional information, but we do not currently make use of this extra generality.
The $\later{}$ modality and future world relation $\futw$ express that future worlds allow programs to take fewer reduction steps.
We define two different observation relations, one for each direction of the approximations we are interested in: 
$\obsfun{\W}{\lesssim}$ and 
$\obsfun{\W}{\gtrsim}$ while 
$\obsfun{\W}{\bothlogrel}$ indicates the intersection of those approximations.
The former defines that a source term approximates a target term if shrinking of the first in $\stepsfun{\W}$ steps or less implies termination of the second (in any number of steps).
The latter requires the reverse.
All of our logical relations will be defined in terms of either $\obsfun{\W}{\lesssim}$ or $\obsfun{\W}{\gtrsim}$. 
For definitions and lemmas or theorems that apply for both instantiations, we use the symbol $\anylogrel$ as a metavariable that can be instantiated to either $\lesssim$ or $\gtrsim$.

Note that our logical relations are not indexed by source types, but by \emph{pseudo-types} $\com{\psd{\tau}}$. 
Pseudo-types contain all the constructs of source types, plus an additional type which we indicate for now as $\com{\emuldvtext}$.
This type is not a source type; it is needed because of the approximate backtranslation, so we defer explaining its details until \Cref{sec:rel-bt}.
Function $\emtotau{\cdot}$ converts a pseudo-type to an actual source type by replacing all occurrences of $\com{\emuldvtext}$ with a concrete source type.%
\footnote{
  As a convention, superscripts of these auxiliary functions indicate the initials of the two languages involved.
}
We will sometimes silently use a normal source type where a pseudo-type is expected; this makes sense since the syntax for the latter is a superset of the former.
Function \srctotrgty{\cdot} converts a \stlcf pseudo-type into its \stlcim correspondent; this is needed because unlike the previous work of \citet{popl-journal}, all of our target languages are typed.
The formal details of both these functions are deferred until \com{\emuldvtext} is defined (\Cref{sec:rel-bt}) but we report their types below for clarity.
Finally, function \oftypefi{\cdot} checks that terms have the correct form according to the rules of syntactic typing (\Cref{sec:typ}).
\begin{gather*}
\begin{aligned}
  \psd{\tau} \bnfdef&\ \Units \mid \Bools \mid \src{\psd{\tau}\to\psd{\tau}} \mid \src{\psd{\tau}\times\psd{\tau}} \mid \src{\psd{\tau}\uplus\psd{\tau}} \mid \com{\emuldvtext}
    \text{ (to be defined in \Cref{sec:rel-bt})}
  \\
  \oftypefi{\psd{\tau}} \isdef
      &\
      \myset{(\src{v},\trg{v})}{ \src{v}\in\oftypes{\emtotau{\psd{\tau}}} \text{ and } \trg{v}\in\oftypet{\srctotrgty{\psd{\tau}}} }
\end{aligned}
\\
\begin{aligned}
	\oftypes{\src{\tau}} \isdef
		&\
		\myset{\src{v}}{\srce\vdash\src{v}:\src{\tau}}
	&
	\oftypet{{\tat}} \isdef
		&\
		\myset{\trg{v}}{\trge\vdash\trg{v}:\tat}
	\\
	\emtotau{\cdot} :&\ \psd{\tau} \to \src{\tau}
	\text{ (see \Cref{sec:rel-bt})}
		&
		\srctotrgty{\cdot} :&\ \psd{\tau} \to \iso{\tat}
		\text{ (see \Cref{sec:rel-bt})}
\end{aligned}
\end{gather*}
These definitions are used in the \lrfi relation and similar ones are used in the other ones, so we report their definitions and signatures below.
Function \oftypeie{\cdot} does the analogous syntactic typecheck but for terms of \stlcim and \stlcem and \oftypefe{\cdot} does it for terms of \stlcf and \stlcem.
Functions \emtotaufe{\cdot} and \emtotauic{\cdot} do the analogous conversion from pseudo types to actual types.
Function \srctoothty{\cdot} and \srctotrgtyic{\cdot} do the analogous conversion from source pseudo types to target actual types.
As we clarify later, \com{\emuldvtext} is indexed by target types, so essentially we have a set of pseudo types for the \stlcf to \stlcim compilation and a different set for the \stlcf to \stlcem compilation, and thus we need two different conversion functions (whose signatures look the same for now).
\begin{gather*}
  \begin{aligned}
    \oftypeie{\psdic{\tat}} \isdef
        &\
        \myset{(\iso{v},\equi{v})}{ \iso{v}\in\oftypet{\emtotauic{\psdic{\tat}}} \text{ and } \equi{v}\in\oftypeo{\srctotrgtyic{\psdic{\tat}}} }
    \\
    \oftypefe{\psd{\tau}} \isdef
    	&\
    	\myset{(\src{v},\equi{v})}{ \src{v}\in\oftypes{\emtotau{\psd{\tau}}} \text{ and } \equi{v}\in\oftypeo{\srctoothty{\psd{\tau}}} }
  \end{aligned}
  \\
  \begin{aligned}
    \oftypeo{{\tau}} \isdef
      &\
      \myset{\equi{v}}{\othe\vdash\equi{v}:\equi{\tau}}
  \end{aligned}
  \\
  \begin{aligned}
  \emtotaufe{\cdot} :&\ \psdfe{\tau}\to\src{\tau}
    &&&
    \emtotauic{\cdot} :&\ \psdic{\tat}\to\tat
    &&&
    \text{ (see \Cref{sec:rel-bt})}
  \\
  \srctoothty{\cdot} :&\ \src{\psdfe{\tau}} \to \equi{\tau}
    &&&
    \srctotrgtyic{\cdot} :&\ \trg{\psdic{\tat}} \to \equi{\tau}
    &&&
    \text{ (see \Cref{sec:rel-bt})}
\end{aligned}
\end{gather*}

\begin{figure}[!t]\small
  \centering
  \begin{align*}
    \later R \isdef
        &\
        \myset{ (\W,\src{v},\trg{v}) }{ \text{if } \stepsfun{\W}>0 \text{ then } (\laterfun{\W},\src{v},\trg{v}) \in R }
      \\
    \valrel{\Units} \isdef
      &\
      \myset{ (\W,\src{v},\trg{v}) }{ \src{v}=\units \text{ and } \trg{v}=\unitt }
    \\
    \valrel{\Bools} \isdef
      &\
      \myset{ (\W,\src{v},\trg{v}) }{ (\src{v}=\trues \text{ and } \trg{v}=\truet) \text{ or } (\src{v}=\falses \text{ and } \trg{v}=\falset) }
    \\
    \valrel{\psd{\tau}\to\psd{\tau'}} \isdef
      &\
      \myset{ (\W,\src{v},\trg{v}) }{
        \begin{aligned}
          &
          (\src{v},\trg{v})\in\oftypefi{\src{\psd{\tau}\to\psd{\tau'}}} \text{ and }
          \\
          &
          \exists \src{t},\trg{t}.~ \src{v}=\src{\lam{x:\emtotau{\psd{\tau}}}{t}}, \trg{v}=\trg{\lamt{x:\srctotrgty{\psd{\tau}}}{t}} \text{ and }
          \\
          &
          \forall \W', \src{v'}, \trg{v'}. \text{ if } \W'\strfutw\W \text{ and } (\W',\src{v'},\trg{v'})\in\valrel{\psd{\tau}} \text{ and } 
          \\
          & 
          (\text{if } \genlogrel = \overlogrel \text{ then }\size{\trg{v'}}\leq\stepsfun{\W'}) \text{ then }
          \\
          &
          (\W', \src{t}\subs{v'}{x}, \trg{t}\subt{v'}{x})\in\termrel{\psd{\tau'}}
        \end{aligned}
      }
    \\
    \valrel{\psd{\tau}\times\psd{\tau'}} \isdef
      &\
      \myset{ (\W,\src{v},\trg{v}) }{
        \begin{aligned}
          &
          (\src{v},\trg{v})\in\oftypefi{\src{\psd{\tau}\times\psd{\tau'}}} \text{ and }
          \\
          &
          \exists \src{v_1},\src{v_2},\trg{v_1},\trg{v_2}.~ \src{v}=\src{\pair{v_1,v_2}}, \trg{v}=\trg{\pair{v_1,v_2}} \text{ and }
          \\
          &
          (\W,\src{v_1},\trg{v_1})\in\later\valrel{\psd{\tau}} \text{ and } (\W,\src{v_2},\trg{v_2})\in\later\valrel{\psd{\tau'}} 
        \end{aligned}
      }
    \\
    \valrel{\psd{\tau}\uplus\psd{\tau'}} \isdef
      &\
      \myset{ (\W,\src{v},\trg{v}) }{
        \begin{aligned}
          &
          (\src{v},\trg{v})\in\oftypefi{\src{\psd{\tau}\uplus\psd{\tau'}}} \text{ and either }
          \\
          &
          \exists \src{v'},\trg{v'}.~ (\W,\src{v'},\trg{v'})\in\later\valrel{\psd{\tau}} \text{ and } \src{v}=\src{\inl{v'}}, \trg{v}=\trg{\inl{v'}} \text{ or}
          \\
          &
          \exists \src{v'},\trg{v'}.~ (\W,\src{v'},\trg{v'})\in\later\valrel{\psd{\tau'}} \text{ and } \src{v}=\src{\inr{v'}}, \trg{v}=\trg{\inr{v'}}
        \end{aligned}
      }
    \\
    \valrel{\com{\emuldvtext}} \isdef
      &\
      \text{ to be defined in \Cref{sec:rel-bt}}
    \\
    \contrel{\psd{\tau}} \isdef
      &\
      \myset{ (\W,\evalctxs,\evalctxt) }{
        \begin{aligned}
          &
          \forall \W',\src{v},\trg{v}.~ \text{if } \W'\futw\W \text{ and } (\W',\src{v},\trg{v})\in\valrel{\psd{\tau}} \text{ then }
          \\
          &
          (\evalctxhs{v},\evalctxht{v})\in\obsfun{\W'}{\anylogrel}
        \end{aligned}
       }
    \\
    \termrel{\psd{\tau}} \isdef
      &\
      \myset{ (\W,\src{t},\trg{t}) }{ \forall\evalctxs,\evalctxt.~ \text{if } (\W,\evalctxs,\evalctxt)\in\contrel{\psd{\tau}} \text{ then } (\evalctxhs{t},\evalctxht{t})\in\obsfun{\W}{\anylogrel} }
    \\
    \envrel{\srce} \isdef
      &\
      \{ (\W,\srce,\trge) \}
    \\
    \envrel{\src{\psd{\Gamma},x:\psd{\tau}}} \isdef
      &\
      \myset{ (\W,\src{\gamma\subs{v}{x}},\trgb{\gamma\subt{v}{x}}) }{ (\W,\src{\gamma},\trgb{\gamma})\in\envrel{\psd{\Gamma}} \text{ and } (\W,\src{v},\trg{v})\in\valrel{\psd{\tau}} }
    \\
    \cline{1-2}
    \valrele{\psdic{\matt}} \isdef
      &\
      \myset{
        (\W,\trg{v},\equi{v})
      }{
        \begin{aligned}
          &
          (\trg{v},\equi{v})\in\oftypeie{\trg{\psdic{\matt}}} \text{ and }
          \\
          &
          \exists \trg{v'}.~(\W,\trg{v'},\equi{v})\in\valrele{\psdic{\tat\subt{\matt}{\alpt}}} \text{ and } 
          \trg{v}=\trg{\fold{\matt}~v'}
        \end{aligned}
      }
    \\
    &
    \text{ The rest of \valrel{\psdic{\tat}} is analogous to the cases presented for \valrel{\psd{\tau}} }
    \\
    &
    \text{ The \contrel{\psdic{\tat}}, \termrel{\psdic{\tat}}, and \envrel{{\psdic{\Gat}}} relations are analogous to the presented ones }
    \\
    \cline{1-2}
    &
    \text{ The \valrel{\psd{\tau}}, \contrel{\psd{\tau}}, \termrel{\psd{\tau}}, and \envrel{{\psd{\Gamma}}} relations for \lrfe are}
    \\
    &
    \text{ analogous to the presented ones }
  \end{align*}
  \caption{
    Part of the three cross-language logical relations we rely on (classical bits) and its auxiliary functions.
    \label{fig:logrel-main}
    }
\end{figure}
The value relation $\valrel{\psd{\tau}}$ (\Cref{fig:logrel-main}) is defined inductively on source pseudo-types and it is quite standard save for an additional premise in the value relation for function types. 
\com{\Unitc} and \com{\Boolc} values are related in any world so long as they are the same value. 
Function values are related if they are well-typed, if both are lambdas, and if substituting related values in the bodies yields related terms in any strictly-future world.
Additionally, when the approximation direction is $\overlogrel$, we require that the world $\W'$ contains enough steps to bound the size of the target argument \trg{v'}.
This is a technicality that is required to complete the proof of \Cref{thm:inj-ext-sem-pres}, as we explain at the end of \Cref{sec:injext}.
Pair values are related if both are pairs and each projection is related in strictly-future worlds and sum values are related if they have the same tag ($\inl{}$ or $\inr{}$) and the tagged values are related in strictly-future worlds.
Finally, the value relation for recursive types used by \lrie is not defined on strictly-future worlds because in an equi-recursive language, values of recursive type can be inspected without consuming a step.
However, this does not compromise well-foundedness of the relation because our recursive types \equi{\mat} are contractive, so the recursion variable \equi{\alpha} in \equi{\tau} must occur under a type constructor such as \equi{\to} and the relation for these constructors recurses only at strictly-future worlds.

The value, evaluation context and term relations are defined by mutual recursion, using a technique called biorthogonality (see, e.g., \citep{bistcc}).
Evaluation contexts $\contrel{\psd{\tau}}$ are related in a world if plugging in related values in any future world yields terms that are related according to the observation relation of the world.
Similarly, terms are related $\termrel{\psd{\tau}}$ if plugging the terms in related evaluation contexts yields terms related according to the observation relation of the world.
Relation $\envrel{\psd{\Gamma}}$ relates substitutions; this simply requires that substitutions for all variables in the context are for related values.

We indicate open terms to be logically related according to the three relations as follows (\Cref{def:logrel}, \Cref{def:logrelie,def:logrelfe}).
Those definitions rely on terms being related up to $n$ steps (\Cref{def:logrel-n-steps}) which we present for \lrfi only since the other definitions are analogous.
Here, when we apply \srctotrgty{\cdot} to typing contexts, we mean the application of \srctotrgty{\cdot} to all bindings in the context.
\begin{definition}[Logical relation up to $n$ steps for \lrfi]\label{def:logrel-n-steps}
  \begin{align*}
    \psd{\Gamma}\vdash\src{t}\anylogreln{n}\trg{t}:\psd{\tau} \isdef
      &\
      \emtotau{\psd{\Gamma}}\vdash\src{t}:\emtotau{\psd{\tau}}
    \\
    \text{and }
      &\
      \srctotrgty{\psd{\Gamma}}\vdash\trg{t}:\srctotrgty{\psd{\tau}}
    \\
    \text{and }
      &\
      \forall \W.~
    \\
    \text{if }
      &\
      \stepsfun{\W}\leq n
    \\
    \text{then }
      &\
      \forall \src{\gamma},\trgb{\gamma}.~
      (\W,\src{\gamma},\trgb{\gamma})\in\envrel{\psd{\Gamma}}, 
    \\
      &\
      (\W,\src{t\gamma},\trg{t\trgb{\gamma}})\in\termrel{\psd{\tau}}
  \end{align*}
\end{definition}

\begin{definition}[\lrfi Logical relation]\label{def:logrel}
  \begin{align*}
    \psd{\Gamma}\vdash\src{t}\anylogrel\trg{t}:\psd{\tau} \isdef
    &\
    \forall n\in\mb{N}.~ \psd{\Gamma}\vdash\src{t}\anylogreln{n}\trg{t}:\psd{\tau} 
  \end{align*}
\end{definition}

\begin{definition}[\lrie Logical relation]\label{def:logrelie}
  \begin{align*}
    \psdic{\Gat}\vdash\iso{t}\anylogrel\equi{t}:\psdic{\tat} \isdef
    &\
    \forall n\in\mb{N}.~ \psdic{\Gat}\vdash\iso{t}\anylogreln{n}\equi{t}:\psdic{\tat} 
  \end{align*}
\end{definition}

\begin{definition}[\lrfe Logical relation]\label{def:logrelfe}
  \begin{align*}
    \psd{\Gamma}\vdash\src{t}\anylogrel\equi{t}:\psd{\tau} \isdef
    &\
    \forall n\in\mb{N}.~ \psd{\Gamma}\vdash\src{t}\anylogreln{n}\equi{t}:\psd{\tau} 
  \end{align*}
\end{definition}

An open source term is related up to \com{n} steps at pseudo-type \src{\psd{\tau}} in pseudo-context \src{\psd{\Gamma}} to a target open term if both are well-typed and closing both terms with substitutions related in \src{\psd{\Gamma}} produces terms related at \src{\psd{\tau}} in any world that has at least \com{n} steps.
If terms are related for any number of steps, we simply omit the \com{n} index and write $\src{\psd{\Gamma}}\vdash\src{t}\anylogrel\iso{t}:\src{\psd{\tau}}$.
Since we have to also relate program contexts across languages, we define what it means for them to be related as follows.
\begin{definition}[\lrfi Logical relation for program contexts]\label{def:logrel-ctx}
  \begin{align*}
    \vdash\ctxs\anylogrel\ctxt:\psd{\Gamma},\psd{\tau}\to\psd{\Gamma'},\psd{\tau'} \isdef
    &\
    \vdash\ctxs:\psd{\Gamma},\psd{\tau}\to\psd{\Gamma'},\psd{\tau'} 
    \\
    \text{and }
    &\
    \vdash\ctxt:\srctotrgty{\psd{\Gamma}},\srctotrgty{\psd{\tau}}\to
    \\
    &\
    \qquad\qquad\qquad\srctotrgty{\psd{\Gamma'}},\srctotrgty{\psd{\tau'}} 
    \\
    \text{and }
    &\
    \forall \src{t},\trg{t}
    \\
    \text{if }
    &\
    \psd{\Gamma}\vdash\src{t}\anylogrel\trg{t}:\psd{\tau} 
    \\
    \text{then }
    &\
    \psd{\Gamma'}\vdash\ctxhs{t}\anylogrel\ctxht{t}:\psd{\tau'} 
  \end{align*}
\end{definition}

\begin{definition}[\lrie Logical relation for program contexts]\label{def:logrel-ctx-ei}
  \begin{align*}
    \vdash\ctxt\anylogrel\ctxo:\trg{\Gamma},{\tat}\to\trg{\Gamma'},\trg{\tat'} \isdef
    &\
    \vdash\ctxt:\trg{\Gamma},{\tat}\to\trg{\Gamma'},\trg{\tat'} 
    \\
    \text{and }
    &\
    \vdash\ctxo:\srctotrgtyic{\oth{\psdic{\Gamma}}},\srctotrgtyic{\oth{\psdic{\tau}}}\to
    \\
    &\
    \qquad\qquad\qquad\srctotrgtyic{\oth{\psdic{\Gamma'}}},\srctotrgtyic{\oth{\psdic{\tau'}}} 
    \\
    \text{and }
    &\
    \forall \trg{t},\oth{t}
    \\
    \text{if }
    &\
    \psdic{\Gamma}\vdash\trg{t}\anylogrel\oth{t}:\psdic{\tat} 
    \\
    \text{then }
    &\
    \psdic{\Gamma'}\vdash\ctxht{t}\anylogrel\ctxho{t}:\psdic{\tat'} 
  \end{align*}
\end{definition}

\begin{definition}[\lrfe Logical relation for program contexts]\label{def:fe-logrel-ctx}
  \begin{align*}
    \vdash\ctxs\anylogrel\ctxo:\psdfe{\Gamma},\psdfe{\tau}\to\psdfe{\Gamma'},\psdfe{\tau'} \isdef
    &\
    \vdash\ctxs:\psdfe{\Gamma},\psdfe{\tau}\to\psdfe{\Gamma'},\psdfe{\tau'} 
    \\
    \text{and }
    &\
    \vdash\ctxo:\srctoothty{\psdfe{\Gamma}},\srctoothty{\psdfe{\tau}}\to
    \\
    &\
    \qquad\qquad\qquad\srctoothty{\psdfe{\Gamma'}},\srctoothty{\psdfe{\tau'}} 
    \\
    \text{and }
    &\
    \forall \src{t},\oth{t}
    \\
    \text{if }
    &\
    \psdfe{\Gamma}\vdash\src{t}\anylogrel\oth{t}:\psdfe{\tau} 
    \\
    \text{then }
    &\
    \psdfe{\Gamma'}\vdash\ctxhs{t}\anylogrel\ctxho{t}:\psdfe{\tau'} 
  \end{align*}
\end{definition}

Program contexts are related if they are well-typed and if plugging terms related at the pseudo-type of the hole (\com{\psd{\tau}}) in each of them produces terms related at the pseudo-type of the result (\com{\psd{\tau'}}).

All our logical relations are constructed so that for related terms, termination of one term implies termination of the other according to the direction of the approximation ($\lesssim$ or $\gtrsim$) (\Cref{thm:log-rel-adeq-both}).
\begin{lemma}[Adequacy for $\bothlogrel$ for \lrfi]\label{thm:log-rel-adeq-both}
  \begin{align*}
    \text{if }
      &
      \srce \vdash \src{t} \underlogreln{n} \trg{t} : \src{\tau}
    \text{ and }
        \src{t \shrinks{m} v} \text{ with } n \geq m
    \text{ then }
      \trg{t \termt}
  \\
    \text{if }
      &
      \srce \vdash \src{t} \overlogreln{n} \trg{t} : \src{\tau}
    \text{ and }
        \trg{t \shrinkt{m} v} \text{ with } n \geq m
      \text{ then }
        \src{t \termsl}
  \end{align*}
\end{lemma}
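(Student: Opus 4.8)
The plan is to prove the two implications separately, since they are mirror images: I will carry out the $\lesssim$ case in full and obtain the $\gtrsim$ case by exchanging the roles of \src{t} and \trg{t} (and of $\lesssim$/$\gtrsim$ and of $\shrinks{}$/$\shrinkt{}$). Starting from $\srce \vdash \src{t} \underlogreln{n} \trg{t} : \src{\tau}$, I first unfold \Cref{def:logrel-n-steps}. Because \src{\tau} is a genuine source type, $\emtotau{\cdot}$ and $\srctotrgty{\cdot}$ act trivially on it, and the only substitution pair in $\envrel{\srce}$ is the empty one, which closes \src{t} and \trg{t} to themselves; the hypothesis therefore boils down to: for every world \W with $\stepsfun{\W}\leq n$, $(\W,\src{t},\trg{t})\in\termrel{\tau}$.

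Next I instantiate this term relation with the empty evaluation contexts $\evalctxs=\src{\hole{\cdot}}$ and $\evalctxt=\trg{\hole{\cdot}}$, which are trivially well typed; for this I must check $(\W,\src{\hole{\cdot}},\trg{\hole{\cdot}})\in\contrel{\tau}$, i.e.\ that whenever $\W'\futw\W$ and $(\W',\src{v},\trg{v})\in\valrel{\tau}$ we have $(\src{v},\trg{v})\in\obsfun{\W'}{\underlogrel}$. This holds for a trivial reason: membership in $\valrel{\tau}$ forces \trg{v} to be a value, hence $\trg{v}\btermt{0}\trg{v}$, so the conclusion of $\obsfun{\W'}{\underlogrel}$ is discharged outright regardless of what \src{v} does. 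With the empty contexts shown related, $\termrel{\tau}$ yields $(\src{t},\trg{t})\in\obsfun{\W}{\underlogrel}$, and it remains only to pick \W so that the observation relation actually concludes something. Choosing \W with $\stepsfun{\W}=n$ (permitted since $\stepsfun{\W}\leq n$) and combining $\src{t \shrinks{m} v}$, the hypothesis $n\geq m$, and the fact that $\shrinks{}$ is upward closed in its index makes the premise of $\obsfun{\W}{\underlogrel}$ hold, so $\exists\trg{k},\trg{v}.~\trg{t}\btermt{\trg{k}}\trg{v}$, that is, $\trg{t \termt}$. The second implication runs identically with source and target swapped: unfold $\overlogreln{n}$, plug in the empty contexts (discharging $\obsfun{\W'}{\overlogrel}$ via $\src{v}\bterms{0}\src{v}$, as \src{v} is a value), and read off $\src{t \termsl}$ from $\obsfun{\W}{\overlogrel}$ using $\trg{t \shrinkt{m} v}$.

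I do not expect a deep obstacle here: adequacy is essentially built into the shape of the observation relation — this is the standard adequacy argument for biorthogonal, step-indexed logical relations, transposed to the $\shrinks{}$-based observations — and the genuinely hard, preservation-flavoured content lives in the compatibility lemmas and the backtranslation rather than in this lemma. The one point that needs care is the step-index bookkeeping around the choice of \W: it must be small enough ($\leq n$) to apply the closed-term instance of the logical relation, yet large enough (above $m$, which is exactly where $n\geq m$ together with the upward closure of $\shrinks{}$/$\shrinkt{}$ is consumed) for the observation relation to fire.
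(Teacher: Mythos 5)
Your overall strategy is the standard adequacy argument for this kind of biorthogonal, step-indexed relation, and it is surely the intended one (the paper itself omits the proof and defers to the Coq development): unfold \Cref{def:logrel-n-steps} with the empty substitution from $\envrel{\srce}$, instantiate $\termrel{\tau}$ with the empty evaluation contexts, discharge $(\W,\src{\hole{\cdot}},\trg{\hole{\cdot}})\in\contrel{\tau}$ by noting that the plugged terms are values and hence satisfy the conclusion of $\obsfun{\W'}{\anylogrel}$ via a $0$-step termination, and then read the claim off the observation relation. All of that is correct, including the symmetry between the two directions.

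The gap is precisely at the point you yourself flag as delicate, and your fix for it does not work. The premise of $\obsfun{\W}{\underlogrel}$ demands a shrink index \emph{strictly} below the world's level: you need some $j$ with $\src{t}\shrinks{j}\src{v}$ and $\stepsfun{\W}>j$, while \Cref{def:logrel-n-steps} only supplies worlds with $\stepsfun{\W}\leq n$. With your choice $\stepsfun{\W}=n$ and the hypothesis $\src{t}\shrinks{m}\src{v}$, such a $j$ exists only when $n>m$; in the boundary case $n=m$ there is none, because \newterm is not downward closed in its index (a value of size exactly $m$ \newterms at $m$ but not at $m-1$). Your appeal to upward closure of $\shrinks{}$ points the wrong way: raising $j$ only makes the guard $\stepsfun{\W}>j$ harder to meet, so it buys nothing here. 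As written, your argument establishes the lemma for $n>m$ but not for $n\geq m$; you must either treat $n=m$ separately, or observe that the statement is effectively consumed with strict slack (as in the proof of \Cref{thm:stlc-pres}, where $n$ is chosen strictly larger than the relevant bound), or otherwise reconcile the $\leq n$ / $>j$ inequalities explicitly rather than via monotonicity of \newterm.
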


\section{Fully-abstract compilation and Approximate Backtranslations}\label{sec:fac}
This section provides an overview of fully-abstract compilation and of the approximate backtranslation proof technique that we use (\Cref{sec:fac-primer}).
The approximate backtranslation requires defining the backtranslation type, i.e., the type that represents backtranslated values (\Cref{sec:bt-type}).
This type provides the insights needed to complete the definitions of our logical relations and to understand how to reason about backtranslated terms cross-languages (\Cref{sec:rel-bt}).

\subsection{A Primer on Fully-Abstract Compilation and Approximate Backtranslations}\label{sec:fac-primer}
A compiler is fully abstract if it preserves and reflects contextual equivalence between source and target language~\citep{DBLP:conf/icalp/Abadi98}.
Many compiler passes have been proven to satisfy this criterion~\citep{fstar2js,Ahmed:2008:TCC:1411203.1411227,ahmedCPS,max-embed,popl-journal,scoo-j,skorstengaard-stktokens:2019,van_strydonck_linear_2019}, we refer the interested reader to the survey of \citet{scsurvey}.

Two programs are contextually equivalent if they produce the same behaviour no matter the larger program (i.e., program context) they interact with~\citep{lcfConsidered}.
As commonly done, we define ``producing the same behaviour'' as equi-termination (one terminates iff the other does).
We use a complete formulation of contextual equivalence for typed programs, which enforces that contexts are well-typed and their types match that of the terms considered.
\begin{definition}[Contextual Equivalence]\label{def:ceq}\hfill
	\begin{align*}
		\com{\Gamma}\vdash\com{t_1\ceqc t_2} : \com{\tau} \isdef 
			&\
			\com{\Gamma}\vdash\com{t_1}:\com{\tau} \text{ and } \com{\Gamma}\vdash\com{t_2}:\com{\tau} \text{ and } 
		\\
			&\
			\forall \ctxc\ldotp \ctxc:\com{\Gamma},\com{\tau}\to\come,\com{\tau'} \ldotp \ctxhc{t_1}\term\iff\ctxhc{t_2}\term
	\end{align*}
\end{definition}
Quantifying over all contexts in \Cref{def:ceq} ensures that contextually-equivalent terms do not just equi-terminate, but that any value the context can obtain from them is indistinguishable.

For a compiler \comp{\cdot} from language \langS to \langT, we define full abstraction as follows:
\begin{definition}[Fully-abstract compilation]\label{def:fac}\hfill
    \[
    	\vdash\comp{\cdot}:\fa \isdef \forall\src{t_1},\src{t_2}\in\langS\ldotp \srce\vdash\src{t_1\ceqs t_2}:\src{\tau} \iff \trge\vdash\trg{\comp{\src{t_1}}\ceqt\comp{\src{t_2}}}:\trg{\comp{\src{\tau}}}
    \]
\end{definition}
For simplicity, we instantiate \Cref{def:fac} for closed terms only (i.e., well-typed under empty environments).
Opening the environment to a non-empty set of term variables is straightforward and therefore omitted~\citep{popl-journal}.

\begin{figure}[!ht]
\centering
\begin{tikzpicture}[scale=0.8,every node/.style={scale=.9}]
    \node at (5,4.7) { $\src{t_1\mathrel{\overset{\bl{?}}{\ceqs}} t_2}$ };

    \node at (3.4,4) { $\src{\ctxs\hole{t_1} \termsl}$ };
    \node at (5,4) { $\xRightarrow{\phantom{ooo}?\phantom{ooo}}$ };
    \node at (6.6,4) { $\src{\ctxs\hole{t_2} \termsl}$ };

    \node at (4.05,3) { (1) };
    \node at (5,2.4) { (2) };
    \node at (5.95,3) { (3) };

    \draw[out=260,in=100,double,-implies,double equal sign distance] (3.7,3.4) to (3.7,2.6);

    \draw[out=80,in=280,double,-implies,double equal sign distance] (6.3,2.6) to (6.3,3.4);

    \node[align=left] at (7.5,3) { $ \ctxs \bothlogrel \comp{\ctxs}$ \\
      $ \src{t_2} \bothlogrel \comp{\src{t_2}}$};
    \node[align=left] at (2.5,3) { $ \ctxs \logrel \comp{\ctxs}$ \\
      $ \src{t_1} \logrel \comp{\src{t_1}}$};

    \node at (3.4,2) { $\trg{\comp{\ctxs}\hole{\comp{\src{t_1}}}} \termt$ };
    \node at (5,2.1) { $\xRightarrow{\phantom{ooooo}}$ };
    \node at (6.6,2) { $\trg{\comp{\ctxs}\hole{\comp{\src{t_2}}}} \termt$ };

    \node at (5,1.3) { $\trg{\comp{\src{t_1}}\ceqt\comp{\src{t_2}}}$ };

    \draw[out=90,in=-90,double,-implies,double equal sign distance] (1.5,1) to node[sloped, yshift =.7em]{\small reflection direction} (1.5,5);
  \end{tikzpicture}
  \hspace{1em}
  \begin{tikzpicture}[scale=0.84,every node/.style={scale=.9}]
    \node at (5,4.7) { $\src{t_1\ceqs t_2}$ };

    \node at (3,4) { $\src{\backtr{\ctxt{}}_{\com{n}} \hole{ \src{t_1}} \termsl_{\_}}$ };
    \node at (5,4.2) { $\xRightarrow{\phantom{ooooooo}}$ };
    \node[] at (7,4)(ss2) { $\src{\backtr{\ctxt{}}_{\com{n}} \hole{\src{t_2}} \termsl_{\_}}$ };

    \node[] at (7,3.3)(ss1) { $\src{\backtr{\ctxt{}}_{\com{n}} \hole{\src{t_2}} \shrinks{\_}}$ };

    \node at (4.05,3) { (1) };
    \node at (5,3.6) { (2) };
    \node at (5.95, 2.3) { (3) };

    \draw[out=100,in=260,double,-implies,double equal sign distance] (3.7,2.6) to (3.7,3.4);

    \draw[out=280,in=80,double,-implies,double equal sign distance] (6.3,2.7) to (6.3,1.9);

    \node[align=left] at (7.8, 2.3) { $ \src{\backtr{\ctxt{}}_{\com{n}}} \lesssim_{\_} \trg{\ctxt{}}$ \\ $ \src{t_2} \lesssim_{\_} \comp{\src{t_2}}$};
    \node[align=left] at (2.2, 3) { $ \src{\backtr{\ctxt{}}_{\com{n}}} \gtrsim_{\com{n}} \trg{\ctxt{}}$ \\ $ \src{t_1} \gtrsim_{\_} \comp{\src{t_1}}$};

    \node[]at (3,1.3)(st2) { $\trg{\ctxt{} \hole{\comp{\src{t_1}}}  \termt_{\com{j}}}$ };

    \node[]at (3,2)(st1) { $\trg{\ctxt{} \hole{\comp{\src{t_1}}}  \shrinkt{{n}}}$ };

    \node at (5,1.4) { $\xRightarrow{\phantom{ooo}?\phantom{ooo}}$ };
    \node at (7,1.3) { $\trg{\ctxt{} \hole{\comp{\src{t_2}}} \termt_{\_}}$ };

    \node at (5,.7) { $\comp{\src{t_1}}\mathrel{\overset{?}{\ceqt}}\comp{\src{t_2}}$ };

    \draw[out=100,in=260,double,-implies,double equal sign distance] (st2.west) to node[midway,left](){\scriptsize{Thm~\ref{thm:term-def-rel}}} (st1.west);
    \draw[out=280,in=80,double,-implies,double equal sign distance] (ss2.east) to node[midway,right](){\scriptsize{Thm~\ref{thm:term-def-rel}}} (ss1.east);

    \draw[out=-90,in=90,double,-implies,double equal sign distance] (0,5) to node[sloped, yshift =.7em]{\small preservation direction} (0,0.3);
  \end{tikzpicture}
  \caption{Diagram breakdown of the reflection (left) and preservation (right) proofs of fully-abstract compilation.\label{fig:fac-dia}\label{fig:fa-refl}}
\end{figure}
\subsubsection{Proving Fully-Abstract Compilation: Reflection (or, the Easy Part)}\label{sec:fac-refl}

The reflection part of fully-abstract compilation requires that the compiler produces equivalent target programs only if their source counterparts were equivalent.
Contrapositively, inequivalent source programs must be compiled to inequivalent target programs.
This proof can often be derived as a corollary of standard compiler correctness (i.e., refinement)~\citep{scsurvey}.

As mentioned, we prove the reflection direction by relying on the cross-language logical relations.
Our logical relations are compiler-agnostic---they simply state when terms approximate each other (recall that $\bothlogrel$ is the intersection of both approximations $\lesssim$ and $\gtrsim$).
However, we use them to show that any term (and program context) is related to its compilation.
With this fact, by relying on the adequacy of logical relations (\Cref{thm:log-rel-adeq-both}), we know that related terms equi-terminate.
Thus, we can apply the reasoning depicted in \Cref{fig:fa-refl} (left) to conclude this part of fully-abstract compilation.

\subsubsection{Proving Fully-Abstract Compilation: Preservation (or, the Hard Part)}\label{sec:fac-pres}
Fully-abstract compilation proofs are notorious and their complexity resides in the \emph{preservation} direction.
That is, starting from contextually-equivalent programs in the source, prove that their compiled counterparts are contextually-equivalent in the target.
For our three fully-abstract compilation results we rely on the approximate backtranslation proof technique~\citep{popl-journal}, depicted in \Cref{fig:fac-dia} (right).

We rely on both directions of the cross-language approximation relating terms for this proof.
Recall that $\src{t}\gtrsim_n\trg{t}$ is used to know that if \trg{t} shrinks in \com{n} steps in the target, then \src{t} also terminates (in arbitrary steps) in the source.
The converse, $\src{t}\lesssim_n\trg{t}$ is used to know that if \src{t} shrinks in \com{n} steps in the source, then \trg{t} also terminates (again in arbitrary steps) in the target.
We start with source term \src{t} approximating (in both directions) its compilation \comp{\src{t}}.
Then, to prove target contextual equivalence (the ?-decorated equivalence), we start by assuming that a target context \ctxt linked with \comp{\src{t_1}} terminates in some steps ($\termt_\com{n}$).
By relying on \Cref{thm:term-def-rel}, we know that \ctxt linked with \comp{\src{t_1}} \newterms in some steps (\shrinkt{n'}).
Eventually, we need to show that the same target context linked with \comp{\src{t_2}} also terminates in any steps ($\termt_{\com{\_}}$).
This is the ?-decorated implication, the reverse direction holds by symmetry.
To progress, we construct a \emph{backtranslation} $\backtr{\cdot}_\com{n}$, i.e., a function that takes a target context \ctxt{} and returns a source context that approximates \ctxt in both directions.
With the backtranslation and this direction of the approximation $\gtrsim_{\com{n}}$, we prove implication (1): the backtranslated context $\backtr{\ctxt}_\com{n}$ linked with \src{t_1} terminates in the source.
At this point, the assumption of source contextual equivalence yields implication (2): the same backtranslated context $\backtr{\ctxt}_\com{n}$ linked with \src{t_2} also terminates (\termsl{}).
Here we apply again \Cref{thm:term-def-rel} to know that $\backtr{\ctxt}_\com{n}$ linked with \src{t_2} \newterms (\shrinks{\_}).
Now we rely on the another direction of the approximation between the target context and its backtranslation (as well as between source terms and their compilation): $\lesssim_{\_}$.
This other approximation lets us conclude implication (3): the original target context \ctxt linked with \comp{\src{t_2}} terminates in the target.
This is what we prove for a compiler to be fully abstract.

\subsection{A Family of Backtranslation Types}\label{sec:bt-type}
Backtranslated contexts must be valid source contexts, i.e., they need to be well typed in the source.
However, \stlcf does not have recursive types, so what is the source-level correspondent of \iso{\matt}?

We adapt the same intuition of previous work \citep{Devriese:2016:FCA:2837614.2837618,popl-journal} in our setting too: it is not necessary to precisely embed target types into the source language in order to backtranslate terms.
In fact, we need to reason for \emph{up to \com{n} steps}, which means that we can approximate target types \emph{n-levels deep}.
Thus, concretely, we do not need recursive types in \stlcf.
Given a target recursive type, we unfold it \com{n} times and backtranslate its unfolding to model the \com{n} target reductions required.

\begin{figure}
  \begin{align*}
    \uval{0;\trg{\tat}} \isdef
      &\
      \Units
    \\
    \uval{n+1;\trg{\tat}} \isdef
      &\
      \begin{cases}
        \src{\Units\uplus\Units}
        &
        \text{ if } \trg{\tat}=\Unitt
        \\
        \src{\Bools\uplus\Units}
        &
        \text{ if } \trg{\tat}=\Boolt
        \\
        \src{(\uval{n;\trg{\tat}}\to\uval{n;\trg{\tat'}})\uplus\Units}
        &
        \text{ if } \trg{\tat}=\trgb{\tat\to\tat'}
        \\
        \src{(\uval{n;\trg{\tat}}\times\uval{n;\trg{\tat'}})\uplus\Units}
        &
        \text{ if } \trg{\tat}=\trgb{\tat\times\tat'}
        \\
        \src{(\uval{n;\trg{\tat}}\uplus\uval{n;\trg{\tat'}})\uplus\Units}
        &
        \text{ if } \trg{\tat}=\trgb{\tat\uplus\tat'}
        \\
        \src{\uval{n;\trg{\tat'\subt{\matgent{\alpt}{\tat'}}{\alpt}}}\uplus\Units}
        &
        \text{ if } \trg{\tat}=\trg{\matgent{\alpt}{\tat'}}
      \end{cases}
  \\
  \cline{1-2}
  \uvalic{n;\oth{\tau}} \isdef
    &\
    \text{ as } \uvalfe{n;\oth{\tau}}
  \\
  \cline{1-2}
    \uvalfe{n+1;\oth{\tau}} \isdef
      &\
      \begin{cases}
        \text{omitted cases are as above}
      \\
        \src{\uvalfe{n+1;\oth{\tau'\subo{\matgen{\alpha}{\tau'}}{\alpha}}}}
        &
        \text{ if } \oth{\tau}=\oth{\matgen{\alpha}{\tau'}}
      \end{cases}
  \end{align*}
  \caption{\label{fig:uval}The type of backtranslated terms.}
\end{figure}
According to this strategy, the backtranslation of a term of type $\tat$ should have type \emph{unfold \tat\ \com{n} times}.
During this unfolding, however, things can go wrong.
Specifically, the backtranslated code does not know at runtime the level of unfolding we are dealing with, i.e., it cannot inspect \com{n} at runtime.
Thus, we need a way to model the term reaching more than \com{n} unfoldings, because in that case the backtranslated code needs to diverge.
Recall in fact that one of the two terms (\comp{\src{t_1}} and \comp{\src{t_2}}) is guaranteed to terminate within \com{n} steps.
Therefore, if that termination does not happen, the backtranslated code to diverge; this ensures that contextually-equivalent terms remain equivalent, i.e., they equi-terminate.
Thus at each level of unfolding, we backtranslate \tat\ into ``\src{\tat\uplus\Unit}'' (we will make this formal below), where the right \Units models failure.
Then any time the backtranslation code receives a value which inhabits the `right \Units' type of the backtranlation type, it will diverge, knowing that it is not dealing with the term that had to terminate within the \com{n} unfoldings.

We make these intuitions concrete and formalise the type for \stlcim values backtranslated into \stlcf as \uval{n;\tat} in \Cref{fig:uval} (for \src{B}ack\src{t}ranslation \src{T}ype; the superscript indicates the languages involved, the subscripts are effectively parameters of this type).
Type \uval{n;\tat} is defined inductively on \src{n} and it backtranslates the structure of \tat\ in the source type it creates.
At no steps (\src{n}=\src{0}), the backtranslation is not needed any more because intuitively we already performed the \com{n} steps, so the only type is \Units.
Otherwise, the backtranslated type maintains the same structure of the target type.
In the case for \matt, the backtranslated type is the unfolding of \matt, but at a decremented index (\src{n}).
Intuitively, this is to match the reduction step that will happen in the target for eliminating \iso{\unfold{\matt}\ \fold{\matt}} annotations.

The type of \stlcem terms backtranslated in \stlcf (\uvalfe{n;\oth{\tau}}, still in \Cref{fig:uval}) has an important difference.
The case for \oth{\mat} does not lose a step in the index and simply performs the unfolding of the recursive type without an additional \src{\uplus\Units}.
This difference matches the fact that in \stlcem there is no additional reduction rule in the semantics.
Additionally, this difference affects the helper functions needed to deal with values of backtranslation type, as we discuss later.

Intuitively, the fact that the backtranslation of a recursive type is its \com{n}-level deep unfolding is possible because \equi{\mat} is contractive in \equi{\alpha}.
This is sufficient because we need to only replicate \com{n} steps in order to differentiate terms, so a \com{n}-level deep unfolding of the type suffices in order to reach the differentiation.
For example, let us take the type of list of booleans in \stlcem: 
  \[
    \equi{\matgen{\alpha}{\Unito\uplus(\Boolo\times\alpha)}} \text{ (which we dub \equi{List_B})}
  \] 
and its first unfolding:
  \[
    \equi{\Unito\uplus(\Boolo\times List_B)} \text{ (which we dub \equi{List_B^1})}
  \]
the backtranslation (for \com{n=3}) for this type is:
\begin{align*}
\uvalfe{3;\equi{List_B}} 
  &= 
  \uvalfe{3;\equi{\Unito\uplus(\Boolo\times List_B)}} 
\\
  &= 
  \equi{((\uvalfe{2;\Unito})\uplus\uvalfe{2;\equi{\Boolo\times List_B}})\uplus\Unito} 
\\
  &= 
  \equi{((\Unito\uplus\Unito)\uplus(((\uvalfe{1;\Boolo})\times\uvalfe{1;\equi{List_B}})\uplus\Unito))\uplus\Unito} 
\\
  &=
  \equi{((\Unito\uplus\Unito)\uplus(((\Boolo\uplus\Unito)\times\uvalfe{0;\equi{List_B^1}})\uplus\Unito))\uplus\Unito} 
\\
  &= 
  \equi{((\Unito\uplus\Unito)\uplus(((\Boolo\uplus\Unito)\times\Unito)\uplus\Unito))\uplus\Unito}
\end{align*}
Formally, the measure that ensures that this type is well founded is the precision $n$ together with \lMuCount{\equi{\mat}} i.e., the number of leading \oth{\mu}s in type $\equi{\tau}$, for reasons analogous to those discussed in \Cref{sec:typ}.
The type of \stlcem terms backtranslated in \stlcim (\uvalic{n;\oth{\tau}}) is the same as the one just presented (\uvalfe{n;\oth{\tau}}).
Intuitively, this is because the \com{n}-level deep unfolding of \oth{\tau} in the backtranslation type does not rely on recursive types in \stlcim.

\subsubsection{Working with the Backtranslation Type}
In order to work with values of backtranslated type, we need a way to create and destruct them.
Additionally, we need a way to increase and decrease the approximation level (the \com{n} index), for reasons we explain below.
This is what we present now mainly for terms of type \uval{n;\tat}, though we report the most interesting cases for the other backtranslation types too.
Recall that the definitions of the other two backtranslation types are the same, so these helpers are also the same and we report only one.

Given a target value \iso{v} of type \tat, in order to \emph{create} a source term of type \uval{n;\tat} it suffices to create \src{\inl{v}} (informally).
However, in order to \emph{use} a source term of type \uval{n;\tat} at the expected type \tat, we need to destroy it according to \tat: this is done by the family of source functions \casetag{n;\tat}.
\begin{align*}
  \casetag{n;\trg{\tat}} =
    &\
    \src{
      \lam{x:\uval{n+1;\trg{\tat}}}{
        \caseof{x}{x_1}{\myomega_{\uval{n;\trg{\tat}}}}
      }
    }
\end{align*}
Intuitively, all these functions strip the value of type \uval{n+1;\tat} they take in input of the \src{\inl{}} tag and return the underlying value.
Thus, at arrow type, the returned value has type \src{ (\uval{n;\tat}\to\uval{n;\trgb{\tau'}})} while at recursive type it has type \uval{n;\trg{\tat\subt{\matt}{\alpt}}}.
In case the wrong value is passed in (i.e., it is an \src{\inr{}}), these functions diverge via term $\src{\myomega_{\uval{n;\tat}}}$, which is easily encodable in \stlcf.

Recall that the \uvalfe{n;\equi{\tau}} for $\equi{\tau}=\equi{\mat}$ is different: it is just \uvalfe{n;\equi{\tau\subo{\mat}{\alpha}}} so the type is unfolded and the index is the same.
The destructor used for this backtranslation type (\casetagfe{n;\oth{\mat}}{}{}) is therefore different than the one above.
Specifically, we do not need to destruct a backtranslated type indexed with \equi{\mat} because that never arises (i.e., the type is unfolded).
Consider type \uvalfe{3;\equi{List_B}} from before: at index \com{3} the backtranslation does not handle values of recursive type but of type \uvalfe{3;\equi{List_B^1}}.
That is, it handles values whose top-level connector is the \equi{\uplus} of \equi{List_B}.
Finally, the destructor used for \uvalic{n;\equi{\mat}} (\casetagic{n;\equi{\mat}}{}{}) is analogous to this last one (\casetagfe{n;\equi{\mat}}).
\begin{align*}
	\casetagfe{n;\equi{\tau}} =
	&\
		\src{
		\lam{x:\uvalfe{n+1;\equi{\tau}}}{
		\caseof{x}{x_1}{\myomega_{\uvalfe{n;\equi{\tau}}}}
		}
		}
	&
	\equi{\tau} \neq \equi{\mat}
	\\
	\casetagic{n;\equi{\tau}} =
	&\
		\iso{
		\lam{x:\uvalic{n+1;\equi{\tau}}}{
		\caseof{x}{x_1}{\myomega_{\uvalic{n;\equi{\tau}}}}
		}
		}
	&
	\equi{\tau} \neq \equi{\mat}
\end{align*}

\begin{figure}[!t]\small
\mytoprule{ 
	\upgrade{n;\tat} :
    $\src{\uval{n;\tat} \to \uval{n+1;\tat}}$
}

\begin{align*}
	\upgrade{0;d;\trgb{\tau}} =
		&\
		\src{\lam{x:\uval{0;\trgb{\tau}}}{\unk_d}}
	\\
	\upgrade{n+1;d;\Unitt} =
		&\
		\src{\lam{x:\Units\uplus\Units}{x}}
	\qquad
	\upgrade{n+1;d;\Boolt}
		=
		\src{\lam{x:\Bools\uplus\Units}{x}}
	\\
	\upgrade{n+1;d;\trgb{\tau\times\tau'}} =
		&\
		\src{
			\begin{aligned}[t]
				&
				\lam{
					\src{x:\uval{n+1;\trgb{\tau\times\tau'}}}
				}{
				\\
				&\
				\casefoldeds{\src{x}}{\src{\inl{\pair{\upgrade{n;d;\trgb{\tau}}\projone{x_1},\upgrade{n;d;\trgb{\tau'}}\projtwo{x_1}}}}}{\src{\inr{x_2}}}
				}
			\end{aligned}
		}
	\\
	\upgrade{n+1;d;\trgb{\tau\uplus\tau'}} =
		&\
		\src{
			\begin{aligned}[t]
				&
				\lam{
					\src{x:\uval{n+1;\trgb{\tau\uplus\tau'}}}
					}{
				\\
				&
				\casefoldeds{\src{x}}{
					\src{\inl{
						\casefoldeds{x_1}{
							\src{\inl{(\upgrade{n;d;\trgb{\tau}} x_1)}}
						}{
							\src{\inr{(\upgrade{n;d;\trgb{\tau'}} x_2)}}
						}
					}}
				}{\src{\inr{x_2}}}}
			\end{aligned}
		}
	\\
	\upgrade{n+1;d;\trgb{\tau\to\tau'}} =
		&\
		\src{
			\begin{aligned}[t]
				&
				\lam{
					\src{x:\uval{n+1;\trgb{\tau\to\tau'}}}
					}{
				\\
				&
				\casefoldeds{\src{x}}{
					\src{\inl{
						\lam{z:\uval{n+1;\trgb{\tau}}}{
							\upgradepar{n;d;\trgb{\tau'}}{x_1~ (\downgrade{n;d;\trgb{\tau}} z) }
						}
					}}
				}{\src{\inr{x_2}}}
			}
			\end{aligned}
		}
	\\
	\upgrade{n+1;d\trgb{\matgen{\alpha}{\tau'}}} =
		&\
		\src{
			\begin{aligned}[t]
				&
				\lam{
					\src{x:\uval{n+1;\trgb{\matgen{\alpha}{\tau'}}}}
					}{
				\casefoldeds{\src{x}}{
					\src{\inl{
						(\upgrade{n;d;\trgb{\tau'\subt{\matgen{\alpha}{\tau'}}{\alpha}}} x_1)
					}}
				}{\src{\inr{x_2}}}}
			\end{aligned}
		}
\end{align*}

\hrule
  \begin{align*}
    \upgradeic{n;\equi{\tau}} =&\ \text{ as } \upgradefe{n;\equi{\tau}}
  \end{align*}  
\hrule
  \begin{align*}
    \upgradefe{n+1;\equi{\mat}} =&\
      \upgradefe{n+1;\equi{\tau\subo{\mat}{\alpha}}}
    &
    \upgradefe{n;\equi{\tau}} =&\ \text{ as above}
  \end{align*}
  \caption{\label{fig:updn-up}Definition of the \src{upgrade} function.}
\end{figure}

\begin{figure}[!t]\small
\mytoprule{ 
  \downgrade{n;\tat} :
    $\src{\uval{n+1;\tat} \to \uval{n;\tat}}$
}

\begin{align*}
	\downgrade{0;d;\trgb{\tau}} =
		&\
		\src{\lam{x:\uval{d;\trgb{\tau}}}{ \units }}
	\\
	\downgrade{n+1;d;\Unitt} =
		&\
		\src{\lam{x:\Units\uplus\Units}{x}}
	\qquad
	\downgrade{n+1;d;\Boolt}
		=
		\src{\lam{x:\Bools\uplus\Units}{x}}
	\\
	\downgrade{n+1;d;\trgb{\tau\times\tau'}} =
		&\
		\src{
			\begin{aligned}[t]
				&
				\lam{
					\src{x:\uval{n+1+d;\trgb{\tau\times\tau'}}}
				}{
				\\
				&\
				\casefoldeds{\src{x}}{\src{\inl{\pair{\downgrade{n;d;\trgb{\tau}}\projone{x_1},\downgrade{n;d;\trgb{\tau'}}\projtwo{x_1}}}}}{\src{\inr{x_2}}}
				}
			\end{aligned}
		}
	\\
	\downgrade{n+1;d;\trgb{\tau\uplus\tau'}} =
		&\
		\src{
			\begin{aligned}[t]
				&
				\lam{
					\src{x:\uval{n+1+d;\trgb{\tau\uplus\tau'}}}
					}{
				\\
				&
				\casefoldeds{\src{x}}{
					\src{\inl{
						\casefoldeds{x_1}{
							\src{\inl{(\downgrade{n;d;\trgb{\tau}} x_1)}}
						}{
							\src{\inr{(\downgrade{n;d;\trgb{\tau'}} x_2)}}
						}
					}}
				}{\src{\inr{x_2}}}}
			\end{aligned}
		}
	\\
	\downgrade{n+1;d;\trgb{\tau\to\tau'}} =
		&\
		\src{
			\begin{aligned}[t]
				&
				\lam{
					\src{x:\uval{n+1+d;\trgb{\tau\to\tau'}}}
					}{
				\\
				&
				\casefoldeds{\src{x}}{
					\src{\inl{
						\lam{z:\uval{n;\trgb{\tau}}}{
							\downgradepar{n;d;\trgb{\tau'}}{x_1~ (\upgrade{n;d;\trgb{\tau}} z) }
						}
					}}
				}{\src{\inr{x_2}}}
			}
			\end{aligned}
		}
	\\
	\downgrade{n+1;d;\trgb{\matgen{\alpha}{\tau'}}} =
		&\
		\src{
			\begin{aligned}[t]
				&
				\lam{
					\src{x:\uval{n+1+d;\trgb{\matgen{\alpha}{\tau'}}}}
					}{
				\casefoldeds{\src{x}}{
					\src{\inl{
						(\downgrade{n;d;\trgb{\tau'\subt{\matgen{\alpha}{\tau'}}{\alpha}}} x_1)
					}}
				}{\src{\inr{x_2}}}}
			\end{aligned}
		}
\end{align*}
\hrule
  \begin{align*}
      \downgradeic{n;\equi{\tau}} =&\ \text{ as } \downgradefe{n;\equi{\tau}}
  \end{align*}  
\hrule
  \begin{align*}
    \downgradefe{n+1;\equi{\mat}} =&\
      \downgradefe{n+1;\equi{\tau\subo{\mat}{\alpha}}}
    &
      \downgradefe{n;\equi{\tau}} =&\ \text{ as above}
  \end{align*}
  \caption{\label{fig:updn-dn}Definition of the \src{downgrade} function.}
\end{figure}

The second piece of formalism that we need is functions to increase or decrease the approximation level of backtranslated terms.
We exemplify their necessity with an example from \citet{Devriese:2016:FCA:2837614.2837618}.
\begin{example}[The need for \src{downgrade}]\label{ex:down}
  
Consider \stlcim term \iso{\lamt{x:\tat}{\inr{x}}}, intuitively its backtranslation (for a sufficiently-large \com{n}) is: \src{\inl{\lam{x:\uval{n-1;\tat}}{\inl{\inr{x}}}}}
If we try to typecheck this, though, we see that \src{x} has type \uval{n-1;\tat} while it is expected to have type \uval{n-2;\tat}, i.e., its index should be lower.
This concern is about well-typedness, not precision of the backtranslation.
Since \src{x} is inside an \src{\inr{}}, inspecting it for any number of steps requires at least an additional step, to `case' \src{x} out of the \src{\inr{}}.
In other words, for the \src{\inr{}} to be a precise approximation up to \com{n-1} steps, \src{x} needs to only be precise up to \com{n-2} steps.
Thus, it is safe to throw away one level of precision and \emph{downgrade} \src{x} from type \uval{n-1;\tat} to \uval{n-2;\tat}.
\end{example}

However, downgrading is not sufficient, as demonstrated by the next example regarding function types.
\begin{example}[The need for \src{upgrade}]\label{ex:up}
Consider how we can downgrade a value of type $\uval{n+1;\trgb{\tau\to\tau'}}$ to one of type $\uval{n;\trgb{\tau\to\tau'}}$.
We need to convert a function of type $\src{\uval{n+1;\trgb{\tau}}\to}$ $\src{\uval{n+1;\trgb{\tau'}}}$ into one of type $\src{\uval{n;\trgb{\tau}}\to\uval{n;\trgb{\tau'}}}$.
To do this, we need to upgrade the argument value of type $\uval{n;\trgb{\tau}}$ into one of type $\uval{n+1;\trgb{\tau}}$.
Fortunately, this does not mean we need to magically improve the approximation precision of the value concerned.
Type $\uval{n;\trgb{\tau}}$ has an ``error box'' ($\src{\cdots \uplus\Units}$) at every level so we can simply construct the value such that it simply does not use the additional level of precision in $\uval{n;\trgb{\tau}}$.
\end{example}

Finally, another reason we need to upgrade and downgrade a value is that type \uval{n;\tat} must be sufficiently large to contain approximations of target values \emph{up to less than \com{n} steps}.
In fact, for a term to be well-typed the accuracy of the approximation can be less than \com{n}.
In these cases (i.e, for $m < n$), values of type \uval{n;\tat} will be downgraded to type \uval{m;\tat}. 
Dually, there will be cases where some values need to be upgraded.

Functions \upgrade{\cdot} and \downgrade{\cdot} perform what we just discussed; their types and formalisation is presented in \Cref{fig:updn-up,fig:updn-dn}.
Their definition closely follows the structure of the type approximations $\uval{n;\trgb{\tau}}$ and essentially just transfers an approximated value to the corresponding value in a deeper or shallower approximation of type $\trgb{\tau}$.
The cases for \Units and \Bools are optimised based on the fact that $\uval{n;\Unitt}=\uval{m;\Unitt}$ (resp. $\uval{n;\Boolt}=\uval{m;\Boolt}$) so long as $n,m>0$.
As mentioned, downgrade `forgets' information about the approximation, effectively dropping \com{1} level of precision in the backtranslation.
Dually, upgrade adds \com{1} level of information in the approximation.
Adding this information is, however, not precise, because those additional levels are unknown (\src{\unk}).
Effectively, while \src{\downgrade{n;\tat}(\upgrade{n;\tat}\ t)} reduces to \src{t}, term \src{\upgrade{n;\tat}(\downgrade{n;\tat}\ t)} does not reduce to \src{t} because information was lost (\Cref{ex:updn-forg}).
\begin{example}[Upgrading after downgrading forgets information]\label{ex:updn-forg}
  Consider the following term: $\src{\downgrade{0;\Boolt}\ \inl{\trues}}$, which reduces to \src{\units}.
  If we apply \src{\upgrade{0;\Boolt}} to it, we do not obtain back \src{\inl{\trues}} but \src{\unk}, which is \src{\inr{\units}}.
  That is because downgrade forgets the shape of the value it received (\src{\inl{\trues}}) and upgrade cannot possibly recover that information.
\end{example}

Finally, we need to define these functions for the other backtranslations that rely on the other backtranslation types \uvalfe{} and \uvalic{}.
As mentioned, the main difference between these last two backtranslation types and \uval{} is the case for target recursive types.
Recall that these last two backtranslation types for recursive types perform the unfolding of the type without decrementing the index.
This affects these functions too: upgrading or downgrading a term at a recursive type is like upgrading or downgrading at the unfolding of that type but at the same index.

In the backtranslation, we generally use creation of a backtranslated value together with a \downgrade{}, while we use destruction of backtranslated values together with an \upgrade{}.
Thus, we provide compacted functions that do exactly this, $\indn{n;\tat}$ and \caseup{n;\tat} (\Cref{fig:compact}).
Note that the arguments to the first function is not ill-typeset: they indeed take a parameter whose type is the \com{\inl{}} projection of type \uval{n;\Unitt}.
As for the previous helpers, the compacted versions that operate on terms of type \uvalfe{n;\equi{\mat}} (and \uvalic{n;\equi{\mat}}) are different.
Since there is no destructor for \uvalfe{n;\equi{\mat}}, there also is no need for a compacted version.
\begin{figure}[!t]
\small
\mytoprule{\indn{n;\tat} \quad \text{ and } \quad \caseup{n;\tat}}

\begin{gather*}
\begin{aligned}
  \indn{n;\trg{ \Unitt }}
    =&\
    \src{
      \lam{x: \Units }{
        \downgradepar{n;\trg{\Unitt}}{\inl{x}}
      }
    }
  &
  \indn{n;\trg{ \Boolt }}
    =&\
    \src{
      \lam{x: \Bools }{
        \downgradepar{n;\trg{\Boolt}}{\inl{x}}
      }
    }
  \\
  \indn{n;\trgb{\tau\to\tau' }}
    =&\
    \src{
      \begin{aligned}
      &
      \lam{\src{x: \uval{n;\tat}\to\uval{n;\trgb{\tau'}} }}{
      \\
      &\
        \src{\downgradepar{n;\trgb{\tau\to\tau'}}{\inl{x}}}
      }
      \end{aligned}
    }
  &
  \indn{n;\trgb{\tau\times\tau' }}
    =&\
    \src{
      \begin{aligned}
      &
      \lam{\src{x: \uval{n;\tat}\times\uval{n;\trgb{\tau'}} }}{
        \\
        &\
        \src{\downgradepar{n;\trgb{\tau\times\tau'}}{\inl{x}}}
      }
      \end{aligned}
    }
  \\
  \indn{n;\trgb{\tau\uplus\tau' }}
    =&\
    \src{
      \begin{aligned}
        &
        \lam{\src{x: \uval{n;\tat}\uplus\uval{n;\trgb{\tau'}} }}{
          \\
          &\
          \src{\downgradepar{n;\trgb{\tau\uplus\tau'}}{\inl{x}}}
        }
      \end{aligned}
    }
  &
  \indn{n;\trgb{\mat }}
    =&\
    \src{
      \begin{aligned}
        &
      \lam{\src{x: \uval{n;\trg{\tat\subt{\matt}{\alpt}}} }}{
        \\
        &\
        \src{\downgradepar{n;\trgb{\mat}}{\inl{x}}}
      }
      \end{aligned}
    }
  \end{aligned}
  \\
  \begin{aligned}
  \caseup{n;\trgb{\tat }}
    =&\
    \src{
      \lam{x:\uval{n;\trgb{\tat}}}{
        \casetagpar{ n;\trgb{\tat} }{\upgradepar{n;\trgb{\tat}}{x}}
      }
    }
\end{aligned}
\end{gather*}
\hrule
\begin{align*}
  \indnic{n;\equi{\tau }}
    &\
    \text{ and }
  \caseupic{n;\equi{\tau }}
    =
    \
    \text{ as above, without a case for }\equi{\tau}=\equi{\mat}
\end{align*}
\hrule
\begin{align*}
  \indnfe{n;\equi{\tau }}
    &\
    \text{ and }
  \caseupfe{n;\equi{\tau }}
    =
    \
    \text{ as above, without a case for }\equi{\tau}=\equi{\mat}
\end{align*}
\caption{\label{fig:compact}Compacted functions used to manipulate backtranslated values.}
\end{figure}

\smallskip

At this point we may ask ourselves: how can we reason about these functions, as well as about backtranslated terms?
This is what we explain next.

\subsection{Relating Backtranslated Terms}\label{sec:rel-bt}
If we were to use the logical relations of \Cref{fig:logrel-main} to relate a term and its backtranslation, this would simply not work.
Consider \stlcim type \iso{\Unit}, that is backtranslated (at any approximation \com{n>0}) into \uval{n;\Unitt}, i.e., \src{\Units \uplus \Units}.
Value \iso{\unitt} should normally be backtranslated to \src{\inl{\units}}.
Following the value relation in \lrfi for \src{\uplus} types, both terms need to have an \com{\inl{}} tag, so this does not work.
More importantly, it \emph{should not} work: we are not relating terms of \src{\uplus} type, we are relating backtranslated terms, where the backtranslation performs a modification on the type (and thus the term) by inserting the \com{\inl{}}.

\begin{figure}[!ht]
  \small
    \begin{align*}
    &
    \valrel{\emuldv{0;\imprecise;\tat}} \isdef
    \
      \myset{ (\W,\src{v},\trg{v}) }{ \src{v}=\units }
    \qquad\qquad\qquad
    \valrel{\emuldv{0;\precise;\tat}} \isdef
    \
      \emptyset
    \\
    &
    \valrel{\emuldv{n+1;p;\tat}} \isdef \{ (\W,\src{v},\trg{v}) ~|~ \src{v}\in\oftypes{\emuldv{n+1;p;\tat}} \text{ and } \trg{v}\in\oftypet{\tat}\text{ and }
      \\
      &\ \ \
        \left.\begin{aligned}
          \text{either }
          &
          \cdot
            \src{v}=\src{\inr{\units}} \text{ and } \src{p}=\src{imprecise}
          \\
          \text{or }
          &\cdot
          \begin{cases}
            \cdot
              &
              \tat=\Unitt \text{ and }
              \exists\src{v'}.~ \src{v}=\src{\inl{v'}} \text{ and } 
              (\W,\src{v'},\trg{v})\in\valrel{\Units}
            \\
            \cdot
              &
              \tat=\Boolt \text{ and }
              \exists\src{v'}.~ \src{v}=\src{\inl{v'}} \text{ and } 
              (\W,\src{v'},\trg{v})\in\valrel{\Bools}
            \\
            \cdot
              &
              \begin{aligned}[t]
                &
                \tat=\trgb{\tau_1\to\tau_2} \text{ and }
                \exists\src{v'}.~ \src{v}=\src{\inl{v'}} \text{ and } 
                (\W,\src{v'},\trg{v})\in\valrel{\emuldv{n;p;\trgb{\tau_1}}\to\emuldv{n;p;\trgb{\tau_2}}}
              \end{aligned}
            \\
            \cdot
              &
              \begin{aligned}[t]
                &
                \tat=\trgb{\tau_1\times\tau_2} \text{ and }
                \exists\src{v'}.~ \src{v}=\src{\inl{v'}} \text{ and } 
                (\W,\src{v'},\trg{v})\in\valrel{\emuldv{n;p;\trgb{\tau_1}}\times\emuldv{n;p;\trgb{\tau_2}}}
              \end{aligned}
            \\
            \cdot
              &
              \begin{aligned}[t]
                &
                \tat=\trgb{\tau_1\uplus\tau_2} \text{ and }
                \exists\src{v'}.~ \src{v}=\src{\inl{v'}} \text{ and } 
                (\W,\src{v'},\trg{v})\in\valrel{\emuldv{n;p;\trgb{\tau_1}}\uplus\emuldv{n;p;\trgb{\tau_2}}}
              \end{aligned}
            \\
            \cdot
              &
              \begin{aligned}[t]
                &
                \tat=\trgb{\mat} \text{ and }
                \exists\src{v'}.~ \src{v}=\src{\inl{v'}} \text{ and } 
                \\
                &\
                \exists\trg{v'}.~ \trg{v}=\trg{\fold{\matt}~v'}
                (\W,\src{v'},\trg{v'})\in\later\valrel{\emuldv{n;p;\trg{\tat\subt{\matt}{\alpt}}}}
              \end{aligned}
          \end{cases}
        \end{aligned}\hspace*{-10pt}\right\}
  \end{align*}  
  \hrule
  \begin{align*}
    \valrel{\emuldvic{n;p;\equi{\tau}}} &\text{ is defined analogously to  } \valrel{\emuldvfe{n;p;\equi{\tau}}}
  \end{align*}
  \hrule
  \begin{align*}
    &
    \valrel{\emuldvfe{0;\imprecise;\oth{\tau}}} \isdef
    \
      \myset{ (\W,\src{v},\oth{v}) }{ \src{v}=\units }
    \qquad\qquad\qquad
    \valrel{\emuldvfe{0;\precise;\oth{\tau}}} \isdef
    \
      \emptyset
    \\
    &
    \valrel{\emuldvfe{n+1;p;\oth{\tau}}} \isdef \{ (\W,\src{v},\equi{v}) ~|~ \src{v}\in\oftypes{\emuldvfe{n+1;p;\oth{\tau}}} \text{ and } \equi{v}\in\oftypeo{\tau}\text{ and }
    \\
      &\
      \left.\begin{aligned}
        \text{either }
        &
            \cdot
              \src{v}=\src{\inr{\units}} \text{ and } \src{p}=\src{imprecise}
            \\
          \text{or }
            &
            \cdot
            \begin{cases}
              \cdot
              &
              \text{omitted parts are as above}
              \\
              \cdot
              &
              \oth{\tau}=\oth{\mat} \text{ and } \equi{\tau} \text{ contractive in }\equi{\alpha} \text{ and }
              (\W,\src{v},\oth{v})\in\valrel{\emuldv{n+1;p;\oth{\tau\subo{\mat}{\alpha}}}}
        \end{cases}
      \end{aligned}\right\}
  \end{align*}
  \caption{
    Missing bits of the logical relation: value relation for backtranslation type (excerpts).
    Note that \src{p} can be either \src{\precise} or \src{\imprecise} in the second clause (the 'or') of the \src{n+1} case.
    }
  \label{fig:emuldv1}
\end{figure}
This is the reason we have pseudotypes and, in particular, the reason we have \com{\emuldvtext}.
We have three \com{\emuldvtext}s---one per backtranslation---and each follows the same intuition, which we explain starting with \emuldv{n;p;\tat}, the type of backtranslated \stlcim terms into \stlcf (top of \Cref{fig:emuldv1}).
\src{\emuldv{n;p;\tat}} is indexed by a non-negative number \src{n}, a value $\src{p} ::= \precise \mid \imprecise$ and the original target type \tat.
The number tracks the depth of type that are being related, index \src{p} tracks the precision of the approximation (as explained below) and the original type carries precise information of the type to expect in the backtranslation.
As seen, sometimes we have \src{\unk} values (i.e., \src{\inr{\units}}) in the backtranslation, the intuition behind their meaning is presented in \Cref{ex:unk}
\begin{example}[Approximate values \src{\unk}]\label{ex:unk}
Consider the $\trg{\uval{6,\Boolt}}$ value: \src{\inl{~\left\langle\inl{~(\inl{\unk_4})},\!\!\right.}}$ \linebreak $\src{\left.\unk_5\right\rangle}.
This value might be used by the approximate back-translation to represent the term $\trg{\pair{\inl{\pair{\unitt,\truet}},\lamt{x:\Boolt}{x}}}$.
Our $\valrel{\emuldv{\cdot}}_\square$ specification will enforce that terms of the form $\src{\inl{\pair{\cdot,\cdot}}}$ or $\src{\inl{(\inl{\cdot}})}$ represent the corresponding target constructs, but terms $\src{\unk_{4}}$ and $\src{\unk_{5}}$ can represent arbitrary terms (in this case: a pair of base values and a lambda).
\end{example}
Thus, $\valrel{\emuldv{n;p;\tat}}$ regulates how these \src{\unk} values occur depending on the precision index.
$\src{p} = \imprecise$ will only be used in the $\lesssim$ direction of the approximation, i.e., we have that source termination in \emph{any} number of steps implies target termination.
Here, $\valrel{\emuldv{n;p;\tat}}$ allows $\src{\unk}$ values to occur anywhere in a backtranslated term, and they can correspond to arbitrary target terms. 
These constraints are simple to enforce because with $\lesssim$ we can achieve this by making backtranslated terms diverge whenever they try to use a \src{\unk} value. 
This is sufficient because the $\lesssim$ approximation trivially holds when the source term diverges.

On the other hand, $\src{p} = \precise$ will be used for the other direction of approximation: $\gtrsim$.
Recall that for this direction, termination of target terms in less than $n$ steps implies termination of source terms. 
In this case, the requirements on backtranslated terms are stronger: \src{\unk} is ruled out by the definition of $\valrel{\emuldv{n;p;\tat}}$ within depth \com{n}, i.e., we cannot reach \src{\unk} in the steps of the world.
\begin{example}[Relatedness with \imprecise]\label{ex:imprecise}
Consider the term $\src{t}\equiv\src{\inl{\pair{\unk_{42},\unk_{42}}}}$.
This term will be related to $\trg{\pair{t_1,t_2}}$ at pseudo-type $\src{\emuldv{43;\imprecise;\trgb{\tau_1\times\tau_2}}}$ for any terms $\trg{t_1}$ and $\trg{t_2}$ and in any world.
\end{example}

\begin{example}[Relatedness with \precise]\label{ex:precise}
Consider again the term $\src{t} \isdef \src{\inl{\pair{\unk_{42},\unk_{42}}}}$.
This term will still be related by $\src{\emuldv{43;\precise;\trgb{\tau\times\tau'}}}$ to $\trg{t} \isdef \trg{\pair{t_1,t_2}}$ for any terms $\trg{t_1}$ and $\trg{t_2}$, but only in worlds $\W$ such that $\stepsfun{\W} = 0$. 
More precisely, our specification will state that $(\W,\src{t},\trg{t}) \in \valrel{\emuldv{43;\precise;\trgb{\tau_1\times\tau_2}}}$ iff
\begin{equation*}
  (\W,\src{\pair{\unk_{42},\unk_{42}}},\trg{\pair{t_1,t_2}}) \in \valrel{\emuldv{42;\precise;\trgb{\tau_1}}\times \emuldv{42;\precise;\trgb{\tau_2}}}
\end{equation*}
By the definition of the logical relation, this requires in turn that $(\W,\src{\unk_{42}},\trg{t_1})$ and $(\W,\src{\unk_{42}},\trg{t_2})$ are in $\later\valrel{\emuldv{42;\precise;\trgb{\tau_1}}}$ and in $\later\valrel{\emuldv{42;\precise;\trgb{\tau_2}}}$ respectively.
However if $\stepsfun{\W} = 0$, then this is vacuously true by definition of the $\later$ operator, independent of the requirements of $\valrel{\emuldv{42;\precise;\cdot}}$.
\end{example}

The pseudotype for the \stlcem to \stlcf backtranslation (\emuldvfe{\cdot}) follows the same pattern as \uvalfe{\cdot}: it does not lose a step in the \equi{\mat} case (\Cref{fig:emuldv1}). 
At a cursory glance, it appears that a non-contractive \equi{\mat} ruins the well-foundedness of our induction as without decrementing our step index, a non-contractive type seems to infinitely recurse under this definition. 
Fortunately, however, the condition $\equi{v}\in\oftypeo{\tau}$, which with the fact that no values exist of non-contractive types prevents this concern from arising.
As before, the pseudotype for the \stlcem to \stlcim backtranslation (\emuldvic{\cdot}) follows the same approach as \emuldvfe{\cdot}.

\begin{figure}[!t]
	
  \begin{gather*}
\small
\begin{aligned}
  \emtotau{\emuldv{n;p;\tat}} =
    &\
    \uval{n;\tat}
  &
  \emtotau{\src{\psd{\tau_1}\to\psd{\tau_2}}} =
    &\
    \src{\emtotau{\src{\psd{\tau_1}}}\to\emtotau{\psd{\tau_2}}}
  \\
  \emtotau{\src{\Bools}} =
    &\
    \src{\Bools}
    &
    \emtotau{\src{\psd{\tau_1}\times\psd{\tau_2}}} =
      &\
      \src{\emtotau{\src{\psd{\tau_1}}}\times\emtotau{\psd{\tau_2}}}
  \\
  \emtotau{\src{\Units}} =
    &\
    \src{\Units}
    &
    \emtotau{\src{\psd{\tau_1}\uplus\psd{\tau_2}}} =
      &\
      \src{\emtotau{\src{\psd{\tau_1}}}\uplus\emtotau{\psd{\tau_2}}}
  \\ 
  \srctotrgty{\emuldv{n;p;\tat}} =
    &\
    \tat
  &
  \srctotrgty{\src{\psd{\tau_1}\to\psd{\tau_2}}} =
    &\
    \trg{
        \srctotrgty{\psd{\tau_1}}
        \tot
        \srctotrgty{\psd{\tau_2}}
    }
  \\
  \srctotrgty{\Units} =
    &\
    \Unitt
    &
    \srctotrgty{\src{\psd{\tau_1}\times\psd{\tau_2}}} =
      &\
      \trg{
          \srctotrgty{\psd{\tau_1}}
          \timest
          \srctotrgty{\psd{\tau_2}}
      }
  \\
  \srctotrgty{\Bools} =
    &\
    \Boolt
  &
  \srctotrgty{\src{\psd{\tau_1}\uplus\psd{\tau_2}}} =
    &\
    \trg{
        \srctotrgty{\psd{\tau_1}}
        \uplust
        \srctotrgty{\psd{\tau_2}}
    }
\end{aligned}
  \\
  \small
\begin{aligned}
  \emtotaufe{\emuldv{n;p;\equi{\tau}}} =
    &\
    \uvalfe{n;\equi{\tau}} 
    &&&
	\emtotaufe{\cdots} =
		&\
		\text{ as the other cases for }\emtotau{\cdot}
	\\
    \emtotauic{\emuldvic{n,\equi{\tau}}} =
	    &\
	    \uvalic{n,\equi{\tau}}
	&&&
		\emtotauic{\cdots} =
		    &\
		    \text{ as the other cases for }\emtotau{\cdot}
    \\
  	\srctoothty{\emuldv{n;p;\equi{\tau}}} =
  		&\ 
  		\equi{\tau}
    &&&
    \srctoothty{\cdots} =
    	&\
    	\text{ as the other cases for }\srctotrgty{\cdot}
    \\
    \srctotrgtyic{\emuldvic{n;p;\equi{\tau}}} =
    	&\
    	\equi{\tau}
    &&&
    	\srctotrgtyic{\cdots}
    	&\
    	\text{ as the other cases for }\srctotrgty{\cdot}
\end{aligned}
\end{gather*} 
\caption{%
    Missing auxiliary functions of the logical relation.
    }
  \label{fig:emuldv-aux}
\end{figure}
Finally, we can define function \emtotau{\cdot} that translate from source pseudo-types into plain source types and function \srctotrgty{\cdot}, that translates source pseudotypes into target types (\Cref{fig:emuldv-aux}).
As expected, these functions exists for all backtranslations and they follow the same pattern presented here; for the sake of brevity, we only report the names and types of the omitted ones.

\section{The Three Compilers and Their Backtranslations}\label{sec:comp} 
Our compilers (\Cref{sec:comp-def}) and backtranslations (\Cref{sec:backtr-def}) translate between languages as depicted in \Cref{fig:dia}.
After showing their formalisation and proving that they relate terms cross-language, this section proves the compilers are fully abstract (\Cref{sec:proofs-fa}).

\subsection{Compilers and Reflection of Fully-Abstract Compilation}\label{sec:comp-def}
The compilers (\Cref{fig:comp}) are all mostly homomorphic apart from what we describe below.
We overload the compilation notation and express the compiler for types and terms in the same way (we omit the compiler for types since it is the identity).
Compiler \compstlcfi{\cdot} translates \src{\fix{\cdot}} into the Z-combinator annotated with \iso{\fold{}} and \iso{\unfold{}} for \stlcim.
We cannot use the Y combinator since it does not work in call-by-value~\citep{max-embed,popl-journal}, but fortunately the Z-combinator does~\cite[Sec. 5]{bookpierce}.
Compiler \compstlcic{\cdot} erases \iso{\fold{}} and \iso{\unfold{}} annotations since \stlcem does not have them.
Compiler \compstlce{\cdot} is just the composition of the previous two.
\begin{figure}[!t]
\begin{center}
\small
\mytoprule{ \compstlcfi{\cdot} : \src{t} \to \iso{t} \quad \text{ and } \quad \compstlcie{\cdot} : \iso{t} \to \equi{t} \quad \text{ and } \quad \compstlcfe{\cdot} : \src{t} \to \equi{t}}

\begin{gather*}
\begin{aligned}
  \compstlcfi{\units} 
    =&\ 
    \unitt
      &
      \compstlcfi{\src{\lam{x:\tau}{t}}} 
        =&\ 
        \trg{\lamt{x:\compstlcfi{\tau}}{\compstlcfi{t}}}
          &
          \compstlcfi{\src{\projone{t}}} 
            =&\ 
            \trg{\projone{\compstlcfi{t}}}
              &
              \compstlcfi{\src{x}} 
                =&\ 
                \trg{x}           
  \\
  \compstlcfi{\trues} 
    =&\ 
    \truet
      &
      \compstlcfi{\src{t~t'}} 
        =&\ 
        \trg{\compstlcfi{t}~\compstlcfi{t'}}
          &
          \compstlcfi{\src{\projtwo{t}}} 
            =&\ 
            \trg{\projtwo{\compstlcfi{t}}}
              &
              \compstlcfi{\src{\inl{t}}} 
                =&\ 
                \trg{\inl{\compstlcfi{t}}}
  \\
  \compstlcfi{\falses} 
    =&\ 
    \falset
      &
      \compstlcfi{\src{\pair{t,t'}}} 
        =&\ 
        \trg{\pair{\compstlcfi{t},\compstlcfi{t'}}}
          &
          \compstlcfi{\src{t;t'}} 
            =&\ 
            \trg{\compstlcfi{t};\compstlcfi{t'}}
            &
            \compstlcfi{\src{\inr{t}}} 
              =&\ 
              \trg{\inr{\compstlcfi{t}}}
\end{aligned}
\\
\begin{aligned}
  \compstlcfi{  
    \src{\caseof{t}{\src{t'}}{\src{t''}}}
  } 
    =&\ 
    \trg{
      \caseof{\compstlcfi{t}}{\compstlcfi{t'}}{\compstlcfi{t''}}
    }
  \\
  \compstlcfi{\src{\iftes{t}{t'}{t''}}} 
    =&\ 
    \trg{\iftet{\compstlcfi{t}}{\compstlcfi{t'}}{\compstlcfi{t''}}}
\end{aligned}
\\
\begin{aligned}
  \compstlcfi{\src{\fix{\tau_1\to\tau_2} t}} 
    =&\ \\
    & \hspace*{-40pt}
    \trg{ 
      \left(\begin{aligned}
        &
        \lamt{\trg{f:\compstlcfi{(\tau_1\to\tau_2)\to\tau_1\to\tau_2}}}{ 
        \\
        &\
          \left(\begin{aligned}
            &
            \lamt{\trg{x:\trg{\matgent{\alpt}{\alpt\to\unboldmath\compstlcfi{\tau_1\to\tau_2}}}}}{ 
            \trg{f~(\lamt{y:\compstlcfi{\tau_1}}{( (\unfold{\matgent{\alpt}{\alpt\to\compstlcfi{\tau_1\to\tau_2}}}~x)~x )~y} } )}
          \end{aligned}\right)
        \\
        &\
          \trg{\fold{\matgent{\alpt}{\alpt\to\compstlcfi{\tau_1\to\tau_2}}}}~
        \\
        &\ \ 
          \left(\begin{aligned}
            &
            \lamt{\trg{x:\trg{\matgent{\alpt}{\alpt\to\compstlcfi{\tau_1\to\tau_2}}}}}{ 
            \trg{f~(\lamt{y:\compstlcfi{\tau_1}}{( (\unfold{\matgent{\alpt}{\alpt\to\compstlcfi{\tau_1\to\tau_2}}}~x)~x )~y} } )}
          \end{aligned}\right)
        }   
      \end{aligned}\right ) \compstlcfi{t}  
    }
\end{aligned}
\end{gather*}
\hrule
\begin{align*}
	\compstlcic{\cdots} =
		&
		\begin{aligned}
		 	&
		 	\text{omitted rules are}
		 	\\
		 	&
		 	\text{ as above}
		 \end{aligned} 
  &
  \compstlcic{\trg{\fold{\matt}~t}} 
    =&\ 
    \oth{\compstlcic{t}}
  &
  \compstlcic{\trg{\unfold{\matt}~t}} 
    =&\ 
    \oth{\compstlcic{t}}
\end{align*}
\hrule
\begin{align*}
  \compstlce{ t } = \compstlcic{\compstlcfi{ t } }, \text{ i.e., as above, without \iso{fold}/\iso{unfold} annotations in the compilation of \src{fix}} 
\end{align*}
\vspace{-1em}
\end{center}
\caption{Definition of our compilers.\label{fig:comp}}
\end{figure}

Correctness of the compilation (\Cref{thm:compstlc-sem-pres,thm:ic-compstlc-sem-pres,thm:compstlce-sem-pres-ic} below) is proven via a series of standard compatibility lemmas (\Cref{thm:compat-lem-lam}, we report just the case for lambda since the others follow the same structure).
These, in turn, rely on a series of standard results for these kinds of logical relations such as the fact that related terms plugged in related contexts are still related and antireduction (i.e., if two terms step to related terms, then they are themselves related).
\begin{lemma}[Compatibility for $\lambda$]\label{thm:compat-lem-lam}
    \[
    \text{if }
      \src{\Gamma},\src{x:\tau'} \vdash \src{t} \anylogreln{n} \trg{t} : \src{\tau}
    \text{ then }
      \src{\Gamma} \vdash \src{\lam{x:\tau'}{t}} \anylogreln{n} \trg{\lamt{x:\tat'}{t}} : \src{\tau'\to\tau}
    \]
\end{lemma}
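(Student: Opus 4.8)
The plan is to unfold \Cref{def:logrel-n-steps} on both sides of the implication and discharge its two conjuncts --- the typing obligations and the semantic obligation --- separately. For typing, the hypothesis gives $\emtotau{\src{\Gamma}},\src{x}:\emtotau{\src{\tau'}}\vdash\src{t}:\emtotau{\src{\tau}}$ together with the corresponding target judgement $\srctotrgty{\src{\Gamma}},\trg{x}:\srctotrgty{\src{\tau'}}\vdash\trg{t}:\srctotrgty{\src{\tau}}$; applying the typing rule for $\lambda$-abstractions (\Cref{tr:t-lam}) and its target analogue --- whose only side condition, closedness of the domain type, holds since all our types are closed --- yields $\emtotau{\src{\Gamma}}\vdash\src{\lam{x:\tau'}{t}}:\src{\emtotau{\src{\tau'}}\to\emtotau{\src{\tau}}}$ and the target counterpart, which are exactly $\emtotau{\src{\tau'\to\tau}}$ and $\srctotrgty{\src{\tau'\to\tau}}$ once we unfold $\emtotau{\cdot}$ and $\srctotrgty{\cdot}$ on arrow types.

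For the semantic part, fix a world $\W$ with $\stepsfun{\W}\leq n$ and related substitutions $(\W,\src{\gamma},\trgb{\gamma})\in\envrel{\src{\Gamma}}$; after $\alpha$-renaming so that $\src{x}$ is fresh for $\src{\gamma}$, substitution commutes with $\lambda$, so the goal becomes $(\W,\src{\lam{x:\tau'}{t\gamma}},\trg{\lamt{x:\tat'}{t\trgb{\gamma}}})\in\termrel{\src{\tau'\to\tau}}$. Since both sides are already values and $\futw$ is reflexive, instantiating the definitions of $\termrel{\cdot}$ and $\contrel{\cdot}$ at the same world reduces this to showing that the two $\lambda$-abstractions lie in $\valrel{\src{\tau'\to\tau}}$. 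The well-typedness clause $\oftypefi{\cdot}$ of that relation follows from the typing obligations above together with the standard substitution lemma for our type systems, using that every value in $\src{\gamma}$ (resp.\ $\trgb{\gamma}$) is closed and well-typed --- a basic consequence of membership in $\valrel{\cdot}$ --- and the syntactic-shape requirement (both sides are $\lambda$-abstractions of the prescribed form) is immediate.

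The core obligation is then: for every $\W'\strfutw\W$ and $(\W',\src{v'},\trg{v'})\in\valrel{\src{\tau'}}$ --- and, when $\anylogrel=\overlogrel$, also $\size{\trg{v'}}\leq\stepsfun{\W'}$, which we are simply handed and need not use --- the substituted bodies are related, i.e.\ $(\W',\src{t(\gamma\subs{v'}{x})},\trg{t(\trgb{\gamma}\subt{v'}{x})})\in\termrel{\src{\tau}}$, after noting that substitution composes. This follows by instantiating the hypothesis $\src{\Gamma},\src{x:\tau'}\vdash\src{t}\anylogreln{n}\trg{t}:\src{\tau}$ at the world $\W'$, which is admissible since $\stepsfun{\W'}<\stepsfun{\W}\leq n$, and at the substitution $\src{\gamma\subs{v'}{x}}$: membership in $\envrel{\src{\Gamma},\src{x:\tau'}}$ reduces to $(\W',\src{\gamma},\trgb{\gamma})\in\envrel{\src{\Gamma}}$, obtained from $(\W,\src{\gamma},\trgb{\gamma})\in\envrel{\src{\Gamma}}$ by monotonicity of $\envrel{\cdot}$ along $\futw$ (itself a consequence of downward closure of $\valrel{\cdot}$), together with $(\W',\src{v'},\trg{v'})\in\valrel{\src{\tau'}}$, which is exactly our assumption.

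This argument is essentially bookkeeping; there is no single hard step. The points needing care are getting the substitution-composition and $\alpha$-conversion identities right, extracting well-typedness from $\valrel{\cdot}$ and invoking the substitution lemma for the $\oftypefi{\cdot}$ clause, and appealing to the monotonicity/downward-closure lemmas for $\valrel{\cdot}$ and $\envrel{\cdot}$ when moving from $\W$ to $\W'$. The proof is uniform in the direction $\anylogrel\in\{\lesssim,\gtrsim\}$ (the extra size hypothesis in the $\gtrsim$ case only weakens what must be proved) and transfers verbatim to the \lrie{} and \lrfe{} relations, since the function-type clauses of their value relations have the same shape.
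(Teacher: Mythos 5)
Your proof is correct and follows the same route the paper takes (it only states the lemma and relies on the standard compatibility-lemma argument, proved in the Coq development): unfold \Cref{def:logrel-n-steps}, discharge the typing side conditions, reduce the term-relation goal for the two values to membership in $\valrel{\src{\tau'\to\tau}}$, and close the body obligation by instantiating the hypothesis at the strictly-future world with the extended substitution, using monotonicity of $\envrel{\cdot}$. Your observations that $\stepsfun{\W'}<\stepsfun{\W}\leq n$ keeps the instantiation admissible and that the extra size premise in the $\overlogrel$ case is merely an unused hypothesis are exactly right.
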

\begin{lemma}[\compstlc{\cdot} is semantics preserving]\label{thm:compstlc-sem-pres}
    \[
    \text{if }
    \src{\Gamma}\vdash\src{t}:\src{\tau}
    \text{ then }
    \src{\Gamma}\vdash\src{t}\anylogreln{n}\compstlc{\src{t}}:\src{\tau}
    \]
\end{lemma}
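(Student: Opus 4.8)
The plan is to prove this by induction on the typing derivation $\src{\Gamma}\vdash\src{t}:\src{\tau}$, with the step index $n$ held fixed throughout (since the conclusion quantifies over all $n\in\mb{N}$, treating an arbitrary one suffices). Each typing rule is discharged by the corresponding \emph{compatibility lemma} for the logical relation, of which \Cref{thm:compat-lem-lam} is the representative example: a compatibility lemma for a syntactic former takes logical-relatedness of the immediate subterms at step index $n$ (at the appropriate pseudo-types) and yields logical-relatedness of the composite term. Because the compiler of \Cref{fig:comp} is homomorphic on every construct except \src{fix}, in all those cases the target term is literally the same former applied to the compiled subterms, so the relevant compatibility lemma applies directly to the inductive hypotheses. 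The well-typedness side-conditions of \Cref{def:logrel-n-steps} are routine: $\emtotau{\psd{\Gamma}}\vdash\src{t}:\emtotau{\psd{\tau}}$ is just the hypothesis (viewing a plain source type/context as a pseudo-type/pseudo-context), and $\srctotrgty{\psd{\Gamma}}\vdash\compstlc{\src{t}}:\srctotrgty{\psd{\tau}}$ follows from a separate type-preservation lemma for the compiler.

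Concretely, the base cases (variables, \unitc, \truec, \falsec) follow from the compatibility lemmas for variables and the constants, which assert value-relatedness in any world since $\compstlc{\cdot}$ maps each constant to its counterpart. The structural cases (lambda, application, pairs, projections, case, if, sequencing) apply the inductive hypothesis to each premise and then the matching compatibility lemma; these in turn rely on the standard auxiliary facts about this style of relation — plugging related terms into related evaluation contexts yields terms related by the observation relation, monotonicity of all relations in the world, and antireduction (if two terms reduce to related terms in a future world, they are themselves related). The induction is carried out uniformly for $\anylogrel \in \{\lesssim,\gtrsim\}$, since every compatibility lemma is stated for the generic $\anylogrel$.

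The interesting case, and the main obstacle, is the \stlcf\ typing rule for \src{fix}. Here the compatibility lemma must show that the Z-combinator skeleton of \Cref{fig:comp}, applied to a function \trg{t} related to the source \src{t}, is related to $\src{\fix{\tau_1\to\tau_2} t}$ at arrow type. Two things make this delicate. First, the reductions are not lock-step: one application of the \stlcf\ fixpoint rule is matched by several target steps, one of which is a \stlcim\ fold-reduction eliminating the $\iso{\unfold{}}/\iso{\fold{}}$ pair that the combinator introduces, and these extra steps must be accounted for when appealing to the observation relation — which is precisely why both relations are step-indexed rather than merely inductive on source types. Second, the recursion in $\fix{}$ forces a L\"ob-style argument: to show the two sides related at a world $\W$, one needs them related at all strictly-future worlds $\W'$ with $\W' \strfutw \W$, which is supplied by a strong induction on $\stepsfun{\W}$ — each unfolding of the combinator strictly decreases the step budget, matching the decrement in the $\later$-modality that guards the recursive occurrences in $\valrel{\psd{\tau}\to\psd{\tau'}}$. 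Combining this inner induction with antireduction closes the case. A secondary wrinkle appears in the $\gtrsim$ direction, where $\valrel{\psd{\tau}\to\psd{\tau'}}$ carries the extra premise $\size{\trg{v'}}\leq\stepsfun{\W'}$ on arguments: one must verify that the combinator's intermediate closures respect this bound, which holds because $\size{\cdot}$ ignores lambda bodies and the worlds involved shrink appropriately — this is the same ingredient whose absence breaks \Cref{thm:inj-ext-sem-pres} in the old development. Once the \src{fix} compatibility lemma is established, the top-level induction goes through without further difficulty.
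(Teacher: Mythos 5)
Your proposal matches the paper's own route: the paper proves this lemma by induction on the typing derivation, discharging each rule with the corresponding compatibility lemma (of which \Cref{thm:compat-lem-lam} is the representative), and those lemmas in turn rest on the standard auxiliary facts (plugging related terms into related evaluation contexts, monotonicity, antireduction), with the \src{fix}/Z-combinator case being the only non-homomorphic one requiring a step-indexed argument. Your additional elaboration of the \src{fix} case and of the $\size{\trg{v'}}\leq\stepsfun{\W'}$ side condition in the $\gtrsim$ direction is consistent with how the paper describes the role of \newterm, so the proposal is correct and essentially the same proof.
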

\begin{lemma}[\compstlcic{\cdot} is semantics preserving]\label{thm:compstlce-sem-pres-ic}
    \[
    \text{if }
    \trgb{\Gamma}\vdash\trg{t}:\trg{\tat}
    \text{ then }
    \trgb{\Gamma}\vdash\trg{t}\anylogreln{n}\compstlcic{\trg{t}}:\trg{\tat}
    \]
\end{lemma}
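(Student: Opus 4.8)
The plan is to prove this by induction on the derivation of $\trgb{\Gamma}\vdash\trg{t}:\trg{\tat}$, for an arbitrary step index $n$ and with $\anylogrel$ standing for either $\lesssim$ or $\gtrsim$, reading the \stlcim type $\trg{\tat}$ as the (degenerate, $\emuldvtext$-free) pseudo-type that indexes \lrie. As in \Cref{thm:compstlc-sem-pres}, each typing rule is discharged by an appropriate \emph{compatibility lemma} for \lrie: one for each shared \com{\lambda}-calculus form --- $\unitc$, booleans and variables (immediate), and $\lambda$-abstraction (the \lrie twin of \Cref{thm:compat-lem-lam}, which additionally threads the size-bound side-condition that the $\gtrsim$ value relation imposes on function arguments), application, pairs, projections, injections, $\case$, conditionals, and sequencing --- together with two extra lemmas for the \stlcim-specific \Cref{tr:t-fold,tr:t-unfold}. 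Since $\compstlcic{\cdot}$ is a plain homomorphism on the shared forms, those compatibility lemmas are the standard ones, proved by unfolding the definitions of $\termrele{\cdot}$, $\contrele{\cdot}$ and $\valrele{\cdot}$ and invoking the routine structural facts about the relation already available (downward closure along $\futw$, ``related terms in related evaluation contexts are related terms'', and antireduction).

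All the content sits in the cases for \Cref{tr:t-fold,tr:t-unfold}, because $\compstlcic{\cdot}$ \emph{erases} the $\iso{\fold{\matt}}$ and $\iso{\unfold{\matt}}$ annotations ($\compstlcic{\iso{\fold{\matt}~t}}=\compstlcic{\iso{\unfold{\matt}~t}}=\compstlcic{\iso{t}}$) whereas \stlcim retains the reduction $\iso{\unfold{\matt}~(\fold{\matt}~v)}\red\iso{v}$ that has no analogue in \stlcem. Concretely I would establish a fold-compatibility lemma --- if $\trgb{\Gamma}\vdash\iso{t}\anylogreln{n}\equi{t}:\psdic{\tat\subt{\matt}{\alpt}}$ then $\trgb{\Gamma}\vdash\iso{\fold{\matt}~t}\anylogreln{n}\equi{t}:\psdic{\matt}$ --- and an unfold-compatibility lemma --- if $\trgb{\Gamma}\vdash\iso{t}\anylogreln{n}\equi{t}:\psdic{\matt}$ then $\trgb{\Gamma}\vdash\iso{\unfold{\matt}~t}\anylogreln{n}\equi{t}:\psdic{\tat\subt{\matt}{\alpt}}$. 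For the \stlcem well-typedness obligations in both, the induction hypothesis gives $\compstlcic{\iso{t}}$ the \stlcem type $\srctotrgtyic{\psdic{\tat\subt{\matt}{\alpt}}}$ (resp.\ $\srctotrgtyic{\psdic{\matt}}$); since $\srctotrgtyic{\cdot}$ is homomorphic on $\mu$ and commutes with type substitution, this type is $\tyeq$-related to $\srctotrgtyic{\psdic{\matt}}$ (resp.\ $\srctotrgtyic{\psdic{\tat\subt{\matt}{\alpt}}}$) by a single use of \Cref{tr:tyeq-mu-r} (resp.\ \Cref{tr:tyeq-mu-l}), so \Cref{tr:t-eq} supplies the required \stlcem typing. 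For the semantic half of fold-compatibility: plugging $\iso{\fold{\matt}~t}$ and $\equi{t}$ into related evaluation contexts, reduce $\iso{t}$ (which equi-terminates, being related to $\equi{t}$) to some $\iso{v'}$ and $\equi{t}$ to some $\equi{v}$; then $\iso{\fold{\matt}~t}$ reaches the value $\iso{\fold{\matt}~v'}$ in exactly as many steps, and the very definition of $\valrele{\psdic{\matt}}$ relates $(\iso{\fold{\matt}~v'},\equi{v})$ at $\psdic{\matt}$ precisely because $(\iso{v'},\equi{v})$ is related at $\psdic{\tat\subt{\matt}{\alpt}}$; antireduction then closes the case.

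The hard part is unfold-compatibility, where the $\iso{\unfold{\matt}}$ reduction introduces a one-step offset between the two languages. When $\iso{t}$ reduces to $\iso{\fold{\matt}~v'}$ --- so that the residual world relates $(\iso{\fold{\matt}~v'},\equi{v})$ at $\psdic{\matt}$, hence $(\iso{v'},\equi{v})$ at $\psdic{\tat\subt{\matt}{\alpt}}$ --- the \stlcim side performs the extra step $\iso{\unfold{\matt}~(\fold{\matt}~v')}\red\iso{v'}$ while the \stlcem side is already at $\equi{v}$. This offset has to be charged to the step index: in the $\lesssim$ direction it only decreases the number of \stlcim reduction steps observed and is matched by passing to a $\later$ (one-smaller) world before appealing to the value relation, whereas in the $\gtrsim$ direction it merely prolongs an already-terminating \stlcim execution by one step; in both cases it is absorbed by antireduction and the relation's downward closure along $\futw$. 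I expect the fiddly point to be checking that this $\later$/future-world accounting meshes with the deliberately \emph{asymmetric} (no-$\later$) shape of $\valrele{\psdic{\matt}}$ --- the feature that distinguishes \lrie from the other two relations --- but once the fold and unfold lemmas are in place the remaining cases are a mechanical induction in the style of \Cref{thm:compat-lem-lam}.
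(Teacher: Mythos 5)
Your proposal is correct and takes essentially the same route as the paper: the paper proves this lemma by induction on the typing derivation via per-rule compatibility lemmas for \lrie (only the $\lambda$ case, \Cref{thm:compat-lem-lam}, is displayed), relying on exactly the standard facts you cite (related terms plugged into related evaluation contexts, antireduction), with the only non-homomorphic content in the erased \iso{\fold{}}/\iso{\unfold{}} cases handled as you describe, using the no-$\later$ value relation at \psdic{\matt} and charging the extra \iso{\unfold{}} step to the step index. One minor caution: in those two cases the argument should go through the evaluation-context (biorthogonality) relation rather than literally ``reducing \iso{t} to a value'' (relatedness does not entail termination), but since you already list that closure property among your tools this is a phrasing issue, not a gap.
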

\begin{lemma}[\compstlcfe{\cdot} is semantics preserving]\label{thm:ic-compstlc-sem-pres}
    \[
    \text{if }
    \src{\Gamma}\vdash\src{t}:\src{\tau}
    \text{ then }
    \src{\Gamma}\vdash\src{t}\anylogreln{n}\compstlcfe{\src{t}}:\src{\tau}
    \]
\end{lemma}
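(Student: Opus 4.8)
The plan is to prove this in the same ``fundamental-lemma'' style used for \Cref{thm:compstlc-sem-pres} and \Cref{thm:compstlce-sem-pres-ic}: by induction on the typing derivation $\src{\Gamma}\vdash\src{t}:\src{\tau}$, discharging each case with a corresponding \emph{compatibility lemma} for the \lrfe relation. Since $\src{\tau}$ is a genuine source type (it contains no $\com{\emuldvtext}$) and the compiler acts as the identity on types, the well-typedness side-conditions of (the \lrfe analogue of) \Cref{def:logrel-n-steps} reduce to the hypothesis $\src{\Gamma}\vdash\src{t}:\src{\tau}$ together with $\oth{\Gamma}\vdash\compstlcfe{\src{t}}:\oth{\tau}$, the latter being a routine type-preservation property of $\compstlcfe{\cdot}$; the only delicate point there is that the call-by-value Z-combinator emitted for $\src{\fix{\tau_1\to\tau_2} t}$ typechecks in \stlcem because the self-application $\oth{x~x}$ is well-typed at the contractive recursive type $\equi{\matgen{\alpha}{\alpha\to\cdots}}$ via the type-equality rule \Cref{tr:t-eq} and \Cref{tr:tyeq-mu-l}, rather than via explicit \iso{fold}/\iso{unfold} as in \stlcim. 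One could instead obtain the statement by ``horizontally composing'' \Cref{thm:compstlc-sem-pres} with \Cref{thm:compstlce-sem-pres-ic} through a transitivity lemma linking \lrfi, \lrie and \lrfe, but reconciling the step indices---and, in the $\gtrsim$ direction, the $\size{\cdot}$-bounds---across three languages at once makes the direct proof cleaner, which is also why the $\stlcf$-to-$\stlcem$ case is presented standalone.

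First I would set up the standard infrastructure for the \lrfe logical relation, mirroring what is already available for \lrfi: monotonicity of $\valrel{\cdot}$, $\contrel{\cdot}$, $\termrel{\cdot}$ and $\envrel{\cdot}$ along the future-world order $\futw$ (using the $\later$-guards on the structural cases); the inclusion $\valrel{\psd{\tau}}\subseteq\termrel{\psd{\tau}}$ of related values into related terms; compositionality of evaluation contexts, i.e.\ that plugging a term related in $\contrel{\cdot}$ inside an evaluation-context former again yields a related context; and an antireduction (forward/backward closure under reduction) lemma stating roughly that if $\src{t}\reds\src{t'}$, $\oth{t}\redstaro\oth{t'}$ and $(\W',\src{t'},\oth{t'})\in\termrel{\psd{\tau}}$ for a suitable later world $\W'$, then $(\W,\src{t},\oth{t})\in\termrel{\psd{\tau}}$. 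These depend only on the definitions of \Cref{fig:logrels-worlds} and \Cref{fig:logrel-main} and are essentially identical to the corresponding \lrfi facts, since the \lrfe relation is stated to be ``analogous to the presented ones''. With these in hand I would then prove one compatibility lemma per syntactic form---variables (directly from $\envrel{\cdot}$), $\units$, $\trues$, $\falses$, $\lambda$ (this is \Cref{thm:compat-lem-lam}), application, pairing, projections, injections, $\case$, conditionals ($\com{if}$-$\com{then}$-$\com{else}$), and sequencing---each asserting that relatedness of the immediate subterms (at the appropriate pseudo-types) implies relatedness of the composite; the proofs proceed by evaluating the subterms inside suitable evaluation contexts, using compositionality of $\contrel{\cdot}$ and antireduction. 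In the $\overlogrel$ (i.e.\ $\gtrsim$) direction each of these must additionally thread through the $\size{\cdot}$-bound side-condition coming from the function-value clause (``if $\anylogrel=\overlogrel$ then $\size{\oth{v'}}\leq\stepsfun{\W'}$''); for the homomorphic cases this is immediate because $\compstlcfe{\cdot}$ does not enlarge values beyond their source shape, but it has to be carried explicitly.

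The main obstacle is the compatibility lemma for $\src{\fix{}}$: from $\src{\Gamma}\vdash\src{t}\anylogreln{n}\oth{t}:\src{(\tau_1\to\tau_2)\to\tau_1\to\tau_2}$ conclude $\src{\Gamma}\vdash\src{\fix{\tau_1\to\tau_2} t}\anylogreln{n}\compstlcfe{\src{\fix{\tau_1\to\tau_2} t}}:\src{\tau_1\to\tau_2}$, whose target side is the Z-combinator applied to $\oth{t}$. The argument is a L\"ob-style induction on $\stepsfun{\W}$: after letting $\src{t}$ and $\oth{t}$ evaluate to related function values, one shows that the source term $\src{\fix{\tau_1\to\tau_2}(\lam{x:\tau_1\to\tau_2}{t'})}$ (itself a redex that unrolls in one step) and the target value exposed by the finitely many administrative $\beta$-reductions of the Z-combinator are related in $\valrel{\src{\tau_1\to\tau_2}}$ at $\W$, \emph{assuming} they are related at every $\W'\strfutw\W$. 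Unfolding the function-value relation then means taking related arguments in such a $\W'$, performing one source unrolling step together with the matching target $\beta$-steps that re-expose the recursive occurrence, and appealing to (a)~the outer induction hypothesis on $\src{t}$ for the body $\src{t'}$ and (b)~the L\"ob hypothesis for the residual $\src{\fix{}}$ occurrence, closing with antireduction. Two points need care. The bookkeeping of administrative reductions differs from the \stlcim case: in \stlcem there is no reduction analogous to \Cref{tr:e-i-fold}, so the self-application inside the combinator reduces directly; this removes a few target steps, which is harmless because the observation relations only demand termination in \emph{some} number of steps on the opposite side---all that matters is that each source unrolling still costs at least one step, so the world can be strictly decremented and the L\"ob induction stays well-founded. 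Second, in the $\overlogrel$ direction the $\size{\cdot}$-bound on the argument must be used and re-established for the recursive call; this is precisely the situation that motivated \newterm (cf.\ \Cref{ex:need-newterm}) and where the POPL'21 development had a gap affecting \Cref{thm:inj-ext-sem-pres} and, by the same token, this lemma, and it is the \emph{decreasing} size budget of $\shrinks{\cdot}$/$\shrinko{\cdot}$ (the bound $n$ shrinks as execution proceeds, cf.\ \Cref{tr:shr-t}) that makes the recursive-unrolling case go through. Once all compatibility lemmas are available, the induction on $\src{\Gamma}\vdash\src{t}:\src{\tau}$ is routine, and quantifying over all $n$ yields the $\anylogrel$ form by \Cref{def:logrelfe}.
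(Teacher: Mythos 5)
Your proposal matches the paper's approach: the paper proves all three semantics-preservation lemmas (including this one) by induction on the typing derivation, discharging each case with a standard compatibility lemma for the corresponding cross-language logical relation, which in turn rests on the usual infrastructure (monotonicity, plugging related terms into related contexts, antireduction), with the $\src{fix}$/Z-combinator case handled by the step-indexed (L\"ob-style) argument and the $\overlogrel$-direction size-bound side condition threaded through as you describe. Your extra discussion of composing through \stlcim versus proving it directly is consistent with the paper, which also presents the \stlcf-to-\stlcem result standalone rather than deriving it by transitivity of the logical relations.
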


Since fully-abstract compilation requires reasoning about program contexts, we extend the compiler to operate on them too.
This follows the same structure of the compilers above and therefore we omit this definition.
Correctness of the compiler scales to contexts too (\Cref{thm:compstlc-sem-pres-ctx,thm:compstlc-sem-pres-ctx-ie,thm:compstlc-sem-pres-ctx-fe}).
\begin{lemma}[\compstlc{\cdot} is semantics preserving for contexts]\label{thm:compstlc-sem-pres-ctx}\hfill

  \[
      \text{if }
    \vdash\ctxs:\src{\Gamma,\tau}\to\src{\Gamma',\tau'}
    \text{ then }
    \vdash\ctxs\anylogreln{n}\compstlc{\ctxs}:\src{\Gamma,\tau}\to\src{\Gamma',\tau'}
  \]
\end{lemma}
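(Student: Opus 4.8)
The plan is to prove this by induction on the derivation of $\vdash\ctxs:\src{\Gamma,\tau}\to\src{\Gamma',\tau'}$, using the context typing rules of \Cref{sec:typ}, in exactly the same style as the term-level result \Cref{thm:compstlc-sem-pres}. Unfolding \Cref{def:logrel-ctx}, two things must be established. First, that $\compstlc{\ctxs}$ (the compiler of \Cref{fig:comp} extended homomorphically to program contexts) has the signature $\vdash\compstlc{\ctxs}:\srctotrgty{\src{\Gamma}},\srctotrgty{\src{\tau}}\to\srctotrgty{\src{\Gamma'}},\srctotrgty{\src{\tau'}}$; since the compiler preserves every term typing rule, a routine induction on the context typing derivation gives this (the $\src{fix}$-context case using only that the emitted Z-combinator is well-typed). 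Second, that for all $\src{t},\trg{t}$ with $\psd{\Gamma}\vdash\src{t}\anylogreln{n}\trg{t}:\psd{\tau}$ we get $\psd{\Gamma'}\vdash\ctxhs{t}\anylogreln{n}\ctxht{t}:\psd{\tau'}$. So I would fix such a related pair $\src{t},\trg{t}$ and prove this implication by a second induction, on the structure of $\ctxs$.

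For the base case $\ctxs=[\cdot]$ (\Cref{tr:stlcf-typectx-hole}) we have $\compstlc{[\cdot]}=[\cdot]$, $\src{\Gamma}=\src{\Gamma'}$, $\src{\tau}=\src{\tau'}$, and the conclusion is literally the hypothesis. In each inductive case $\ctxs$ has the shape of some term former $F$ applied to a strict sub-context $\ctxs'$ together with (possibly) some ordinary sub-terms; the context typing rule for $F$ supplies a typing for $\ctxs'$ — with the hole still of type $\src{\tau}$ and input environment $\src{\Gamma}$, but with an appropriately extended environment when $F$ binds a variable, as in \Cref{tr:stlcf-typectx-lam} or the case-branch rules — together with typings for the ordinary sub-terms. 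The inner induction hypothesis applied to $\ctxs'$ relates $\ctxs'$ with $\src{t}$ plugged into its hole to $\compstlc{\ctxs'}$ with $\trg{t}$ plugged in, at the corresponding pseudo-type; \Cref{thm:compstlc-sem-pres} relates each ordinary sub-term to its compilation; and then the term-level compatibility lemma for $F$ — the family of which \Cref{thm:compat-lem-lam} is the representative case (compatibility for application, pairing, projections, injections, case, conditionals, sequencing) — assembles these into $\psd{\Gamma'}\vdash\ctxhs{t}\anylogreln{n}\ctxht{t}:\psd{\tau'}$. The analogous lemmas for the other two compilers go through identically with their corresponding compatibility lemmas.

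The one genuinely non-routine case is $\ctxs=\src{\fix{\tau\to\tau}\ctxs'}$ (\Cref{tr:stlcf-typectx-fix}), because $\compstlc{\cdot}$ is not homomorphic on $\src{fix}$: it emits the $\stlcim$ Z-combinator wrapped in $\iso{\fold{}}$/$\iso{\unfold{}}$ annotations (\Cref{fig:comp}). This case needs the $\src{fix}$-compatibility lemma — if $\src{s}\anylogreln{n}\trg{s}$ at $\src{(\tau\to\tau)\to\tau\to\tau}$ then $\src{\fix{\tau\to\tau} s}$ is related to the Z-combinator applied to $\trg{s}$ at $\src{\tau\to\tau}$ — and this is precisely the step-indexed heart of \Cref{thm:compstlc-sem-pres}, proved by step-indexed (L\"ob-style) reasoning in which each $\iso{\unfold{}}\,\iso{\fold{}}$ reduction of the Z-combinator (\Cref{tr:e-i-fold}) accounts for the move to a strictly-future world demanded by the function-type case of the value relation. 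For the present lemma I would simply invoke that lemma; the only content beyond \Cref{thm:compstlc-sem-pres} is that the $\src{fix}$ former here sits \emph{outside}, not inside, the recursion, which the inner induction hypothesis handles exactly as in every other case. I therefore expect the proof of this lemma to be essentially mechanical once the term-level infrastructure (the compatibility lemmas plus the standard plugging and antireduction lemmas) is in hand; the hardest ingredient — the $\src{fix}$-compatibility argument — has already been discharged for \Cref{thm:compstlc-sem-pres}.
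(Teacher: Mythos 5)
Your proposal is correct and matches the paper's approach: the paper omits the explicit proof but states that compiler correctness (for terms and, by the same structure, for contexts) is established via the standard compatibility lemmas, so the intended argument is exactly your induction on the context typing derivation, invoking \Cref{thm:compstlc-sem-pres} for embedded sub-terms and the per-construct compatibility lemmas (including the $\src{fix}$/Z-combinator one already needed for \Cref{thm:compstlc-sem-pres}) to assemble each case. Nothing is missing.
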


\begin{lemma}[\compstlcie{\cdot} is semantics preserving for contexts]\label{thm:compstlc-sem-pres-ctx-ie}\hfill

  \[
      \text{if }
    \vdash\ctxt:\iso{\Gamma,\tat}\to\iso{\Gamma',\tat'}
    \text{ then }
    \vdash\ctxt\anylogreln{n}\compstlcie{\ctxt}:\iso{\Gamma,\tat}\to\iso{\Gamma',\tat'}
  \]
\end{lemma}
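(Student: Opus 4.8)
\emph{Proof plan.} The plan is to proceed by induction on the derivation of $\vdash\ctxt:\iso{\Gamma,\tat}\to\iso{\Gamma',\tat'}$, exactly as one proves the analogous \Cref{thm:compstlc-sem-pres-ctx} for \compstlc{\cdot}. Unfolding the ($n$-bounded form of the) context relation in \Cref{def:logrel-ctx-ei}, what remains to show is that for every pair $\iso{t},\equi{t}$ with $\psdic{\Gamma}\vdash\iso{t}\anylogreln{n}\equi{t}:\psdic{\tat}$ we have $\psdic{\Gamma'}\vdash\ctxht{t}\anylogreln{n}(\compstlcie{\ctxt})\hole{\equi{t}}:\psdic{\tat'}$. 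Since \compstlcie{\cdot} is homomorphic, the compiled context decomposes along the same syntactic structure as \ctxt, with $\equi{t}$ at the hole and each non-hole subterm $\iso{t_i}$ of \ctxt replaced by $\compstlcie{\iso{t_i}}$; by \Cref{thm:compstlce-sem-pres-ic} every such subterm is related to its compilation, so the induction can be driven using the \lrie-compatibility lemmas (the analogues of \Cref{thm:compat-lem-lam} for the remaining term formers) together with the standard closure properties of the relation (related terms plugged into related evaluation contexts stay related; anti-reduction).

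Concretely, the base case \Cref{tr:stlcf-typectx-hole} is immediate, since the compiled context is again a hole and the goal is exactly the hypothesis. Each inductive case for a context former common to all three languages (\Cref{tr:stlcf-typectx-lam} and the rules following it) is handled uniformly: apply the induction hypothesis to the sub-context, apply \Cref{thm:compstlce-sem-pres-ic} to any sibling subterms, and conclude with the compatibility lemma for that former. There is no case for a fixpoint context former (\Cref{tr:stlcf-typectx-fix} is a \stlcf rule, absent from \stlcim); the context formers specific to \stlcim are only $\iso{\fold{}}$ and $\iso{\unfold{}}$, covered by \Cref{tr:stlcm-typectx-fold} and \Cref{tr:stlcm-typectx-unfold}.

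These last two cases are the genuinely new ones, because \compstlcie{\cdot} erases $\iso{\fold{}}$ and $\iso{\unfold{}}$ annotations, so $(\compstlcie{\ctxt})\hole{\equi{t}}$ is simply the compilation of the sub-context with $\equi{t}$ plugged in, with no construct wrapped around it. They rest on two compatibility lemmas that must be available (and that are exactly what also makes \Cref{thm:compstlce-sem-pres-ic} go through): that $\psdic{\Gamma}\vdash\iso{u}\anylogreln{n}\equi{u}:\psdic{\tat\subt{\matt}{\alpt}}$ implies $\psdic{\Gamma}\vdash\iso{\fold{\matt}~u}\anylogreln{n}\equi{u}:\psdic{\matt}$, and that $\psdic{\Gamma}\vdash\iso{u}\anylogreln{n}\equi{u}:\psdic{\matt}$ implies $\psdic{\Gamma}\vdash\iso{\unfold{\matt}~u}\anylogreln{n}\equi{u}:\psdic{\tat\subt{\matt}{\alpt}}$. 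The $\iso{\fold{}}$ lemma follows directly from the shape of \valrele{\psdic{\matt}} in \lrie, which declares a folded \stlcim value related to an equi-recursive value precisely when the stripped value is related at the unfolded type \emph{in the same world}, combined with the usual evaluate-the-subterm argument.

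The $\iso{\unfold{}}$ lemma is the one I expect to be the main obstacle. The subtlety is a step mismatch: $\iso{\unfold{\matt}~(\fold{\matt}~w)}\red\iso{w}$ is a genuine \stlcim reduction, whereas the compiled \stlcem term has already discarded both annotations and does not move. The argument proceeds by biorthogonality: given related evaluation contexts $\evalctxt,\evalctxo$ at $\psdic{\tat\subt{\matt}{\alpt}}$, one shows that $\evalctxt$ precomposed with an $\iso{\unfold{}}$-frame, paired with $\evalctxo$, is a pair of related evaluation contexts at $\psdic{\matt}$; plugging in related values, the \stlcim one must have the form $\iso{\fold{\matt}~w'}$ with $\iso{w'}$ related at the unfolded type, a single $\iso{\unfold{}}$-reduction reaches $\evalctxht{w'}$, and then the evaluation-context hypothesis followed by anti-reduction closes the case. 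This is exactly where the ``same world'' (rather than strictly-future) treatment of recursive types in \lrie is needed: it supplies precisely the one step of slack that the $\iso{\unfold{}}$-reduction consumes. The bookkeeping is asymmetric in the two instantiations of $\anylogrel$: when $\anylogrel$ is $\underlogrel$ the extra \stlcim step costs step budget but is absorbed by the surplus, and when it is $\overlogrel$ it is free, since the bounded side is then the \stlcem term, which takes no step. Everything else is routine manipulation of worlds and step indices.
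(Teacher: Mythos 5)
Your proposal is correct and matches the paper's (largely omitted) argument: the paper proves context-level semantics preservation exactly by induction on the context typing derivation, reusing the compatibility lemmas that underlie term-level preservation of \compstlcic{\cdot}, with the only \stlcim-specific cases being the erased \iso{\fold{}}/\iso{\unfold{}} frames handled as you describe (same-world value relation at \psdic{\matt} for \iso{\fold{}}, biorthogonality plus anti-reduction on the \stlcim side for \iso{\unfold{}}). No gaps.
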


\begin{lemma}[\compstlcfe{\cdot} is semantics preserving for contexts]\label{thm:compstlc-sem-pres-ctx-fe}\hfill

  \[
      \text{if }
    \vdash\ctxs:\src{\Gamma,\tau}\to\src{\Gamma',\tau'}
    \text{ then }
    \vdash\ctxs\anylogreln{n}\compstlcfe{\ctxs}:\src{\Gamma,\tau}\to\src{\Gamma',\tau'}
  \]
\end{lemma}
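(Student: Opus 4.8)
The plan is to prove this by induction on the derivation of $\vdash\ctxs:\src{\Gamma},\src{\tau}\to\src{\Gamma'},\src{\tau'}$, following exactly the shape of the proofs of \Cref{thm:compstlc-sem-pres-ctx} and \Cref{thm:compstlc-sem-pres-ctx-ie} (the argument is uniform in the step index, so I suppress it below). Unfolding \Cref{def:fe-logrel-ctx}, there are three obligations. The first, that $\ctxs$ is a well-typed \stlcf context, is literally the hypothesis, reading \stlcf types as pseudo-types as the paper allows. The second, that $\compstlcfe{\ctxs}$ is a well-typed \stlcem context of the translated type $\srctoothty{\psdfe{\Gamma}},\srctoothty{\psdfe{\tau}}\to\srctoothty{\psdfe{\Gamma'}},\srctoothty{\psdfe{\tau'}}$, follows from a routine ``the compiler preserves context typing'' lemma, proved by the same induction and using that the type compiler is the identity. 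The third, and the heart of the matter, is that for every $\src{t},\oth{t}$ with $\psdfe{\Gamma}\vdash\src{t}\anylogrel\oth{t}:\psdfe{\tau}$ we have $\psdfe{\Gamma'}\vdash\ctxhs{t}\anylogrel\ctxho{t}:\psdfe{\tau'}$, where $\ctxo=\compstlcfe{\ctxs}$.

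I would discharge this third obligation by a nested induction on the context typing derivation. The base case \textsc{Type-Ctx-Hole} is immediate: both $\ctxs$ and $\compstlcfe{\ctxs}$ are holes, so $\ctxhs{t}=\src{t}$ and $\ctxho{t}=\oth{t}$, and the goal is exactly the assumption on $\src{t},\oth{t}$. Each inductive case corresponds to one term/context constructor (lambda, application, pairs, injections, projections, case analysis, conditionals, sequencing, and $\src{fix}$) and is dispatched by the cross-language compatibility lemma for that constructor, i.e.\ the \lrfe analogue of \Cref{thm:compat-lem-lam}, with one such lemma per head former. Concretely, the induction hypothesis gives that the recursive sub-context, plugged with $\src{t}$ resp.\ $\oth{t}$, is related at the appropriate pseudo-type; \Cref{thm:ic-compstlc-sem-pres} (semantics preservation for $\compstlcfe{\cdot}$ on terms) gives that every sibling sub-term is related to its compilation; and feeding these into the compatibility lemma for the head constructor yields the two plugged contexts related at $\psdfe{\tau'}$. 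These compatibility lemmas themselves rest on the standard step-indexed-logical-relation toolkit for \lrfe --- world monotonicity, antireduction, and the fact that related terms in related evaluation contexts form related observations --- established for \lrfe just as for \lrfi.

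The step I expect to require the most care is the \stlcf case for $\src{fix}$: it is compiled (via $\compstlcfi{\cdot}$ then $\compstlcic{\cdot}$, i.e.\ by $\compstlcfe{\cdot}$) to a Z-combinator built on an equi-recursive type rather than to a primitive, and in \stlcem there is no fold/unfold reduction to ``pay'' for each unrolling of the recursion, so the step accounting and the use of the $\later{}$ modality in the corresponding compatibility lemma is more delicate than in the \stlcim setting. This difficulty is, however, entirely confined to (and already absorbed by) the prerequisites of \Cref{thm:ic-compstlc-sem-pres}; once that term-level statement is in hand, the context-level result is a mechanical congruence induction. The one new ingredient relative to prior work --- the extra size bound from \newterm{} that the $\overlogrel$ direction now carries --- does not surface at this level, since everything is phrased uniformly in the metavariable $\anylogrel$ and the size reasoning lives entirely inside the value relation and the term-level lemmas.
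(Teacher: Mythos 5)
Your proposal is correct and matches the paper's (largely omitted) argument: the paper proves context-level semantics preservation for each compiler by the same congruence induction over the context typing derivation, discharging each constructor case with the corresponding \lrfe compatibility lemma and the term-level result (\Cref{thm:ic-compstlc-sem-pres}) for sibling sub-terms, with the hole case following directly from the assumption on the plugged terms. Your remarks about the $\src{fix}$/Z-combinator case and the \newterm{} bound being confined to the term-level and value-relation layers are consistent with how the paper organizes these proofs.
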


With these results, we can already prove the reflection direction of fully-abstract compilation (\Cref{thm:stlc-refl,thm:stlc-refl-fe,thm:stlc-refl-ic}).
The proof follows the structure depicted in the left part of \Cref{fig:fa-refl}.
\begin{theorem}[\compstlc{\cdot} reflects equivalence]\label{thm:stlc-refl}
  \[
    \text{If }
        \trge \vdash \compstlc{t_1} \ceqt \compstlc{t_2} : \compstlc{\tau}
      \text{ then }
        \srce \vdash \src{t_1} \ceqs \src{t_2} : \src{\tau}
  \]
\end{theorem}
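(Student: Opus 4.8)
The plan is to instantiate the reflection diagram on the left of \Cref{fig:fa-refl} and argue directly (no contraposition needed): fix an arbitrary source context $\ctxs$ with $\vdash\ctxs:\srce,\src{\tau}\to\srce,\src{\tau'}$, assume $\ctxhs{t_1}\termsl$, and aim to derive $\ctxhs{t_2}\termsl$; the converse implication then holds by swapping $\src{t_1}$ and $\src{t_2}$. The residual well-typedness conditions of \Cref{def:ceq} (that $\src{t_1},\src{t_2}$ have type $\src{\tau}$) will follow from the corresponding conditions on $\compstlc{t_1},\compstlc{t_2}$ in the hypothesis together with type preservation of the compiler.

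First I would invoke semantics preservation. From \Cref{thm:compstlc-sem-pres}, for every $n$ and for both instantiations $\anylogrel\in\{\lesssim,\gtrsim\}$ we obtain $\srce\vdash\src{t_i}\anylogreln{n}\compstlc{t_i}:\src{\tau}$, hence $\srce\vdash\src{t_i}\anylogrel\compstlc{t_i}:\src{\tau}$; from \Cref{thm:compstlc-sem-pres-ctx} likewise $\vdash\ctxs\anylogrel\compstlc{\ctxs}:\srce,\src{\tau}\to\srce,\src{\tau'}$. Feeding the term relations into the context relation through \Cref{def:logrel-ctx} yields $\srce\vdash\ctxhs{t_i}\anylogrel\compstlc{\ctxs}\hole{\compstlc{t_i}}:\src{\tau'}$ in both directions; since the compiler on contexts is homomorphic, $\compstlc{\ctxs}\hole{\compstlc{t_i}}=\compstlc{\ctxs\hole{t_i}}$, so these are precisely the compilations of $\ctxhs{t_i}$, now closed terms at type $\src{\tau'}$.

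Then comes the termination transfer. From $\ctxhs{t_1}\termsl$ I would use \Cref{thm:term-def-rel} to get some $m$ with $\src{\ctxhs{t_1}\shrinks{m}v}$; the first clause of adequacy \Cref{thm:log-rel-adeq-both} applied to $\srce\vdash\ctxhs{t_1}\underlogreln{m}\compstlc{\ctxs\hole{t_1}}:\src{\tau'}$ (with the index taken equal to $m$) gives $\compstlc{\ctxs\hole{t_1}}\termt$. Now $\compstlc{\ctxs}$ is a well-typed target context of type $\trge,\compstlc{\tau}\to\trge,\compstlc{\tau'}$, so the hypothesis $\trge\vdash\compstlc{t_1}\ceqt\compstlc{t_2}:\compstlc{\tau}$ transports this to $\compstlc{\ctxs\hole{t_2}}\termt$. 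A second use of \Cref{thm:term-def-rel} produces $m'$ with $\compstlc{\ctxs\hole{t_2}}\shrinkt{m'}$, and the second clause of adequacy \Cref{thm:log-rel-adeq-both} applied to $\srce\vdash\ctxhs{t_2}\overlogreln{m'}\compstlc{\ctxs\hole{t_2}}:\src{\tau'}$ delivers $\ctxhs{t_2}\termsl$, completing the argument.

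I expect no genuine obstacle here — this is the easy half of full abstraction — but the one spot needing care, compared to prior work, is exactly this double hop through \Cref{thm:term-def-rel}: because our observation relation is phrased with \newterm rather than raw termination, ``$\termsl$'' must be weakened to ``$\shrinks{m}$'' before invoking adequacy and ``$\shrinkt{m'}$'' strengthened back to ``$\termt$'' afterwards, with the step index re-chosen at each use. This is legitimate precisely because \Cref{thm:compstlc-sem-pres} and \Cref{thm:compstlc-sem-pres-ctx} hold at every $n$, so the relations $\anylogreln{m}$ and $\anylogreln{m'}$ are always available for whatever $m,m'$ the termination measures produce. The proofs of \Cref{thm:stlc-refl-fe} and \Cref{thm:stlc-refl-ic} would proceed identically, substituting the $\lrfe$ and $\lrie$ versions of the semantics-preservation lemmas and of adequacy.
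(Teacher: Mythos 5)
Your proposal is correct and follows essentially the same route the paper intends: the paper proves reflection exactly as in the left diagram of \Cref{fig:fa-refl}, using \Cref{thm:compstlc-sem-pres} and \Cref{thm:compstlc-sem-pres-ctx} together with \Cref{def:logrel-ctx}, converting between termination and \newterm via \Cref{thm:term-def-rel}, and transferring termination through adequacy (\Cref{thm:log-rel-adeq-both}) before and after applying the target contextual equivalence. Your explicit handling of the $\termsl$/$\shrinks{m}$ conversions and the per-use choice of step index is precisely the intended bookkeeping.
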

\begin{theorem}[\compstlcic{\cdot} reflects equivalence]\label{thm:stlc-refl-ic}
  \[
    \text{If }
        \othe \vdash \compstlcic{t_1} \ceqo \compstlcic{t_2} : \compstlcic{\tau}
      \text{ then }
        \trge \vdash \trg{t_1} \ceqt \trg{t_2} : \trg{\tat}
    \]
\end{theorem}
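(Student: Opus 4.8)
The plan is to run the standard reflection argument sketched in the left half of \Cref{fig:fa-refl}, using the cross-language logical relation \lrie{} together with the facts that every \stlcim{} term and program context is logically related to its compilation, and that related terms equi-terminate (adequacy). Assume $\othe \vdash \compstlcic{t_1} \ceqo \compstlcic{t_2} : \compstlcic{\tat}$, where $\trge\vdash\trg{t_1}:\trg{\tat}$ and $\trge\vdash\trg{t_2}:\trg{\tat}$ (these are standing assumptions, needed already for the compilations to be defined, and they carry over to the conclusion). To prove $\trge \vdash \trg{t_1} \ceqt \trg{t_2} : \trg{\tat}$, by \Cref{def:ceq} fix an arbitrary context $\vdash \ctxt : \trge,\trg{\tat} \to \trge,\trg{\tau'}$. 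Since the argument below is symmetric in $\trg{t_1}$ and $\trg{t_2}$ (using symmetry of $\ceqo$), it suffices to show that $\ctxht{t_1}\termt$ implies $\ctxht{t_2}\termt$.

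Suppose $\ctxht{t_1}\termt$. The first step passes to \newterm: by \Cref{thm:term-def-rel} there is $\com{n}\in\mb{N}$ with $\ctxht{t_1}\shrinkt{n}$. Instantiating semantics preservation for terms (\Cref{thm:compstlce-sem-pres-ic}) and for program contexts (\Cref{thm:compstlc-sem-pres-ctx-ie}) at level $\com{n}$ and reading the metavariable $\anylogrel$ as $\underlogrel$, we obtain $\trge\vdash\trg{t_1}\underlogreln{n}\compstlcic{t_1}:\trg{\tat}$ and $\vdash\ctxt\underlogreln{n}\compstlcic{\ctxt}:\trge,\trg{\tat}\to\trge,\trg{\tau'}$; plugging the former into the latter via \Cref{def:logrel-ctx-ei} gives $\trge\vdash\ctxht{t_1}\underlogreln{n}\compstlcic{\ctxt}\hole{\compstlcic{t_1}}:\trg{\tau'}$. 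Adequacy in the $\underlogrel$ direction (the \lrie{} analogue of \Cref{thm:log-rel-adeq-both}, applied with the same bound $\com{n}$, so that its side condition $n\geq n$ is trivial) then yields $\compstlcic{\ctxt}\hole{\compstlcic{t_1}}\termo$.

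Next I would cross over in the source and return. \Cref{thm:compstlc-sem-pres-ctx-ie} also provides that $\compstlcic{\ctxt}$ is a well-typed \stlcem{} context, i.e.\ $\vdash\compstlcic{\ctxt}:\othe,\compstlcic{\tat}\to\othe,\compstlcic{\tau'}$, so the hypothesis $\compstlcic{t_1}\ceqo\compstlcic{t_2}$ applies and gives $\compstlcic{\ctxt}\hole{\compstlcic{t_2}}\termo$. Applying \Cref{thm:term-def-rel} once more yields some $\com{m}\in\mb{N}$ with $\compstlcic{\ctxt}\hole{\compstlcic{t_2}}\shrinko{m}$. Now instantiating \Cref{thm:compstlce-sem-pres-ic,thm:compstlc-sem-pres-ctx-ie} at level $\com{m}$, this time reading $\anylogrel$ as $\overlogrel$, gives $\trge\vdash\ctxht{t_2}\overlogreln{m}\compstlcic{\ctxt}\hole{\compstlcic{t_2}}:\trg{\tau'}$, and adequacy in the $\overlogrel$ direction (again with the same bound $\com{m}$) delivers $\ctxht{t_2}\termt$, which completes the proof of \Cref{thm:stlc-refl-ic}. \Cref{thm:stlc-refl} and \Cref{thm:stlc-refl-fe} are obtained by the identical argument, instantiated with \lrfi{} and \lrfe{} respectively.

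I do not expect a genuine obstacle here; this is deliberately the easy half of full abstraction, and everything is a bookkeeping exercise over lemmas already in hand. The one point that needs care is the two round-trips through \Cref{thm:term-def-rel}: because our logical relations observe \newterm{} rather than plain termination, each passage between the operational notion of termination and the logical relation must first extract a concrete step/size bound ($\com{n}$, then $\com{m}$) from \Cref{thm:term-def-rel} and then use semantics preservation and adequacy at exactly that level, so that the $n\geq m$ side conditions degenerate to equalities. The remaining obligations---well-typedness of $\compstlcic{\ctxt}$ with the matching hole type, and the fact that plugging related terms into related contexts stays related---are exactly what \Cref{thm:compstlc-sem-pres-ctx-ie} and \Cref{def:logrel-ctx-ei} supply.
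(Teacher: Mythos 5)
Your proposal is correct and follows essentially the same route the paper takes: the reflection argument of \Cref{fig:fa-refl} (left), instantiated for \lrie{} via the semantics-preservation lemmas for terms and contexts (\Cref{thm:compstlce-sem-pres-ic,thm:compstlc-sem-pres-ctx-ie}), the context-plugging property of \Cref{def:logrel-ctx-ei}, adequacy, and the two passages through \Cref{thm:term-def-rel} to mediate between termination and \newterm. Your handling of the step/size bounds (choosing the relation level equal to the extracted bound so the $n\geq m$ side condition is trivial) is exactly the intended bookkeeping.
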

\begin{theorem}[\compstlcfe{\cdot} reflects equivalence]\label{thm:stlc-refl-fe}
  \[
    \text{If }
        \othe \vdash \compstlcfe{t_1} \ceqo \compstlcfe{t_2} : \compstlcfe{\tau}
      \text{ then }
        \srce \vdash \src{t_1} \ceqs \src{t_2} : \src{\tau}
     \]
\end{theorem}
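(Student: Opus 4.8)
\textbf{Proof proposal for \Cref{thm:stlc-refl-fe}.}
The plan is to instantiate the reflection diagram of \Cref{fig:fa-refl} (left) for the compiler $\compstlcfe{\cdot}$ and the logical relation $\lrfe$. Assume $\othe \vdash \compstlcfe{t_1} \ceqo \compstlcfe{t_2} : \compstlcfe{\tau}$; we must show that for every well-typed $\stlcf$-context $\ctxs$ with hole of type $\src{\tau}$ and (closed) result type $\src{\tau'}$ we have $\ctxhs{t_1}\termsl \iff \ctxhs{t_2}\termsl$. Since both the hypothesis (contextual equivalence is symmetric) and the goal are symmetric in $t_1$ and $t_2$, it suffices to establish the left-to-right implication, so fix $\ctxs$ and assume $\ctxhs{t_1}\termsl$.

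First I would compile the context: $\compstlcfe{\ctxs}$ is a well-typed $\stlcem$-context with matching hole and result types, and because the compiler is homomorphic it commutes with plugging, i.e.\ $\compstlcfe{\ctxs}\hole{\compstlcfe{\src{t_i}}} = \compstlcfe{\ctxhs{t_i}}$ for $i\in\{1,2\}$. Then I combine semantics preservation: \Cref{thm:ic-compstlc-sem-pres} gives $\srce\vdash\src{t_i}\anylogreln{n}\compstlcfe{\src{t_i}}:\src{\tau}$ and \Cref{thm:compstlc-sem-pres-ctx-fe} gives $\vdash\ctxs\anylogreln{n}\compstlcfe{\ctxs}:\srce,\src{\tau}\to\srce,\src{\tau'}$, both for every $n$ and for both instantiations of $\anylogrel$. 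Feeding the first into the second via \Cref{def:fe-logrel-ctx} yields $\srce\vdash\ctxhs{t_i}\anylogreln{n}\compstlcfe{\ctxhs{t_i}}:\src{\tau'}$ for every $n$.

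Next I transport termination across the relation in both directions. From $\ctxhs{t_1}\termsl$ and \Cref{thm:term-def-rel}, there is an $m$ with $\ctxhs{t_1}\shrinks{m}\src{v}$. Instantiating the relation above at the $\lesssim$ direction and at $n=m$, the adequacy lemma for $\lrfe$ (the analogue of \Cref{thm:log-rel-adeq-both}) gives $\compstlcfe{\ctxhs{t_1}}\termo$, i.e.\ $\compstlcfe{\ctxs}\hole{\compstlcfe{\src{t_1}}}\termo$. Since $\compstlcfe{\ctxs}$ is a valid closing $\stlcem$-context of the right type, the hypothesis $\othe \vdash \compstlcfe{t_1}\ceqo\compstlcfe{t_2}:\compstlcfe{\tau}$ yields $\compstlcfe{\ctxs}\hole{\compstlcfe{\src{t_2}}}\termo$, that is $\compstlcfe{\ctxhs{t_2}}\termo$. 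Applying \Cref{thm:term-def-rel} again produces $m'$ with $\compstlcfe{\ctxhs{t_2}}\shrinko{m'}\oth{v}$; instantiating $\srce\vdash\ctxhs{t_2}\anylogreln{m'}\compstlcfe{\ctxhs{t_2}}:\src{\tau'}$ at the $\gtrsim$ direction and invoking adequacy gives $\ctxhs{t_2}\termsl$, as required.

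I do not expect a genuine obstacle here: this is the ``easy'' direction and needs no backtranslation. The only points requiring care are (i) showing that the homomorphic compiler commutes with plugging into program contexts, so that $\compstlcfe{\ctxs}\hole{\compstlcfe{\src{t_i}}}=\compstlcfe{\ctxhs{t_i}}$; (ii) applying the logical relation at a step index large enough to meet the side condition $n\ge m$ of the adequacy lemma, which is exactly why the semantics-preservation lemmas are stated for all $n$; and (iii) the bookkeeping that mediates between ordinary termination and \newterm via \Cref{thm:term-def-rel}, which is the only part that differs from the pre-mechanisation argument. As a sanity check, one could alternatively observe that $\compstlcfe{\cdot}=\compstlcic{\compstlcfi{\cdot}}$ and that contextual equivalence composes, so the statement follows at once from \Cref{thm:stlc-refl-ic} and then \Cref{thm:stlc-refl}; I would present the direct argument above, however, since it is self-contained and parallels the other two reflection proofs.
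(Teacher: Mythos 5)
Your direct argument is correct, and every step you invoke is available in the paper: \Cref{thm:ic-compstlc-sem-pres} and \Cref{thm:compstlc-sem-pres-ctx-fe} give relatedness of terms and contexts to their compilations for all $n$ and both directions, \Cref{def:fe-logrel-ctx} lets you plug, and the $\lrfe$ analogue of \Cref{thm:log-rel-adeq-both} together with \Cref{thm:term-def-rel} transports termination exactly as in the left half of \Cref{fig:fa-refl}. However, this is not the route the paper takes for this particular theorem: the paper observes that $\compstlcfe{\cdot}=\compstlcic{\compstlcfi{\cdot}}$ and states that \Cref{thm:stlc-refl-fe} ``trivially follows from composing the proofs of the other two compilers,'' i.e.\ apply \Cref{thm:stlc-refl-ic} to get $\trge\vdash\compstlcfi{t_1}\ceqt\compstlcfi{t_2}$ and then \Cref{thm:stlc-refl} to conclude $\srce\vdash\src{t_1}\ceqs\src{t_2}$ --- exactly the alternative you mention at the end as a sanity check. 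The trade-off is clear: the compositional proof is essentially one line and needs no $\lrfe$-specific machinery, whereas your direct proof is self-contained, does not detour through $\stlcim$, and mirrors the proofs of the other two reflection theorems (which is presumably why the paper proved the $\lrfe$ semantics-preservation lemmas at all). Your points of care (commutation of the homomorphic compiler with plugging, choosing the step index to satisfy the adequacy side condition, and mediating between termination and \newterm) are the right ones; note only that the plugging-commutation identity is not strictly needed, since \Cref{def:fe-logrel-ctx} already relates $\ctxhs{t_i}$ to $\compstlcfe{\ctxs}\hole{\compstlcfe{\src{t_i}}}$ directly, and that adequacy is stated verbatim only for \lrfi, so you should cite its (stated-as-analogous) $\lrfe$ instance.
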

Since this last compiler is the composition of the other two, the proof of \Cref{thm:stlc-refl-fe} trivially follows from composing the proofs of the other two compilers.

\subsection{Backtranslations and Preservation of Fully-Abstract Compilation}\label{sec:backtr-def}
Function \emulate{}{\cdot} is responsible for translating a target term of type \tat\ into a source one of type \uval{n;\tat} (\Cref{sec:emul}) by relying on the machinery needed for working with \uval{} terms from \Cref{sec:bt-type}.
This function is easily extended to work with program contexts, producing contexts with hole of type \uval{n;\tat}.
However, recall that the goal of the backtranslation is generating a source context whose hole can be filled with source terms $\src{t_1}$ and $\src{t_2}$ and their type is not \uval{n;\tat} but \src{\tau}.
Thus, there is a mismatch between the type of the hole of the emulated context and that of the terms to be plugged there.
Since emulated contexts work with \uval{} values, we need a function that wraps terms of an arbitrary type \src{\tau} into a value of type \uval{n;\tat}.
This function is called \inject{} (\Cref{sec:injext}) and it is the last addition we need before the backtranslations (\Cref{sec:backtranslations}).

\subsubsection{Emulation of Terms and Contexts}\label{sec:emul}
Like the compiler, the emulation must not just operate on types and terms, but also on program contexts.
Unlike the compiler, the emulation operates on \emph{type derivations} for terms and contexts since all our target languages are typed.
Thus, the emulation of a lambda would look like the following (using \iso{D} as a metavariable to range over derivations and omitting functions to work with \uval{}).
\begin{align*}
  \emulate{}{
    \AxiomC{\iso{D}}
    \UnaryInfC{$\iso{\Gat,x:\tat}\vdash\iso{t}:\iso{\tat'}$}
    \UnaryInfC{$\iso{\Gat}\vdash\iso{\lamt{x:\tat}{t}}:\iso{\tat\to\tat'}$}
    \DisplayProof
  } 
  =&\
  \src{\lam{\src{x:}\uval{n;\tat}}{
    \emulate{}{
      \AxiomC{\iso{D}}
      \UnaryInfC{$\iso{\Gat,x:\tat}\vdash\iso{t}:\iso{\tat'}$}
      \DisplayProof
    }
  }}
\end{align*}
However, note that each judgement uniquely identifies which typing rule is being applied and the underlying derivation.
Thus, for compactness, we only write the judgement in the emulation and implicitly apply the related typing rule to obtain the underlying judgements for recursive calls.

\begin{figure}[!ht]
\small
\mytoprule{
  \emulate{n}{\cdot} : 
  \trg{\Gat}\vdash\trg{t}:\trgb{\tau} \to \src{t} %
}

\begin{gather*}
\begin{aligned}
  \emulate{n}{
      \trg{\Gat}\vdash
      \unitt
      : \Unitt
  } \isdef
    &\
      \src{\indn{n;\Unitt}\units}
  &&&
  \emulate{n}{
      \trg{\Gat}\vdash
      \truet
      : \Boolt
  } \isdef
    &\
      \src{\indn{n;\Boolt}\trues}
  \\
  \emulate{n}{
      \trg{\Gat}\vdash
      \falset
      : \Boolt
  } \isdef
    &\
      \src{\indn{n;\Boolt}\falses}
  &&&
  \emulate{n}{
      \trg{\Gat}\vdash
      \trg{x}
      \trgb{: \tau}
  } \isdef
    &\
    \src{x}
\end{aligned}
\\
\begin{aligned}
  \emulate{n}{
      \trg{\Gat}\vdash
      \trg{\lamt{x:\trgb{\tau}}{t}}
      \trgb{: \tau\to\tau'}
  } \isdef
    &\
      \src{
          \indn{n;\trgb{\tau\to\tau'}}
            \left(\lam{\src{x:\uval{n;\trgb{\tau}}}}{
              \emulate{n}{
                  \trg{\Gat,x:\trgb{\tau}}\vdash
                  \trg{t}
                  \trgb{: \tau'}
              }
            }\right)
      }
  \\
  \emulate{n}{
      \trg{\Gat}\vdash
      \trg{t~t'}
      \trgb{: \tau}
  } \isdef
  \hspace*{10pt}&\hspace*{-10pt}\
      \src{
          \left( 
            \caseup{n;\trgb{\tau'\to\tau}}
            \emulate{n}{
                \trg{\Gat}\vdash
                \trg{t}
                \trgb{: \tau'\to\tau}
            } 
          \right)
          \left(\emulate{n}{
            \trg{\Gat}\vdash
            \trg{t': \trgb{\tau'}}
          }\right)
      }
  \\
  \emulate{n}{
      \trg{\Gat}\vdash
      \trg{\pair{t,t'}}
      \trgb{: \tau\times\tau'}
  } \isdef
    &\
    \src{
        \indn{n;\trgb{\tau\times\tau'}}
        \pair{
          \emulate{n}{
            \trg{\Gat}\vdash
            \trg{t: \trgb{\tau}}
          }
        ,
          \emulate{n}{
            \trg{\Gat}\vdash
            \trg{t': \trgb{\tau'}}
          }
        }
    }
  \\
  \emulate{n}{
      \trg{\Gat}\vdash
      \trg{\projone{t}}
      \trgb{: \tau}
  } \isdef
    &\
      \src{
        \projone{\left(
          \caseup{n;\trgb{\tau\times\tau'}}
          \emulate{n}{
            \trg{\Gat}\vdash
            \trg{t: \trgb{\tau\times\tau'}}
          }
        \right)}
      }
  \\
  \emulate{n}{
      \trg{\Gat}\vdash
      \trg{\projtwo{t}}
      \trgb{: \tau}
  } \isdef
    &\
      \src{
        \projtwo{\left(
          \caseup{n;\trgb{\tau'\times\tau}}
          \emulate{n}{
            \trg{\Gat}\vdash
            \trg{t: \trgb{\tau'\times\tau}}
          }
        \right)}
      }
  \\
  \emulate{n}{
    \begin{aligned}
      &
      \trg{\Gat}\vdash
      \trg{\case~t~\of}
      \\
      &\ 
      \left|\begin{aligned}
        &
        \trg{\inl{x_1}\mapsto t'}
        \\
        &
        \trg{\inr{x_2}\mapsto t''}
      \end{aligned}\right.
      \trgb{: \tau}
    \end{aligned}
  } 
  \isdef 
    &\
      \src{
      \begin{aligned}
        &
        \src{\case}
        \left(
          \caseup{n;\trgb{\tau_1\uplus\tau_2}}
          \emulate{n}{
            \trg{\Gat}\vdash
            \trg{t: \trgb{\tau_1\uplus\tau_2}}
          }
        \right)
      \\&\
        \src{\of\left|\begin{aligned}
          &
          \src{\inl{x_1}\mapsto}
          \emulate{n}{
            \trg{\Gat,(x_1:\trgb{\tau_1})}\vdash
            \trg{t': \trgb{\tau}}
          }
          \\
          &
          \src{\inr{x_2}\mapsto}
          \emulate{n}{
            \trg{\Gat,(x_2:\trgb{\tau_2})}\vdash
            \trg{t'': \trgb{\tau}}
          }
        \end{aligned}\right.
        }
      \end{aligned}
    }
  \\
  \emulate{n}{
      \trg{\Gat}\vdash
      \trg{\inl{t}}
      \trgb{: \tau\uplus\tau'}
  } \isdef
    &\
      \src{
        \indn{n;\trgb{\tau\uplus\tau'}}
        \left(\inl{
          \emulate{n}{
            \trg{\Gat}\vdash
            \trg{t: \trgb{\tau}}
          }\right)
        }
      }
  \\
  \emulate{n}{
      \trg{\Gat}\vdash
      \trg{\inr{t}}
      \trgb{: \tau\uplus\tau'}
  } \isdef
    &\
      \src{
        \indn{n;\trgb{\tau\uplus\tau'}}
        \left(\inr{
          \emulate{n}{
            \trg{\Gat}\vdash
            \trg{t: \trgb{\tau'}}
          }\right)
        }
      }
  \\
  \emulate{n}{\begin{aligned}
      &
      \trg{\Gat}\vdash
      \trg{
        \begin{aligned}
          &
          \iftet{\trg{t}}{\trg{t1}
          \\
          &
          }{\trg{t2}}
        \end{aligned}
      }
      \trgb{: \tau}
    \end{aligned}\hspace*{-12pt}} 
    \isdef
    &\
      \src{
        \begin{aligned}
        &
        \iftes{
          \left(
            \caseup{n;\Boolt}
            \emulate{n}{
              \trg{\Gat}\vdash
              \trg{t: \Boolt}
            }
          \right)
        \\
        &
        }{
        \emulate{n}{
          \trg{\Gat}\vdash
          \trg{t1: \trgb{\tau}}
          }
        }{
        \emulate{n}{
          \trg{\Gat}\vdash
          \trg{t2: \trgb{\tau}}
          }
        }
        \end{aligned}
      }
  \\
  \emulate{n}{
      \trg{\Gat}\vdash
      \trg{t;t'}
      \trgb{: \tau}
    } 
    \isdef
    &\
      \src{
        \left(
        \caseup{n;\Unitt}
        \emulate{n}{
          \trg{\Gat}\vdash
          \trg{t: \Unitt}
        }
        \right);
        \emulate{n}{
          \trg{\Gat}\vdash
          \trg{t': \tau}
        }
      }
  \\
  \emulate{n}{
      \trg{\Gat}\vdash
      \trg{\fold{\matt}t}
      \trgb{: \mat}
  } \isdef
    &\
      \src{
        \indn{n;\trg{\tat\subt{\matt}{\alpt}}}
        \emulate{n}{
          \trg{\Gat}\vdash
          \trg{t: \trg{\tat\subt{\matt}{\alpt}}}
        }
      }
  \\
  \emulate{n}{
    \begin{aligned}
      &
      \trg{\Gat}\vdash
      \trg{\unfold{\matt}\ t}
      \\
      &\ \
      \trg{: \tat\subt{\matt}{\alpt}}
    \end{aligned}
  } \isdef
    &\  
      \src{
        \caseup{n;\trgb{\mat}}
        \emulate{n}{
          \trg{\Gat}\vdash
          \trg{t: \trgb{\mat}}
        }
      }
\end{aligned}
\\
\cline{1-2}
\begin{aligned}
    \emulateic{n}{\cdots}
      \isdef
      &\
      \text{ as } \emulateef{n}{\cdots}
\end{aligned}
\\
\cline{1-2}
\begin{aligned}
  \emulateef{n}{
        \AxiomC{$\oth{\Gamma}\vdash\oth{t:\othb{\tau}}$}
        \AxiomC{$\tyeqbin{\tau}{\sigma}$}
        \BinaryInfC{$\oth{\Gamma}\vdash\oth{t:\othb{\sigma}}$}
        \DisplayProof
  } 
  \isdef
    &\
        \emulateef{n}{
          \AxiomC{$\oth{\Gamma}\vdash\oth{t:\othb{\tau}}$}
          \DisplayProof
        }
    &
    \emulateef{n}{ \cdots }
      \isdef&\
      \begin{aligned}
        &
      \text{other cases}
      \\
      &
      \text{ are as above}
      \end{aligned}
\end{aligned}
\end{gather*}
\caption{\label{fig:emulat}Emulation of target terms into source ones.}
\end{figure}

\begin{figure}[!t]
  \small
  \mytoprule{
  \emulate{n}{\cdot} : 
    \
    (\vdash\ctxt:\trgb{\Gamma},\tat\to\trgb{\Gamma'},\trgb{\tau'}) \to  \ctxs
  }
  \scalebox{0.95}{%
  \begin{minipage}{\textwidth}
  \begin{gather*}
    \begin{aligned}
  \emulate{n}{\hole{\cdot}} \isdef&\ \src{\hole{\cdot}}
  \\
  \emulate{n}{
    \begin{aligned}
      &
      \vdashbl \trg{\lamt{x:\trgb\tau'}{\ctxt}} : 
      \\
      &
      \trgb{\Gamma''},\trgb{\tau''} \tobl \trgb{\Gamma},\trgb{\tau' \to \tau}
    \end{aligned}
  } \isdef
    &\
    \src{
      \begin{aligned}
        &
        \src{\indn{n;\trgb{\tau\to\tau'}}}
        \
        \\
        &\
        \src{
	        \left(\lam{\src{x:\uval{n;\trgb{\tau}}}}{
	          \emulate{n}{
	              \vdashbl \ctxt : 
	              \trgb{\Gamma''},\trgb{\tau''} \tobl \trg{\trgb{\Gamma}, x:\trgb{\tau'}},\trgb{\tau}
	          }
	        }\right)
        }
      \end{aligned}
    }
  \\
  \emulate{n}{
      \vdashbl \trg{\ctxt\ t_2} : 
      \trgb{\Gamma'},\trgb{\tau'} \tobl \trgb{\Gamma}, \trgb{\tau_2}
  } \isdef
    &\
    \src{
        \begin{aligned}
          &
          \left( 
            \caseup{n;\trgb{\tau'\to\tau}}
            \emulate{n}{
                \vdashbl \ctxt : 
                \trgb{\Gamma'},\trgb{\tau'} \tobl \trgb{\Gamma}, \trgb{\tau_1\to\tau_2}
            } 
          \right)
          ~
          \\
          &\ 
          \left(\emulate{n}{
            \trgb{\Gamma}\vdash
            \trg{t_2:\trgb{\tau_1}}
          }\right)
        \end{aligned}
      }
  \\
  \emulate{n}{
    \begin{aligned}
      &
      \vdash \trg{\projone{\ctxt}} : 
      \\
      &
      \trg{\Gamma'},\trgb{\tau'} \to \trg{\Gamma}, \trgb{\tau_1}
    \end{aligned}
  } \isdef
    &\
    \src{
        \projtwo{\left(
          \caseup{n;\trgb{\tau\times\tau'}}
          \emulate{n}{
            \vdash \ctxt : 
            \trg{\Gamma'},\trgb{\tau'} \to \trg{\Gamma}, \trgb{\tau_1\times\tau_2}
          }
        \right)}
      }
  \\
  \emulate{n}{
    \begin{aligned}
      &
      \vdash \trg{\projtwo{\ctxt}} : 
      \\
      &
      \trg{\Gamma'},\trgb{\tau'} \to \trg{\Gamma}, \trgb{\tau_2}
    \end{aligned}
  } \isdef
    &\
    \src{
        \projone{\left(
          \caseup{n;\trgb{\tau\times\tau'}}
          \emulate{n}{
            \vdash \ctxt : 
            \trg{\Gamma'},\trgb{\tau'} \to \trg{\Gamma}, \trgb{\tau_1\times\tau_2}
          }
        \right)}
      }
  \\
  \emulate{n}{
    \begin{aligned}
      &
      \vdash \trg{\pair{\ctxt,t_2}} :
      \\
      &
      \trg{\Gamma'},\trgb{\tau'}\to\trg{\Gamma},\trgb{\tau_1\times\tau_2}
    \end{aligned}
  } \isdef
    &\
    \src{
      \begin{aligned}
        &
        \indn{n;\trgb{\tau_1\times\tau_2}}
        \\
        &\
        \pair{
          \emulate{n}{
            \vdash \ctxt :
            \trg{\Gamma'},\trgb{\tau'}\to\trg{\Gamma},\trgb{\tau_1}
          }
        ,
          \emulate{n}{
            \trg{\Gamma}\vdash
            \trg{t_2:\trgb{\tau_2}}
          }
        }
      \end{aligned}
    }
  \\
  \emulate{n}{
    \begin{aligned}
      &
      \vdash \trg{\inl{\ctxt}} : 
      \\
      &
      \trg{\Gamma''},\trgb{\tau''}\to \trg{\Gamma},\trgb{\tau\uplus\tau'}
    \end{aligned}
  } \isdef
    &\
    \src{
        \indn{n;\trgb{\tau\uplus\tau'}}
        \left(\inl{
          \emulate{n}{
            \vdash \ctxt : 
            \trg{\Gamma''},\trgb{\tau''}\to \trg{\Gamma},\trgb{\tau}
          }\right)
        }
      }
  \\
  \emulate{n}{
    \begin{aligned}
      &
      \vdash \trg{\inr{\ctxt}} : 
      \\
      &
      \trg{\Gamma''},\trgb{\tau''}\to \trg{\Gamma},\trgb{\tau\uplus\tau'}
    \end{aligned}
  } \isdef
    &\
    \src{
        \indn{n;\trgb{\tau\uplus\tau'}}
        \left(\inr{
          \emulate{n}{
            \vdash \ctxt : 
            \trg{\Gamma''},\trgb{\tau''}\to \trg{\Gamma},\trgb{\tau'}
          }\right)
        }
      }
  \\
  \emulate{n}{
    \begin{aligned}
      &
      \vdash \trg{\casefoldedt{\ctxt}{\trg{t_1}}{\trg{t_2}}} : 
      \\
      &
      \trg{\Gamma'},\trgb{\tau'} \to \trg{\Gamma}, \trgb{\tau_3}
    \end{aligned}
  } \isdef
    &\
    \begin{aligned}
      &
      \src{
        \case
        \left(
          \caseup{n;\trgb{\tau_1\uplus\tau_2}}
          \emulate{n}{
            \vdash \ctxt : 
            \trg{\Gamma'},\trgb{\tau'} \to \trg{\Gamma}, \trgb{\tau_1\uplus\tau_2}
          }
        \right)
      }
      \\&\ \
        \src{\of\left|\begin{aligned}
          &
          \src{\inl{x_1}\mapsto}
          \emulate{n}{
            \trg{\Gamma,(x_1:\trgb{\tau_1})}\vdash
            \trg{t_1: \trgb{\tau_3}}
          }
          \\
          &
          \src{\inr{x_2}\mapsto}
          \emulate{n}{
            \trg{\Gamma,(x_2:\trgb{\tau_2})}\vdash
            \trg{t_2: \trgb{\tau_3}}
          }
        \end{aligned}\right.
        }
    \end{aligned}
  \\
  \emulate{n}{
    \begin{aligned}
      &
      \vdash \trg{\ctxt;t} : 
      \\
      &
      \trg{\Gamma},\trgb{\tau}\to\trg{\Gamma'},\trgb{\tau''}
    \end{aligned}
  } \isdef
    &\
    \src{
      \begin{aligned}
        &
        \left(
        \caseup{n;\Unitt}
        \emulate{n}{
          \vdash \ctxt : 
          \trg{\Gamma},\trgb{\tau}\to\trg{\Gamma'},\trg{\Unitt}
        }
        \right);
        \\
        &\
        \emulate{n}{
          \trg{\Gamma}\vdash
          \trg{t': \trgb{\tau}}
        }
      \end{aligned}
    }
  \\
  \end{aligned}
  \\
  \begin{aligned}
  \emulate{n}{
      \vdashbl \trg{\fold{\matt}\ctxt} : 
      \trgb{\Gamma'},\trgb{\tau'}\tobl\trgb{\Gamma},\trgb{\mat}
  } \isdef
    &\
    \src{
      \begin{aligned}
        &
        \indn{n;\trg{\tat\subt{\matt}{\trgb{\alpha}}}}
        \\
        &\
        \emulate{n}{
          \vdashbl \ctxt : 
          \trgb{\Gamma'},\trgb{\tau'}\tobl\trgb{\Gamma},\trg{\tat\subt{\matt}{\trgb{\alpha}}}
        }
      \end{aligned}
    }
  \\
  \emulate{n}{
      \vdashbl \trg{\unfold{\matt}\ctxt} : 
      \trgb{\Gamma'},\trgb{\tau'}\tobl\trgb{\Gamma},\trg{\tat\subt{\matt}{\trgb{\alpha}}}
  } \isdef
    &\
    \src{
        \caseup{n;\trgb{\mat}}
        \emulate{n}{
          \vdashbl \ctxt : 
          \trgb{\Gamma'},\trgb{\tau'}\tobl\trgb{\Gamma},\trgb{\mat}
        }
    }
    \end{aligned}
  \end{gather*}
  \end{minipage}
  }
  \smallskip

  \mytoprule{
  \emulateic{n}{\cdot} : 
    \
    (\vdash\ctxo:\equi{\Gamma},\equi{\tau}\to\equi{\Gamma'},\equi{\tau'}) \to  \ctxt
  }
  \begin{center}
  	\text{ Analogous to the case above since \ctxo are a subset of \ctxt }
  \end{center}

  \mytoprule{
  \emulateef{n}{\cdot} : 
    \
    (\vdash\ctxo:\equi{\Gamma},\equi{\tau}\to\equi{\Gamma'},\equi{\tau'}) \to  \ctxs
  }

  \begin{center}
  	\text{ Analogous to the case above since \ctxo are a subset of \ctxt }
  \end{center}

  \caption{\label{fig:emul-ctx}Emulation of target contexts into source ones (excerpts).}
\end{figure}
Function \emulate{n}{\cdot} (\Cref{fig:emulat,fig:emul-ctx}) is indexed by the approximation index \com{n} in order to know which \uval{}-helper functions to use.
There are few interesting bits in the emulation of terms (and of contexts).
When emulating constructors for terms of type \tat, we create a value of the corresponding backtranslation type \uval{n;\tat} and, in order to be well-typed, we \downgrade{} that value by 1.
Dually, emulating destructors for terms of type \tat\ requires upgrading the term for 1 level of precision because they are then destructed to access the underlying type.
When emulating \stlcim derivations into \stlcf, we need to consider the case when \iso{\fold{\matt}} and \iso{\unfold{\matt}} annotations are encountered.
There, we know that the backtranslation will work with terms typed at the unfolding of \matt, so we simply perform the recursive call and insert the appropriate helper function to ensure the resulting term is well-typed.
Concretely, \Cref{ex:emu} shows what the emulation of a simple term is.
\begin{example}[Emulating a term]\label{ex:emu}
  Consider the term \trg{\trge\vdash{\truet}:\Boolt}, its emulation is:
  \begin{align*}
    &\
    \src{
          \indn{n;\trgb{\Boolt}} \trues
      }
    \\
    &\text{then by unfolding the definition of \indn{\cdot}}
    \\
    =
    &\
    \src{
          (\lam{y: \Bools }{ \downgradepar{n;\trg{\Boolt}}{\inl{y}} })
          ~\trues
      }
    \\
    &\text{then by unfolding the definition of \downgradepar{\cdot}{} }
    \\
    =
    &\
    \src{
          (\lam{y: \Bools }{ (\lam{z:\Bools\uplus\Units}{z}){\inl{y}} })
          ~\trues
      }
  \end{align*}
  Which eventually reduces to value \src{\inl{\trues}}, as expected.
\end{example}

When emulating \stlcem derivations (in the other two emulates in \Cref{fig:emulat}), we need to consider the case when term \equi{t} is given type \equi{\tau} knowing it had type \equi{\sigma} and that \tyeqbin{\sigma}{\tau} (\Cref{tr:t-eq}).
Here we rely on a crucial observation: given two equivalent types, their backtranslation types are \emph{the same} (\Cref{thm:tyeq-same-uval}).
To understand why this is the case, consider how the definition of $\uval{n;\othb{\tau}}$ simply unfolds recursive types without losing precision, i.e.\ it essentially only looks at the depth-$n$ unfolding of type $\othb{\tau}$ and these unfoldings are equal for equal types $\tyeqbin{\tau}{\sigma}$.
With this fact, we can get away with just performing the recursive call on the sub-derivation for \equi{t} at type \equi{\sigma}.
\begin{theorem}[Equivalent types are backtranslated to the same type]\label{thm:tyeq-same-uval}
  \[
  \text{If } \tyeqbin{\tau}{\sigma} \text{ then } \uvalfe{n;\equi{\tau}}=\uvalfe{n;\equi{\sigma}}
  \]
\end{theorem}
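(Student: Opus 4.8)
The plan is a lexicographic induction on the pair $(n,\ \lMuCount{\tau}+\lMuCount{\sigma})$: the outer measure is the step index $n$, and the inner measure is the total leading-$\mu$ count of the two types — the same \mtt{lmc} measure used for $\tyeq$ in \Cref{sec:typ}. I rely on two standard facts about the coinductive relation $\tyeq$. First, \Cref{tr:tyeq-refl,tr:tyeq-symm,tr:tyeq-trans} are admissible (as noted in \Cref{sec:typ}), so a witness of $\tyeqbin{\tau}{\sigma}$ may be assumed to use only \Cref{tr:tyeq-prim,tr:tyeq-var,tr:tyeq-bin,tr:tyeq-mu-l,tr:tyeq-mu-r}; since these rules are deterministic on the shapes of $\tau$ and $\sigma$, this yields a finitary inversion principle: from $\tyeqbin{\tau}{\sigma}$ one reads off either $\tyeqbin{\tau'\subo{\matgen{\alpha}{\tau'}}{\alpha}}{\sigma}$ when $\tau=\matgen{\alpha}{\tau'}$; or $\tyeqbin{\tau}{\sigma'\subo{\matgen{\alpha}{\sigma'}}{\alpha}}$ when $\tau$ is not a $\mu$ but $\sigma=\matgen{\alpha}{\sigma'}$; or, when neither is a $\mu$, that $\tau$ and $\sigma$ are the same base type (the variable case of \Cref{tr:tyeq-var} cannot occur since $\uvalfe{n;\cdot}$ is only ever applied to closed types), or $\tau=\equi{\tau_1\star\tau_2}$ and $\sigma=\equi{\sigma_1\star\sigma_2}$ with the same connective $\star$ and $\tyeqbin{\tau_1}{\sigma_1}$, $\tyeqbin{\tau_2}{\sigma_2}$. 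Second, because all recursive types are contractive, unfolding a $\mu$ strictly decreases \mtt{lmc}: $\lMuCount{\tau'\subo{\matgen{\alpha}{\tau'}}{\alpha}}=\lMuCount{\tau'}=\lMuCount{\matgen{\alpha}{\tau'}}-1$ — this is precisely the property recalled at the end of \Cref{sec:typ}. One also checks that the unfoldings and substitutions performed below preserve closedness and contractivity (we only ever substitute closed contractive types), so \mtt{lmc} remains a sensible measure throughout.

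\textbf{The cases.} For $n=0$, both sides equal $\Units$ by definition of $\uvalfe{0;\cdot}$, independently of $\tau$ and $\sigma$. For $n=m+1$, proceed by the inner induction together with the inversion above; by symmetry of the goal and of $\tyeq$ (\Cref{tr:tyeq-symm}) we may assume $\tau$ is a $\mu$ whenever at least one of $\tau,\sigma$ is. If $\tau=\matgen{\alpha}{\tau'}$, then by \Cref{fig:uval} we have $\uvalfe{m+1;\equi{\tau}}=\uvalfe{m+1;\equi{\tau'\subo{\matgen{\alpha}{\tau'}}{\alpha}}}$, and inversion gives $\tyeqbin{\tau'\subo{\matgen{\alpha}{\tau'}}{\alpha}}{\sigma}$ at a strictly smaller inner measure (by contractivity), so the inner induction hypothesis yields $\uvalfe{m+1;\equi{\tau'\subo{\matgen{\alpha}{\tau'}}{\alpha}}}=\uvalfe{m+1;\equi{\sigma}}$; composing the two equations closes the case. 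If neither $\tau$ nor $\sigma$ is a $\mu$: when $\tau=\sigma$ is a base type there is nothing to prove; when $\tau=\equi{\tau_1\star\tau_2}$ and $\sigma=\equi{\sigma_1\star\sigma_2}$, by \Cref{fig:uval} $\uvalfe{m+1;\equi{\tau}}$ is obtained from $\uvalfe{m;\equi{\tau_1}}$ and $\uvalfe{m;\equi{\tau_2}}$ by the source connective corresponding to $\star$ together with a final $\src{\uplus\Units}$, and likewise for $\sigma$, so the outer induction hypothesis (at step $m$, applied to $\tyeqbin{\tau_1}{\sigma_1}$ and $\tyeqbin{\tau_2}{\sigma_2}$) gives $\uvalfe{m;\equi{\tau_i}}=\uvalfe{m;\equi{\sigma_i}}$ and hence equality of the two backtranslation types. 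The identical argument establishes the statement for $\uvalic{n;\cdot}$, which is defined the same way.

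\textbf{Main obstacle.} The real difficulty is making the induction well-founded in the presence of the coinductive $\tyeq$: equality derivations are infinite, so we cannot recurse on them, but at every step we either drop the step index (the binary case) or keep it fixed while \emph{provably} decreasing \mtt{lmc} (the $\mu$ case), and the soundness of the latter rests squarely on contractivity — the same phenomenon that motivates restricting to contractive types in the first place. Pinning down this measure, and checking that closedness and contractivity survive each unfolding and substitution so that \mtt{lmc} stays well-behaved, is where the work lies; granting that, each case is a one-line computation with the clauses of $\uvalfe{}$ in \Cref{fig:uval}.
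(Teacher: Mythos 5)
Your proof is correct and follows essentially the route the paper intends: induction on the precision index together with the leading-$\mu$ count (the same $(n,\lMuCount{\cdot})$ measure the paper invokes for well-foundedness of $\uvalfe{n;\cdot}$), using the determinism of the structural $\tyeq$ rules for inversion and contractivity to justify that unfolding strictly decreases \mtt{lmc} while $\uvalfe{n+1;\equi{\mat}}$ unfolds without losing precision. No gaps worth flagging.
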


Finally, consider \emulateic{\cdot}{\cdot}, i.e., the emulation of \stlcem terms into \stlcim: there is no construct that adds \iso{\fold{}}/\iso{\unfold{}} annotations.
This is due to the same intuition presented before regarding the unfolding of the backtranslation type \uvalic{n;\equi{\mat}}, which is \uvalic{n;\equi{\tau\subo{\mat}{\alpha}}} i.e, the indexing type is unfolded but the step is not decreased.
Intuitively, the backtranslation performs an \com{n}-level deep unfolding of the recursive types and operates on those.
Thus, backtranslated contexts do not use recursive types but just their \com{n}-level deep unfolding, so their annotations are not needed.

In order to state that \emulate{}{\cdot} is correct, we rely on compatibility lemmas akin to those used for compiler correctness (recall \Cref{thm:compat-lem-lam}).
First, note that all our logical relations relate a source and target term at a source pseudo-type.
We have extended the logical relation to express the relation between a source and target term at pseudotype \emuldv{}, so we should use this to relate a target term and its backtranslation.
Second, all logical relations require a source environment to relate terms, and in this case we are given a target environment (the one for the typing of the backtranslated term).
To create a source environment starting from this target environment, we take each bound variable and give it backtranslation type using function \toemul{\cdot}{}.
Finally, in these lemmas we need to account for the different directions of the approximation we have.
Thus, these compatibility lemmas require that either $n < m$ (so that the results only hold in worlds $\W$ with $\stepsfun{\W} \leq n < m$) and $p = \precise$ or $\genlogrel = \lesssim$ and $p = \imprecise$, for \com{m} being the approximation level of interest. 

The intuition behind these constraints is that when $p = \imprecise$, there is no lower bound on the emulation depth $m$.
However, in that case, we only get a left-to-right approximation $\underlogrel$, since the emulated term may have insufficient precision to emulate the original term accurately (as in \Cref{ex:imprecise}) and may diverge in cases where the emulation precision runs out.
In the case where $p = \precise$, the lemma requires that the emulation depth $m$ is sufficiently large.
Specifically, $m$ is required to be at least as large as the step bound $n$ up to which the approximation in both directions $\anylogrel{n}$ is guaranteed.
Intuitively, this covers the scenario where the depth of the approximation is larger than the amount of steps taken by a back-translated program.
In such a scenario, the back-translation is guaranteed to accurately emulate the behaviour of the target term and we get approximations in both directions, but only up to the amount of steps $n$.

Thus, a typical compatibility lemma for emulate looks like \Cref{thm:compat-lem-backtr-lam}.

\begin{lemma}[Compatibility for $\lambda$ Emulation]\label{thm:compat-lem-backtr-lam}
  \begin{align*}
    \text{if }
    & 
      (m > n \text{ and } p = \precise) \text{ or } (\anylogrel = \underlogrel \text{ and } p = \imprecise)
    \\
    \text{then }
    &
    \begin{aligned}[t]
      \text{if }
      &
        \toemul{\trg{\Gat,x:\trgb{\tau}}}{m;p}\vdash \src{t} \anylogreln{n} \trg{t}: \emuldv{m;p;\trgb{\tau'}} 
      \\
      \text{then }
      &
        \toemul{\trg{\Gat}}{m;p}\vdash 
          \src{ \indn{m;\trgb{\tau\to\tau'}} \left(\lam{\src{x:\uval{m;\trgb{\tau}}}}{ t } \right)}
        \anylogreln{n} 
          \trg{\lamt{x:\trgb{\tau}}{t}}
        : \emuldv{m;p;\trgb{\tau\to\tau'}} 
    \end{aligned}
  \end{align*}
\end{lemma}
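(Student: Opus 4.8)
The plan is to prove the statement by unfolding the logical relation up to $n$ steps and discharging the obligations imposed by the value relation for function types, exactly as one does for the ordinary compatibility lemma for $\lambda$ (\Cref{thm:compat-lem-lam}), but carrying along the hypothesis relating $m$, $n$, and $p$. First I would fix a world $\W$ with $\stepsfun{\W}\leq n$ and related substitutions $(\W,\src{\gamma},\trgb{\gamma})\in\envrel{\toemul{\trg{\Gat}}{m;p}}$, and reduce the goal to showing that the two closed terms are in $\termrel{\emuldv{m;p;\trgb{\tau\to\tau'}}}$. Both closed terms are values — the source side steps (inside $\src{\indn{m;\trgb{\tau\to\tau'}}}$, which is a $\downgrade{}$ wrapped around an $\src{\inl{}}$) to some $\src{v}=\src{\inl{\lam{x:\uval{m;\trgb\tau}}{t\gamma'}}}$ after a bounded number of administrative reductions, while the target side is already the value $\trg{\lamt{x:\trgb\tau}{t\trgb{\gamma}}}$ — so it suffices (using the standard antireduction/monotonicity lemmas and the $\obsw$ machinery that are assumed available) to show the resulting values are in $\valrel{\emuldv{m;p;\trgb{\tau\to\tau'}}}$.

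Next I would check well-typedness ($\src{v}\in\oftypes{\emuldv{m;p;\trgb{\tau\to\tau'}}}$ and $\trg v\in\oftypet{\trgb{\tau\to\tau'}}$), which is routine from the typing of $\src{\indn{}}$ and $\compstlc{\cdot}$-style bookkeeping, and then unfold the definition of $\valrel{\emuldv{m;p;\trgb{\tau\to\tau'}}}$ from \Cref{fig:emuldv1}. Since $\src v$ carries an $\src{\inl{}}$ tag, we are in the ``or'' branch at the $\trgb{\tau_1\to\tau_2}$ case, so the obligation becomes $(\W,\src{v'},\trg v)\in\valrel{\emuldv{m;p;\trgb{\tau}}\to\emuldv{m;p;\trgb{\tau'}}}$ where $\src{v'}$ is the stripped lambda. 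Unfolding the value relation for arrow pseudo-types (\Cref{fig:logrel-main}), I would pick a strictly-future world $\W'\strfutw\W$ and related arguments $(\W',\src{v''},\trg{v''})\in\valrel{\emuldv{m;p;\trgb\tau}}$ (plus, in the $\overlogrel$ case, the size bound $\size{\trg{v''}}\leq\stepsfun{\W'}$), and must show the substituted bodies are in $\termrel{\emuldv{m;p;\trgb{\tau'}}}$. This is precisely where the hypothesis $\toemul{\trg{\Gat,x:\trgb{\tau}}}{m;p}\vdash\src t\anylogreln{n}\trg t:\emuldv{m;p;\trgb{\tau'}}$ gets applied: extend $(\W',\src\gamma,\trgb\gamma)$ with the binding $x\mapsto(\src{v''},\trg{v''})$ to obtain a substitution in $\envrel{\toemul{\trg{\Gat,x:\trgb\tau}}{m;p}}$ (monotonicity of $\envrel{\cdot}$ along $\futw$ handles the old bindings), note $\stepsfun{\W'}<\stepsfun{\W}\leq n$ so the ``up to $n$ steps'' premise is satisfied, and conclude.

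The role of the precision hypothesis $(m>n\wedge p=\precise)\vee(\anylogrel=\underlogrel\wedge p=\imprecise)$ is to keep $\valrel{\emuldv{m;p;-}}$ well-defined and non-degenerate at the relevant depths: it is threaded through unchanged into the recursive invocation (both disjuncts are preserved since $m$, $p$, $\anylogrel$ are unchanged), and it is what licenses the step in the previous paragraph where we treat the $\emuldv{}$ pseudo-type's branches — in the $\imprecise$/$\lesssim$ case the $\src{\inr{\units}}$ (i.e.\ $\src{\unk}$) alternative is permitted, which is harmless because $\lesssim$ is vacuously satisfied when the source diverges, while in the $\precise$/$\gtrsim$ case $m>n$ guarantees we never exhaust the depth budget within $\stepsfun{\W}\leq n$ steps.

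The main obstacle I expect is bookkeeping around the administrative reductions introduced by $\src{\indn{m;\trgb{\tau\to\tau'}}}$ and around the size bound in the $\overlogrel$ direction: $\src{\indn{}}$ expands to a $\downgrade{}$ composed with $\src{\inl{}}$-injection, so on the source side there are several $\beta$- and $\case$-steps before a value is reached, and one must argue these are absorbed by antireduction without disturbing the world (they happen in $\obsw$'s ``arbitrary number of steps'' slack). In the $\gtrsim$ direction the extra premise $\size{\trg{v'}}\leq\stepsfun{\W'}$ on function arguments must be matched up with what the recursive call demands; I expect this to go through because $\size{\cdot}$ ignores lambda bodies, so the relevant size bounds on the argument value are exactly what the hypothesis consumes, but verifying that the bound is not lost across the $\W\to\W'$ transition and the antireduction steps is the delicate part. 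Everything else is a direct transcription of the corresponding step in the proof of \Cref{thm:compat-lem-lam}.
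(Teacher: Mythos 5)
There is a genuine gap, and it sits exactly at the point you dismiss as ``administrative reductions.'' The source term $\src{ \indn{m;\trgb{\tau\to\tau'}} \left(\lam{\src{x:\uval{m;\trgb{\tau}}}}{ t\gamma } \right)}$ does not reduce to $\src{\inl{\lam{x:\uval{m;\trgb{\tau}}}{t\gamma}}}$: since $\indn{m;\trgb{\tau\to\tau'}}$ is $\downgrade{m;\trgb{\tau\to\tau'}}$ composed with an $\src{\inl{}}$-injection, and $\downgrade{}$ at arrow type $\eta$-expands its argument, the value actually reached is $\src{\inl{}}$ of a \emph{wrapped} function of type $\src{\uval{m-1;\trgb{\tau}}\to\uval{m-1;\trgb{\tau'}}}$ that upgrades its argument, calls the original lambda, and downgrades the result (or simply $\src{\units}$ when $\src{m}=\src{0}$, which can occur in the $\imprecise$ branch). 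Correspondingly, your unfolding of the value relation carries the wrong index: by the $\src{n+1}$ case of $\valrel{\emuldv{m;p;\trgb{\tau\to\tau'}}}$, stripping the $\src{\inl{}}$ leaves an obligation at $\valrel{\emuldv{m-1;p;\trgb{\tau}}\to\emuldv{m-1;p;\trgb{\tau'}}}$, not at index $\src{m}$ as you claim. Your induction hypothesis speaks about the body at level $\src{m}$ (argument at $\emuldv{m;p;\trgb{\tau}}$, result at $\emuldv{m;p;\trgb{\tau'}}$), so the two do not meet: the plain lambda does not even have the type $\src{\uval{m-1;\trgb{\tau}}\to\uval{m-1;\trgb{\tau'}}}$ required to inhabit that relation.

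Bridging this one-level mismatch is precisely what the term-level wrapper inserted by $\downgrade{}$ is for, and proving it correct requires the semantic correctness lemmas for $\upgrade{}$ and $\downgrade{}$ (related values at $\emuldv{m;p;\cdot}$ downgrade to related values at $\emuldv{m-1;p;\cdot}$, and upgraded values at $\emuldv{m-1;p;\cdot}$ are related at $\emuldv{m;p;\cdot}$ under the stated precision/approximation side conditions). These lemmas, not antireduction bookkeeping, are the heart of the paper's argument for this compatibility lemma; your proposal never invokes them, and without them the step from the $\src{\inl{}}$-stripped value to the hypothesis about $\src{t}$ fails. The rest of your outline (fixing a world with $\stepsfun{\W}\leq n$, closing with related substitutions, extending the environment with the argument binding, threading the precision disjunction and, in the $\overlogrel$ case, the size bound on the argument) is the right skeleton and matches the standard structure, but as written the core of the proof does not go through.
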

The compatibility lemma for terms typed using type equality (\Cref{thm:fe-compat-lem-backtr-eq}) is the most interesting of these.
The proof of this lemma is surprisingly simple because most of the heavy lifting is done by a corollary of \Cref{thm:tyeq-same-uval}, which proves that equivalent types have not only the same backtranslation type but also the same term relation.
\begin{lemma}[Compatibility lemma for emulation of type equality]\label{thm:fe-compat-lem-backtr-eq}
  \begin{align*}
    \text{if }
    &
      (m > n \text{ and } p = \precise) \text{ or } (\anylogrel = \underlogrel \text{ and } p = \imprecise)
    \\
    \text{then }
    &
    \begin{aligned}[t]
      \text{if }
      &
        \toemulfe{\oth{\Gamma}}{m;p}\vdash \src{t} \anylogreln{n} \oth{t}: \emuldvfe{m;p;\othb{\tau}}
            \text{ and }
                \tyeqbin{\tau}{\sigma}
      \\
      \text{then }
      &
        \toemulfe{\oth{\Gamma}}{m;p}\vdash \src{t} \anylogreln{n} \oth{t}: \emuldvfe{m;p;\othb{\sigma}}
    \end{aligned}
  \end{align*}
\end{lemma}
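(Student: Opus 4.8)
The plan is to reduce the lemma to a strengthening of \Cref{thm:tyeq-same-uval}: $\tyeq$-equal types do not merely have the same backtranslation type, they induce the \emph{same cross-language relations}. Concretely, I would first establish the corollary that, for all $m$, all $p\in\{\precise,\imprecise\}$, and all worlds $\W$, if $\tyeqbin{\tau}{\sigma}$ then $\valrel{\emuldvfe{m;p;\othb{\tau}}}=\valrel{\emuldvfe{m;p;\othb{\sigma}}}$, and hence---since $\contrel{\cdot}$ and $\termrel{\cdot}$ are built solely from $\valrel{\cdot}$ and the type-agnostic observation relations $\obsfun{\W}{\anylogrel}$ (see \Cref{fig:logrel-main})---also $\contrel{\emuldvfe{m;p;\othb{\tau}}}=\contrel{\emuldvfe{m;p;\othb{\sigma}}}$ and $\termrel{\emuldvfe{m;p;\othb{\tau}}}=\termrel{\emuldvfe{m;p;\othb{\sigma}}}$. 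Given this, the lemma is immediate by unfolding the \lrfe analogue of \Cref{def:logrel-n-steps}. The context pseudo-type $\toemulfe{\oth{\Gamma}}{m;p}$ is unchanged, so only the obligations at the hole's pseudo-type differ: the source-typing obligation coincides because $\emtotaufe{\emuldvfe{m;p;\othb{\tau}}}=\uvalfe{m;\equi{\tau}}=\uvalfe{m;\equi{\sigma}}=\emtotaufe{\emuldvfe{m;p;\othb{\sigma}}}$ by \Cref{thm:tyeq-same-uval}; the target-typing obligation $\othe\vdash\oth{t}:\equi{\tau}$ yields $\othe\vdash\oth{t}:\equi{\sigma}$ by \Cref{tr:t-eq} and $\tyeqbin{\tau}{\sigma}$; and the semantic part is literally the same statement once $\termrel{\emuldvfe{m;p;\othb{\tau}}}=\termrel{\emuldvfe{m;p;\othb{\sigma}}}$. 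The precision side-condition $(m>n\wedge p=\precise)\vee(\anylogrel=\underlogrel\wedge p=\imprecise)$ plays no role here and is simply carried through.

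\textbf{Proving the corollary.} I would proceed by cases on $m$ and, for $m=m'+1$, by well-founded induction on the lexicographic pair $\big(\stepsfun{\W},\ \lMuCount{\equi{\tau}}+\lMuCount{\equi{\sigma}}\big)$. The case $m=0$ is trivial, since $\valrel{\emuldvfe{0;\imprecise;\othb{\tau}}}=\{(\W,\src{v},\oth{v})\mid\src{v}=\units\}$ and $\valrel{\emuldvfe{0;\precise;\othb{\tau}}}=\emptyset$ regardless of the type. For $m=m'+1$, because $\tyeq$ is coinductive we cannot induct on its derivation; instead I would invoke the same inversion/normalisation reasoning that already underlies \Cref{thm:tyeq-same-uval}: using admissibility of reflexivity, symmetry, and transitivity, a $\tyeq$-derivation between contractive $\equi{\tau},\equi{\sigma}$ shows that, after finitely many (bounded by the $\lMuCount$s) applications of the deterministic $\mu_l/\mu_r$ unfolding rules, the two types expose the same head---both the same base type, both the same type variable, or both connective types $\equi{\tau_1\star\tau_2}$ and $\equi{\sigma_1\star\sigma_2}$ with $\star\in\{\equi{\to},\equi{\times},\equi{\uplus}\}$ and $\tyeqbin{\tau_i}{\sigma_i}$. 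In the base and variable cases the two $\valrel{\emuldvfe{\cdot}}$ clauses of \Cref{fig:emuldv1} are syntactically identical. In a connective case the clauses recurse, at strictly-future worlds ($\strfutw$, hence strictly smaller $\stepsfun{\cdot}$), on the $\tyeq$-equal components, so the induction hypothesis applies via the first component of the measure. In a leading-$\mu$ case the crucial point is that $\emuldvfe{\cdot}$---unlike $\emuldv{\cdot}$---unfolds $\oth{\mat}$ \emph{without consuming a step}, reducing to $\valrel{\emuldvfe{m'+1;p;\oth{\tau\subo{\mat}{\alpha}}}}$ at the same $\W$; contractiveness guarantees $\lMuCount{\oth{\tau\subo{\mat}{\alpha}}}<\lMuCount{\oth{\mat}}$, so the induction hypothesis applies via the second component. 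Finally, the $p$-index and the side-condition $\oth{v}\in\oftypeo{\tau}$ survive the type swap since $\equi{\tau}$ and $\equi{\sigma}$ have exactly the same inhabitants, again by \Cref{tr:t-eq}.

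\textbf{Main obstacle.} I expect the real content to be precisely this bookkeeping around the coinductive equality: one cannot recurse on the $\tyeq$ derivation, so the proof must be organised around the right terminating measure, interleaving the $\lMuCount$-decrease coming from the $\emuldvfe$ $\mu$-clause (which deliberately keeps the step index fixed---the very feature that elsewhere forces the move to \NewTerm) with the step-index decrease coming from the $\later$/$\strfutw$ in the connective clauses, checking that these two measure components are never both stationary in the same recursive call. Proving the required inversion property of $\tyeq$ and wiring up the well-founded induction is the heart of the argument; once available, every case is a direct comparison of the corresponding $\valrel{\emuldvfe{\cdot}}$ clauses, which is why the final compatibility lemma comes out ``surprisingly simple.'' Most of this machinery---the $\tyeq$ inversion and the $\lMuCount$-based well-foundedness---is already in place for \Cref{thm:tyeq-same-uval}, so the corollary is a mild strengthening rather than genuinely new infrastructure.
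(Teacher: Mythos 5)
Your proposal matches the paper's own argument: the lemma is discharged by showing that $\tyeq$-equal types induce the same relations---exactly the content of \Cref{thm:fe-eq-type-eq-rels}, obtained as a corollary of \Cref{thm:tyeq-same-uval}---after which the compatibility lemma is immediate by unfolding the logical relation, with the typing side-conditions handled via \Cref{thm:tyeq-same-uval} and \Cref{tr:t-eq}. Your additional sketch of how to establish that corollary (lexicographic induction on step index and $\lMuCount$, using contractiveness and inversion of the coinductive equality) fills in details the paper omits but is consistent with its stated reasoning, so this is essentially the same proof.
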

\begin{corollary}[Equivalent types have the same term relation]\label{thm:fe-eq-type-eq-rels}
  \begin{align*}
    \text{if }
      &
      \
      \tyeqbin{\tau}{\sigma} 
        \text{ then } \forall n\ldotp
          \termrel{\emuldvfe{n;p;\tau}} = \termrel{\emuldvfe{n;p;\sigma}}
  \end{align*}
\end{corollary}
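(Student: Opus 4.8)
The plan is to reduce the stated equality of term relations to an equality of value relations and then prove the latter by a single well-founded induction. Inspecting \Cref{fig:logrel-main}, $\contrel{\cdot}$ depends on its pseudo-type argument only through $\valrel{\cdot}$, and $\termrel{\cdot}$ only through $\contrel{\cdot}$; so the corollary (for every $n$ and the fixed $p$) follows at once from
\[
  \tyeqbin{\tau}{\sigma}\ \Rightarrow\ \forall n, p\ldotp\ \valrel{\emuldvfe{n;p;\tau}} = \valrel{\emuldvfe{n;p;\sigma}}.
\]
For $n=0$ this is immediate, since $\valrel{\emuldvfe{0;\imprecise;\tau}}$ and $\valrel{\emuldvfe{0;\precise;\tau}}$ (\Cref{fig:emuldv1}) do not mention $\tau$; so assume $n = n'+1$.

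I would first dispatch the two well-typedness conjuncts of $\valrel{\emuldvfe{n'+1;p;\cdot}}$. The source conjunct constrains $\src{v}$ to the backtranslation type $\emtotaufe{\emuldvfe{n'+1;p;\tau}} = \uvalfe{n'+1;\tau}$ (\Cref{fig:emuldv-aux}), and $\uvalfe{n'+1;\tau} = \uvalfe{n'+1;\sigma}$ by \Cref{thm:tyeq-same-uval}, so this conjunct is the \emph{same} for $\tau$ and $\sigma$; the target conjunct $\equi{v}\in\oftypeo{\tau}$ is the same for $\tau$ and $\sigma$ as well, since $\tyeqbin{\tau}{\sigma}$ with \Cref{tr:tyeq-symm} gives $\othe\vdash\equi{v}:\equi{\tau} \iff \othe\vdash\equi{v}:\equi{\sigma}$ by \Cref{tr:t-eq}. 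The ``$\src{v} = \src{\inr{\units}}$ with $p = \imprecise$'' disjunct is manifestly type-independent, so only the structural disjunct remains, and for it I proceed by induction on $(\stepsfun{\W},\ \lMuCount{\tau} + \lMuCount{\sigma})$ ordered lexicographically --- the natural two-sided version of the measure on which $\valrel{\emuldvfe{\cdot}}$ itself is well-founded (Kripke step-index plus the leading-$\mu$-count of \Cref{sec:typ}). Because $\tyeq$ is coinductively defined, $\tyeqbin{\tau}{\sigma}$ must match the conclusion of one of \Cref{tr:tyeq-prim,tr:tyeq-var,tr:tyeq-bin,tr:tyeq-mu-l,tr:tyeq-mu-r} with its premises again in $\tyeq$ (the reflexivity, symmetry and transitivity rules being admissible, \Cref{sec:typ}), and I case on which.

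\Cref{tr:tyeq-prim} and \Cref{tr:tyeq-var} force $\tau = \sigma$ and are trivial. For \Cref{tr:tyeq-mu-l}, $\tau$ is a recursive type and, by definition, the $\mu$-clause of $\valrel{\emuldvfe{n'+1;p;\tau}}$ coincides with $\valrel{\emuldvfe{n'+1;p;\tau'}}$ for $\tau'$ the one-step unfolding of $\tau$: the well-typedness conjuncts of both sides agree (again by \Cref{thm:tyeq-same-uval} and \Cref{tr:t-eq}, now comparing $\tau$ with its unfolding), the ``$\src{\unk}$'' disjunct is shared, and --- crucially --- the definitions of $\uvalfe{\cdot}$, $\emtotaufe{\cdot}$ and $\valrel{\emuldvfe{\cdot}}$ all peel the leading $\mu$ in lockstep without spending a step index (the ``key observation'' of \Cref{fig:emuldv1}). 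Since unfolding a contractive type strictly decreases $\lMuCount{\cdot}$ (\Cref{sec:typ}), the induction hypothesis applies to the premise $\tyeqbin{\tau'}{\sigma}$ of \Cref{tr:tyeq-mu-l} and closes the case; \Cref{tr:tyeq-mu-r} is symmetric. For \Cref{tr:tyeq-bin}, $\tau = \tau_1 \star \tau_2$ and $\sigma = \sigma_1 \star \sigma_2$ with $\tyeqbin{\tau_i}{\sigma_i}$, and the structural disjunct reduces --- through the intermediate structural pseudo-types of \Cref{fig:logrel-main} --- to a conjunction of memberships in $\later$-guarded value relations at the subtypes $\tau_i$, $\sigma_i$; as $\later R$ only constrains $R$ at the strictly smaller world $\laterfun{\W}$, the induction hypothesis (now at decreased $\stepsfun{\W}$) gives equality of those relations and closes the case.

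The only genuine obstacle is that $\tyeq$ is coinductive, so equality derivations may be infinite and there is no structural induction on them to perform; the remedy is precisely the $(\stepsfun{\W}, \lMuCount{\cdot})$ measure above together with contractivity, which guarantees that only finitely many $\mu$-unfolding rules separate two structural ($\star$- or base-) rules. The remaining work --- turning the clause-by-clause matching into a genuine set equality and checking the trivial sub-cases --- is routine bookkeeping of the kind already needed to define the logical relation, and the argument transfers verbatim to $\emuldvic{\cdot}$, the analogous pseudo-type for the backtranslation from \stlcem to \stlcim.
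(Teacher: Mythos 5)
Your overall route is the intended one: the paper gives no separate argument for this corollary beyond deriving it from \Cref{thm:tyeq-same-uval}, and the way to discharge it is essentially what you do --- reduce equality of $\termrel{\cdot}$ (via $\contrel{\cdot}$) to equality of $\valrel{\cdot}$, and prove the latter by a well-founded induction on the pair (step index of the world, leading-$\mu$-count), inverting the head rule of the coinductive equality, with \Cref{thm:tyeq-same-uval} and \Cref{tr:t-eq} disposing of the typing conjuncts, the $\mu$-clause of $\valrel{\emuldvfe{\cdot}}$ unfolding in lockstep with \Cref{tr:tyeq-mu-l,tr:tyeq-mu-r} without spending a step, and contractiveness guaranteeing that unfolding strictly decreases $\lMuCount{\cdot}$. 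Your treatment of refl/symm/trans as admissible and your handling of the $n=0$ and binary-constructor cases also match what the mechanisation has to do.

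One step is not the ``routine bookkeeping'' you claim, however. In the arrow case the structural clause of the value relation does not consist only of later-guarded memberships at the subtypes: it also fixes the syntactic shape of the two values, and as displayed in \Cref{fig:logrel-main} the target value must be a lambda whose annotation is exactly $\srctoothty{\cdot}$ of the domain pseudo-type, i.e.\ $\equi{\tau_1}$ on one side and $\equi{\sigma_1}$ on the other. \Cref{thm:tyeq-same-uval} makes the \emph{source} annotations literally equal ($\uvalfe{n;\equi{\tau_1}}=\uvalfe{n;\equi{\sigma_1}}$), which you use, but $\equi{\tau_1}$ and $\equi{\sigma_1}$ are in general only related by $\tyeq$, not syntactically equal, so under a literal ``analogous'' reading of the \lrfe relation the two value relations are not equal as sets (a pair containing $\oth{\lam{x:\tau_1}{t}}$ would lie in one and not the other), and your induction stalls exactly there. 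To close this you must either note that the equi-side clause constrains the annotation only up to typing --- which is closed under $\tyeq$ by \Cref{tr:t-eq}, and is what makes the corollary true in the formal development --- or weaken your intermediate claim from set equality of value relations to a statement that is stable under replacing the annotation by a $\tyeq$-equal type. Apart from this point, the argument is sound and transfers, as you say, to $\emuldvic{\cdot}$.
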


Given a series of these kinds of compatibility lemmas, we can state that emulate is correct.
\begin{lemma}[Emulate is semantics-preserving]\label{thm:emul-sem-pres}
\begin{align*}
  \text{if }  
    &\
    (m > n  \text{ and } p = \precise) \text{ or } (\anylogrel = \underlogrel \text{ and } p = \imprecise)
  \text{ and }
    \trg{\Gat} \vdash \trg{t} : \trgb{\tau}
  \\
  \text{then }
    &\
    \toemul{\trg{\Gat}}{m;p}\vdash \emulate{m}{\trg{\Gat}\vdash\trg{t}:\trgb{\tau}} \anylogreln{n} \trg{t}: \emuldv{m;p;\trgb{\tau}}
\end{align*}
\end{lemma}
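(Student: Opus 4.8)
The plan is to prove this by a single induction on the typing derivation $\trg{\Gat}\vdash\trg{t}:\trgb{\tau}$, reducing every case to the corresponding compatibility lemma for \emulate{m}{\cdot} (of which \Cref{thm:compat-lem-backtr-lam} and \Cref{thm:fe-compat-lem-backtr-eq} are the representatives). The first observation to record is that the side condition ``$(m > n$ and $p = \precise)$ or $(\anylogrel = \underlogrel$ and $p = \imprecise)$'' is an invariant of the induction: the approximation level $m$, the precision $p$, and the direction $\anylogrel$ are fixed throughout, so every recursive appeal to the lemma is legitimate and the hypothesis is exactly what each compatibility lemma demands. The second routine ingredient is environment bookkeeping: when a rule binds a variable (e.g.\ the $\lambda$ and $\mathsf{case}$ rules), I would use that $\toemul{\trg{\Gat,x:\trgb{\tau}}}{m;p}$ is definitionally $\toemul{\trg{\Gat}}{m;p}$ extended with $x$ at pseudo-type $\emuldv{m;p;\trgb{\tau}}$ (source type $\uval{m;\trgb{\tau}}$), so the induction hypothesis for the body matches the premise of the matching compatibility lemma verbatim.

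The base cases are immediate: the $\unitt$/$\truet$/$\falset$ cases unfold to $\indn{m;\cdot}$ applied to a literal and follow from the compatibility lemma for constants, while the variable case follows from the compatibility lemma for variables, itself a direct consequence of the environment relation forcing substitutions to supply related values at the relevant $\emuldv{}$ pseudo-type. The inductive cases split into two groups. For the structural constructors shared by all three languages (application, pairing, projections, injections, $\mathsf{case}$, conditionals, sequencing) and for the \stlcim-specific \iso{fold}/\iso{unfold} rules, I would take the induction hypotheses at the subderivations — at the same step index $n$, since the $\later$-guarding in the value relations for $\times$, $\uplus$, and recursive types is already absorbed into the $\anylogreln{n}$ formulation that the compatibility lemmas use — and feed them to the matching compatibility lemma together with the definition of \emulate{m}{\cdot} on that constructor, which inserts the appropriate $\indn{m;\cdot}$ or $\caseup{m;\cdot}$ wrapper. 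The \stlcem case for \Cref{tr:t-eq} is the one worth spelling out: since \emulate{m}{\cdot} simply forwards across a type-equality step, the result for the subderivation at type $\othb{\tau}$ transports to $\othb{\sigma}$ by \Cref{thm:fe-compat-lem-backtr-eq}, which is cheap because \Cref{thm:tyeq-same-uval} and \Cref{thm:fe-eq-type-eq-rels} give that $\tyeqbin{\tau}{\sigma}$ implies $\uvalfe{n;\equi{\tau}}=\uvalfe{n;\equi{\sigma}}$ and hence equality of the entire term relation.

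The main obstacle I anticipate is making the induction well-founded for the \stlcem emulations. The type-equality rule \Cref{tr:t-eq} is not syntax-directed and, because $\tyeq$ is reflexive (\Cref{tr:tyeq-refl}), a term-typing derivation may contain an unbounded — even non-terminating — chain of consecutive type-equality steps, so a naive induction on the derivation does not obviously terminate, and neither does structural induction on $\equi{t}$ alone, since a type-equality step leaves the term unchanged. The resolution I would pursue is to never recurse into the type-equality block: induct on a well-founded measure of the term (its AST, with \stlcem contributing no new syntactic forms), and at each node peel off the entire leading run of type-equality applications in one move, appealing to the fact packaged by \Cref{thm:fe-eq-type-eq-rels} that the backtranslation relation is literally invariant under $\tyeq$, so the shape and length of that run are irrelevant. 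This is the one spot where the coinductive nature of $\tyeq$ has to be tamed; everywhere else the proof is a mechanical case traversal. Once this scaffolding is in place, assembling the conclusion — including the $\lesssim$-versus-$\gtrsim$ and $\precise$-versus-$\imprecise$ case split, which is threaded through unchanged — amounts to collecting the per-constructor obligations already discharged by the compatibility lemmas.
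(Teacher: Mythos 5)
Your proposal follows essentially the same route as the paper: \Cref{thm:emul-sem-pres} is obtained by induction on the typing derivation, with every case discharged by the corresponding compatibility lemma for emulation (\Cref{thm:compat-lem-backtr-lam} and \Cref{thm:fe-compat-lem-backtr-eq} being the representatives the paper states), the side condition on $m$, $n$, $p$ and $\anylogrel$ threaded through unchanged, and the \Cref{tr:t-eq} case handled via \Cref{thm:fe-compat-lem-backtr-eq}, which in turn rests on \Cref{thm:tyeq-same-uval} and \Cref{thm:fe-eq-type-eq-rels}.

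The one point where you deviate is the ``main obstacle'' you anticipate, and it is not a real obstacle. Only the type-equality relation $\tyeq$ is coinductive; the term-typing judgement of \stlcem is an ordinary inductive judgement, in which \Cref{tr:t-eq} has a finite term-typing subderivation as premise and the (possibly coinductively justified) equality $\tyeqbin{\tau}{\sigma}$ only as a side condition. Every term-typing derivation is therefore a finite tree, and structural induction on it is well-founded regardless of how many consecutive \Cref{tr:t-eq} steps it contains --- indeed, \emulate{n}{\cdot} is itself defined by structural recursion on exactly that derivation (its \Cref{tr:t-eq} clause just recurses on the subderivation), so the induction proving the lemma is the same recursion scheme on which the function is defined. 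Your workaround --- inducting on a term-size measure and peeling off maximal runs of equality steps via the $\tyeq$-invariance of the relation --- would also go through for these finite derivations, but it is unnecessary machinery; the plain derivation induction plus \Cref{thm:fe-compat-lem-backtr-eq} is all the paper needs, and all you need.
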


The key property we rely on for fully-abstract compilation though, is that emulation of contexts is correct (this relies on correctness of emulation for terms though).
\begin{lemma}[Emulate is semantics preserving for contexts]\label{lem:emulate-works-ctx}\label{thm:emul-ctx-sem-pres}
\footnotesize
\begin{align*}
  \text{if }
    &\
    (m > n  \text{ and } p = \precise) \text{ or } (\anylogrel = \underlogrel \text{ and } p = \imprecise)
    \text{ and }
    \vdash \ctxt : \trg{\Gat'},\trgb{\tau'} \to \trg{\Gat},\trgb{\tau}
  \\
  \text{then }
    &\
    \vdash \emulate{m}{\vdash \ctxt : \trg{\Gat'},\trgb{\tau'} \to \trg{\Gat},\trgb{\tau}} \anylogreln{n} \ctxt :
      \toemul{\trg{\Gat'}}{m;p},\emuldv{m;p;\trgb{\tau'}} \to
      \toemul{\trg{\Gat}}{m;p}, \emuldv{m;p;\trgb{\tau}}
\end{align*}
\end{lemma}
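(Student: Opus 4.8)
The plan is to prove the statement by induction on the typing derivation $\vdash\ctxt:\trg{\Gat'},\trgb{\tau'}\to\trg{\Gat},\trgb{\tau}$ of the target context, following exactly the case analysis used for the emulation of terms. By \Cref{def:logrel-ctx} (and its analogues for \lrie\ and \lrfe) the goal unfolds into two obligations. The first is well-typedness: $\emulate{m}{\cdot}$ applied to this derivation must be a \stlcf\ context with hole type $\uval{m;\trgb{\tau'}}$ and result type $\uval{m;\trgb{\tau}}$, and $\ctxt$ must be well-typed at the target images of these pseudo-types — but the latter is immediate, since $\srctotrgty{\cdot}$ maps $\emuldv{m;p;\cdot}$ and $\toemul{\cdot}{m;p}$ back to the original target types. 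The second, interesting, obligation is: for all $\src{t},\trg{t}$ with $\toemul{\trg{\Gat'}}{m;p}\vdash\src{t}\anylogreln{n}\trg{t}:\emuldv{m;p;\trgb{\tau'}}$, plugging $\src{t}$ into the emulated context and $\trg{t}$ into $\ctxt$ yields terms related at $\emuldv{m;p;\trgb{\tau}}$ over $\toemul{\trg{\Gat}}{m;p}$. The first obligation is discharged by the (standard) typing lemma for emulation, used as a black box; the precision/direction side condition ($m>n$ and $p=\precise$, or $\anylogrel=\underlogrel$ and $p=\imprecise$) is threaded unchanged through every case, which is precisely what the emulation compatibility lemmas require.

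For the second obligation I would do the induction. The base case is the hole: the emulated context is the source hole, the typing rule for the empty context forces $\trg{\Gat'}=\trg{\Gat}$ and $\trgb{\tau'}=\trgb{\tau}$, and plugging $\src{t},\trg{t}$ into the two holes returns $\src{t},\trg{t}$, so the goal is literally the hypothesis. For each inductive case, observe that by construction the emulation of a context former is obtained from the emulation of the corresponding term former by propagating the hole through the recursive sub-context position and emulating the remaining (hole-free) immediate subterms with $\emulate{m}{\cdot}$. Hence the emulated context, once its hole is filled by $\src{t}$, has exactly the shape expected by the term-level emulation compatibility lemma for that former — with the emulated sub-context (with $\src{t}$ already plugged) occupying the recursive argument slot and the emulated immediate subterms occupying the others. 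So each case closes by: (a) applying the induction hypothesis to the recursive sub-context, obtaining relatedness of the emulated sub-context with $\src{t}$ plugged and the target sub-context with $\trg{t}$ plugged, at the intermediate pseudo-type and possibly over an environment extended by one variable — here one uses that $\toemul{\trg{\Gat},x:\trgb{\sigma}}{m;p}$ equals $\toemul{\trg{\Gat}}{m;p}$ extended with $x:\emuldv{m;p;\trgb{\sigma}}$; (b) applying \Cref{thm:emul-sem-pres} to each hole-free immediate subterm; and (c) feeding the results to the term-level compatibility lemma for that former — \Cref{thm:compat-lem-backtr-lam} for $\trg{\lamt{x:\trgb{\tau_1}}{\ctxt}}$, and the analogous lemmas for application, pairing, projections, injections, case analysis, conditionals, sequencing, and — only for \stlcim\ contexts — \iso{\fold{}} and \iso{\unfold{}}. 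Each compatibility lemma supplies exactly the $\indn{}$/$\caseup{}$ wrapping that the emulation inserts, so no residual goals remain.

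The emulations $\emulateic{m}{\cdot}$ (\stlcem\ into \stlcim) and $\emulateef{m}{\cdot}$ (\stlcem\ into \stlcf) are handled identically, since \stlcem\ contexts form a syntactic subset of \stlcim\ ones; the only genuinely new case is the context-level type-equality rule (\cref{tr:stlce-typectx-eq}), where the emulated context is unchanged and the second obligation reduces to transporting relatedness from pseudo-type $\emuldvfe{m;p;\othb{\tau}}$ to $\emuldvfe{m;p;\othb{\sigma}}$ under $\tyeqbin{\tau}{\sigma}$ — exactly \Cref{thm:fe-compat-lem-backtr-eq}, itself a corollary of \Cref{thm:tyeq-same-uval} and \Cref{thm:fe-eq-type-eq-rels}. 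I expect the main difficulty to be purely bookkeeping: keeping the precision/direction invariant synchronized on every descent into a sub-context or subterm (so that the inner appeals to the compatibility lemmas and to \Cref{thm:emul-sem-pres} remain legitimate), and keeping $\toemul{\cdot}{m;p}$ distributed correctly over extended typing environments so that the hole type and environment produced by the induction hypothesis match those demanded by the outer compatibility lemma. Once these are tracked with care, the argument is a routine structural induction with no further subtleties.
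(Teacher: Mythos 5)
Your proposal is correct and matches the approach the paper intends (and carries out in the Coq mechanisation): a structural induction on the context typing derivation whose hole case is immediate, whose other cases are closed by the induction hypothesis, \Cref{thm:emul-sem-pres} for hole-free subterms, and the term-level emulation compatibility lemmas (e.g.\ \Cref{thm:compat-lem-backtr-lam}), with the \stlcem{} type-equality rule handled via \Cref{thm:fe-compat-lem-backtr-eq}. The precision/direction side condition and the treatment of $\toemul{\cdot}{m;p}$ on extended environments are threaded exactly as the paper does, so no gap remains.
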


\subsubsection{Inject and Extract}\label{sec:injext}
As mentioned, the backtranslated target context must be a valid source context in order to be linked with a source term.
Specifically, it must have a hole whose type is the compilation of some source type $\src{\tau}$. Backtranslated terms, however, have backtranslation type $\uvalfe{n;\tau}$, so we need to convert values of source type into values of backtranslation type (and back).
To do this conversion we rely on functions \inject{} and \extract{} whose types and definitions are in \Cref{fig:injext}.
Function \inject{} takes a source value of type \src{\tau} and converts it into ``the same'' value at the backtranslation type so that backtranslated terms can use that value.
Since the backtranslation type is indexed by target types, we use function \srctotrgty{\cdot} to generate the target type related to \src{\tau}.
Function \extract{} does the dual and takes a value of backtranslation type and converts it into a type of some source type.
These functions are defined mutually inductively in order to contravariantly convert function arguments to the appropriate type.

\begin{figure}[!ht]
\small
\mytoprule{
  \inject{n;\tau} : 
    \src{\src{\tau}\to\uval{n;\srctotrgty{\src{\tau}}}}
  \quad\text{ and }\quad
  \extract{n;\tau} : 
    \src{\uval{n;\srctotrgty{\src{\tau}}}\to\src{\tau}}
}

  \scalebox{0.95}{%
  \begin{minipage}{\textwidth}
  \begin{gather*}
  \begin{aligned}
    \inject{0;\tau} =
      &\
      \src{\lam{x:\tau}{\units}}
    &&&
    \inject{n+1;\Units} =
      &\
      \src{
        \lam{x:\Units}{\inl{x}}
      }
    &&&
    \inject{n+1;\Bools} =
      &\
      \src{
        \lam{x:\Bools}{\inl{x}}
      }
  \end{aligned}
  \\
  \begin{aligned}
    \inject{n+1;\tau\to\tau'} =
      &\
      \src{
        \begin{aligned}[t]
          &
          \src{\lam{x:\tau\to\tau'}{}}
            \src{\inl{}}
              \src{\lam{y:\uval{n;\srctotrgty{\src{\tau}}}}{}}
                \injectpar{n;\tau'}{
                  x~
                  (
                    \extract{n;\tau}{
                      y
                    }
                  )
                }
        \end{aligned}
      }
    \\
    \inject{n+1;\tau\times\tau'} =
      &\
      \src{ 
        \begin{aligned}[t]
          &
          \src{\lam{x:\tau\times\tau'}{}}
            \src{\inl{}}
              \pair{
                \injectpar{n;\tau}{\projone{x}}
                ,
                \injectpar{n;\tau'}{\projtwo{x}}
              }
        \end{aligned}
      }
    \\
    \inject{n+1;\tau\uplus\tau'} =
      &\
      \src{
        \begin{aligned}[t]
          &
          \src{\lam{\src{x:\tau\uplus\tau'}}{}}
            \src{\inl{}}
          \src{
              \caseof{
                x
              }{
                \src{\inl{(\inject{n;\tau} x_1)}}
              }{
                \src{\inr{(\inject{n;\tau'} x_2)}}
              }
          }
        \end{aligned}
      }
  \end{aligned}
  \\
  \begin{aligned}
    \extract{0;\tau} =
    &\
    \src{
      \lam{x:\uval{n;\srctotrgty{\src{\tau}}}}{\myomega_\tau}
    }
  \\
  \extract{n+1;\Units} =
    &\
    \src{
      \lam{x:\uval{n+1;\Unitt}}{ \casetag{n+1;\Unitt} x }
    }
  &&&
  \extract{n+1;\Bools} =
    &\
    \src{
      \lam{x:\uval{n+1;\Boolt}}{ \casetag{n+1;\Boolt} x }
    }
  \end{aligned}
  \\
  \begin{aligned}
  \extract{n+1;\tau\to\tau'} =
    &\
    \src{
      \begin{aligned}[t]
        &
        \lam{\src{x:\uval{n+1;\srctotrgty{\src{\tau\to\tau'}}}}}{
          \src{
            \lam{y:\tau}{
              \extract{n;\tau'} \left(
                \casetag{n;\srctotrgty{\src{\tau\to\tau'}}} x \left(\inject{n;\tau}~y \right) 
              \right)
            }
          }
        }
      \end{aligned}
    }
  \\
  \extract{n+1;\tau\times\tau'} =
    &\
    \src{
      \lam{x:\uval{n+1;\srctotrgty{\src{\tau\times\tau'}}}}{ 
        \pair{
          \begin{aligned}
            &
            \extract{n;\tau} \left(\projone{\casetag{n;\srctotrgty{\src{\tau}}} \src{x}}\right) , 
            \\
            &
            \extract{n;\tau'} \left(\projtwo{\casetag{n;\srctotrgty{\src{\tau'}}} \src{x}}\right) 
          \end{aligned}
        }}
    }
  \\
  \extract{n+1;\tau\uplus\tau'} =
    &\
    \src{
        \lam{\src{x:\uval{n+1;\srctotrgty{\src{\tau\uplus\tau'}}}}}{
          \src{
            \casefoldeds{
              \left(\casetag{n;\srctotrgty{\src{\tau\uplus\tau'}}} \src{x}\right)
            }{ 
              \src{\inl{\extract{n;\srctotrgty{\src{\tau}}} \src{x_1} }}
            }{ 
              \src{\inr{\extract{n;\srctotrgty{\src{\tau'}}} \src{x_2} }}
            }
          }
        }
    }
  \end{aligned}
  \end{gather*}
  \end{minipage}
  }
  \smallskip
\hrule
  \begin{align*}
    \injectic{n+1;\matt} =
      &\
      \trg{
        \begin{aligned}[t]
          &
          \lamt{
            \trg{x:\matt}
          }{
            \trg{\injectic{n+1;\tat\subt{\matt}{\alpt}}\ (\unfold{\matt}\ x)}
          }
        \end{aligned}
      }
    \\
    \extractic{n+1;\matt} =
      &\
      \trg{
        \begin{aligned}[t]
          &
          \lamt{
            \trg{x:\uvalic{n+1;\srctotrgtyic{\matt}}}
          }{
            \extractic{n+1; \matt } \trg{\fold{\matt}}\ \trg{(\casetagic{n+1;\srctotrgtyic{\matt}}\ x)}
          }
        \end{aligned}
      }
    \\
      &
      \text{omitted cases are as above}
  \end{align*}  
\hrule
  \begin{align*}
    \injectfe{n;\tau} \isdef&\ \text{ as above}
      &
      \extractfe{n;\tau} \isdef&\ \text{ as above}
  \end{align*}
  \caption{\label{fig:injext}Definition of the \src{inject} and \src{extract} functions.}
\end{figure}
For values of the base type, these functions use the already introduced constructors and destructors for backtranslation type to perform their conversion.
For pair and sum types, these functions operate recursively on the structure of the values they take in input.
For arrow type, these functions convert the argument contravariantly before converting the result after the application of the function.
When the size of the type is insufficient for these functions to behave as expected (i.e., when $n$ is $0$) it is sufficient for \inject{} to return \units and for \extract{} to just diverge.
\begin{example}[The need for \extract{}]\label{ex:inj}
  Consider the emulated term from \Cref{ex:emu}: \src{\inl{\trues}}, which is the result of emulating \trg{\trge\vdash\truet:\Boolt}.
  Ideally, we want to extract that term into type \src{\Bools} at index \src{1}, in order to strip the underlying \trues of the outer \src{\inl{\cdot}}.
  That is precisely what \extract{1;\Bools} does:
  \begin{align*}
    &\
    \src{(\extract{1;\Bools})}~ \src{\inl{\trues}}
    =
    \\
    &\
    \src{
      \lam{x:\uval{1;\Boolt}}{ \casetag{2;\Boolt} x }
    }~ \src{\inl{\trues}}
    \\
    &
    \text{ which by definition of \casetag{\cdot} becomes }
    \\
    &\
    \src{(
      \lam{y:\uval{1;\Boolt}}{ 
          (\lam{x:\uval{2;\Boolt}}{
        \caseof{x}{x_1}{\myomega_{\uval{2;\Boolt}}})
      }
       y }
    )}~ \src{\inl{\trues}}
  \end{align*}
  After two reduction steps, this term becomes the expected \trues, which can be used at the expected \Bools type.
\end{example}

Note that these functions are indexed by \emph{source} types since they convert between them and the backtranslation type.
Thus, while two of our compilers have the same source language (and therefore the same \src{inject}/\src{extract}), the third compiler has a different source language, with more types: \iso{\matt}.
Thus, for the third backtranslation, we have a different, extended version of \injectic{}/\extractic{} that converts values of recursive types into values of backtranslation type and back.
Additionally, the hole of the first two backtranslations cannot have a recursive type, since the source type for those backtranslations is \stlcf.

As for the emulation of terms, we prove that these functions are correct according to the logical relations.
Terms that are related at a source type are related at backtranslation type after an \inject{} while terms that are related at backtranslation type are related at source type after an \extract{}.
\begin{lemma}[Inject and extract are semantics preserving]\label{thm:inj-ext-sem-pres}
\begin{align*}
  \text{ If }
    &\
    (m \geq n \text{ and } p = \precise) \text{ or } (\anylogrel = \underlogrel \text{ and } p = \imprecise)
  \\
  \text{then }
    &\
    \begin{aligned}[t]
      \text{if }
        &\
        \src{\Gamma} \vdash \src{t} \anylogreln{n} \trg{t} : \src{\tau}
      \text{ then }
        \src{\Gamma} \vdash \inject{m;\tau}\ \src{t} \anylogreln{n} \trg{t} : \emuldv{m;p;\srctotrgty{\src{\tau}}}
    \end{aligned}
    \\
    &\
    \begin{aligned}[t]
      \text{if }
        &\
        \src{\Gamma} \vdash \src{t} \anylogreln{n} \trg{t} : \emuldv{m;p;\srctotrgty{\src{\tau}}}
      \text{ then }
        \src{\Gamma} \vdash \extract{m;\tau}\ \src{t} \anylogreln{n} \trg{t} : \src{\tau}
    \end{aligned}
\end{align*}
\end{lemma}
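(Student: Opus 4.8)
The plan is to prove both statements — the one for \inject{} and the one for \extract{} — simultaneously, by well-founded recursion on the pair $(m,\src{\tau})$ ordered lexicographically, with $m$ first and then $\src{\tau}$ ordered by $(\lMuCount{\src{\tau}},|\src{\tau}|)$. Every recursive call inside \inject{} and \extract{} either strictly lowers $m$ (the base-type, product, sum and arrow clauses all recurse at index $m-1$) or keeps $m$ fixed while strictly lowering $\lMuCount{\cdot}$ (the recursive-type clauses of \injectic{}\ and \extractic{}, whose bodies work at the unfolding $\tat\subt{\matt}{\alpt}$ with the same index); contractivity of $\matt$ guarantees the latter, exactly as in \Cref{sec:typ}. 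For the two \stlcf-sourced backtranslations there is no recursive-type clause, so the structural component is never even needed. The mutual recursion between the two statements is essential only in the arrow case, where \inject{} calls \extract{} on the function argument and vice versa, contravariantly.

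Because \inject{} and \extract{} rewrite only the source side — the target term $\trg{t}$ is literally unchanged on both sides of each implication — I would first use the standard ``related terms plug into related evaluation contexts'' reasoning to reduce each implication to its value-level core: for every world $\W$ admitted by the step index, every $(\W,\src{v},\trg{v})\in\valrel{\src{\tau}}$ should yield $(\W,\inject{m;\tau}\ \src{v},\trg{v})\in\termrel{\emuldv{m;p;\srctotrgty{\src{\tau}}}}$, and dually $(\W,\src{v},\trg{v})\in\valrel{\emuldv{m;p;\srctotrgty{\src{\tau}}}}$ should yield $(\W,\extract{m;\tau}\ \src{v},\trg{v})\in\termrel{\src{\tau}}$. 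Each such goal is discharged by reducing the \inject{} or \extract{} application a bounded number of steps (handled by antireduction on the source side, which does not consume the step index), landing on a value assembled from sub-\inject{}s and sub-\extract{}s at strictly smaller measure, invoking the induction hypothesis on those, and repackaging the result with the matching clause of $\valrel{\emuldv{\cdot}}$. The easy cases are routine: at $m=0$, $\inject{0;\tau}\ \src{v}$ reduces to \units, which $\valrel{\emuldv{0;\imprecise;\cdot}}$ relates to any target value — and in the \precise\ case $\valrel{\emuldv{0;\precise;\cdot}}=\emptyset$, but the side condition then forces $n=0$, where the observation relation is vacuous — while $\extract{0;\tau}\ \src{v}$ diverges, which trivially satisfies $\underlogrel$ (and is again vacuous at $n=0$ in the \precise\ case); the \Units\ and \Bools\ cases fall straight out of the definitions of \casetag{}\ and of $\valrel{\emuldv{m;p;\Unitt}}$, $\valrel{\emuldv{m;p;\Boolt}}$; the $\src{\tau_1}\times\src{\tau_2}$ and $\src{\tau_1}\uplus\src{\tau_2}$ cases recurse componentwise at index $m-1$, the $\later{}$/world bookkeeping being absorbed by monotonicity of $\valrel{\cdot}$ in worlds and downward closure in the step index; and the $\matt$ clause (only for \injectic{}/\extractic{}) strips the $\fold{}$ annotation, recurses at the same $m$ and smaller $\lMuCount{\cdot}$, and repackages with the $\matt$-clause of $\valrel{\emuldv{m;p;\matt}}$, which (like \uvalic{\cdot}) consumes no index.

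The main obstacle is the arrow case $\src{\tau}=\src{\tau_1}\to\src{\tau_2}$. Here, for \extract{}, once the incoming backtranslated value has been exposed as an $\src{\inl{(\cdot)}}$ and the \extract{}-application has reduced to a source $\lambda$, placing that $\lambda$ in $\valrel{\src{\tau_1}\to\src{\tau_2}}$ forces me to pick a strictly-future world $\W'$ and an argument pair related at $\valrel{\src{\tau_1}}$, run the argument through $\inject{m-1;\tau_1}$ (induction hypothesis at index $m-1$, type $\src{\tau_1}$), feed the result to the underlying already-related target function, and run its output back through $\extract{m-1;\tau_2}$ (induction hypothesis at index $m-1$, type $\src{\tau_2}$); \inject{} is symmetric. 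The delicate point — and the whole reason the size-bound premise sits in the function clause of $\valrel{\cdot}$ — is that in the $\overlogrel$ direction, invoking the underlying related function requires exhibiting $\size{\trg{v'}}\leq\stepsfun{\W''}$ for its target argument $\trg{v'}$ at the application world $\W''$. This is exactly where the pre-revision proof broke (the scenario worked out in \Cref{ex:need-newterm}); with the \newterm-based observation relation the bound is recoverable, because in the $\overlogrel$ direction one observes target \newterm, and \newterm within $\stepsfun{\W''}$ steps bounds the size of every target term — the argument included — reached along the way, so the argument is no larger than the remaining step budget. I expect the rest of the arrow case (threading \inject{} and \extract{} of argument and result through the contravariant mutual recursion, and checking which world and step index each induction-hypothesis invocation uses, in particular that the side condition $m-1\ge\stepsfun{\W'}$ survives in the \precise\ case since $\stepsfun{\W'}<\stepsfun{\W}\le n\le m$) to be routine given the induction hypotheses.
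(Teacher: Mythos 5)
Your skeleton (mutual induction on $(m,\src{\tau})$, with the $\lMuCount{\cdot}$ measure for the $\matt$ clauses of \injectic{}/\extractic{}, a value-level core discharged by biorthogonality and antireduction, and contravariant mutual calls in the arrow case) matches the paper's factoring of \Cref{thm:inj-ext-sem-pres} into two auxiliary lemmas, one for \inject{} and one for \extract{}. The genuine gap is where you locate the \newterm subtlety: you confine it to the arrow case and declare the product/sum cases ``routine \ldots absorbed by monotonicity''. That is backwards. \Cref{ex:need-newterm} contains no arrows at all: it is a nested sum value whose deep approximation failure makes \extract{} diverge while the target value converges, and the paper's fix is to extend the auxiliary \extract{} lemma with the hypothesis that if $\anylogrel=\overlogrel$ then $\size{\trg{t}}\leq\stepsfun{\W}$, a bound that is threaded through the recursion and, in the paper's own words, ``comes in handy in all the cases for constructors''. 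Without that threaded bound (or an explicit re-derivation of it from the $\obsfun{\W}{\overlogrel}$ premise inside each constructor case) your componentwise step does not close: the components of a related pair/sum are only related under $\later$, i.e.\ at $\laterfun{\W}$, while the shrink assumption you get from $\obsfun{\W}{\overlogrel}$ only bounds the index by $\stepsfun{\W}$, so in the boundary case you cannot invoke the inductive $\termrel{\cdot}$ membership at the later world; and at worlds whose budget is smaller than the depth of the value, $\valrel{\emuldv{m;\precise;\tat}}$ genuinely admits failures below the budget and \extract{} genuinely diverges, so the only way out is to use the size information to show the $\obsfun{\W}{\overlogrel}$ obligation is vacuous (the target value is too large to \newterme{} within the budget). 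Your proposal never states how the constructor cases obtain or exploit this bound, which is precisely the content of the revision this lemma required.

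Relatedly, your account of the arrow case inverts the bookkeeping. In the \extract{} (and \inject{}) arrow cases the needed bound $\size{\trg{v'}}\leq\stepsfun{\W'}$ is not ``recovered because one observes target \newterm'': you are there \emph{establishing} that a freshly built lambda inhabits the function clause of $\valrel{\cdot}$, so that clause's extra premise is an assumption you receive and merely pass on --- to the underlying related function of type $\valrel{\emuldv{m;p;\tat}\src{\to}\emuldv{m;p;\tat'}}$, and, when \inject{} calls \extract{} on the argument, to the size hypothesis of the auxiliary \extract{} lemma. The derive-the-bound-from-$\obsfun{\cdot}{\overlogrel}$ move you describe is real, but the paper deploys it in the compatibility lemma for application, not inside \Cref{thm:inj-ext-sem-pres}. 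So the proposal is right about the induction measure and the overall shape, but it misses the one idea that makes the $\overlogrel$ direction of the \extract{} half provable, and misattributes where and how the size information enters.
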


As mentioned in \Cref{sec:intro}, \Cref{thm:inj-ext-sem-pres} broke with the logical relation that does not define the observation relation $\obsfun{\W}{\bothlogrel}$ in terms of \newterm.
\Cref{ex:need-newterm} below argues why this technical change is needed and what the differences are in the technical development as opposed to the old one of \citet{popl-journal}.

\begin{example}[The Need for \NewTerm]\label{ex:need-newterm}

In this example, assume the logical relation does not rely on $\obsfun{\cdot}{}$, but on the equi-termination observation relation defined below ($\wobsfun{\cdot}{}$ for \mc{W}rong).
\begin{align*}
	\wobsfun{\W}{\underlogrel}\isdef
	&\
	\myset{(\src{t},\trg{t})}{ \text{if } \stepsfun{\W}>n \text{ and } \src{t\bterms{n}v} \text{ then } \exists \trg{k}.~ \trg{t\btermt{k}v} }
	\\
	\wobsfun{\W}{\overlogrel}\isdef
	&\
	\myset{(\src{t},\trg{t})}{ \text{if } \stepsfun{\W}>n \text{ and } \trg{t\btermt{n}v} \text{ then } \exists \src{k}.~ \src{t\bterms{k}v} }
	\\
	\wobsfun{\W}{\bothlogrel}\isdef
	&\
	\wobsfun{\W}{\underlogrel}\cap\wobsfun{\W}{\overlogrel}
\end{align*}
Now take the following two terms (for $m >= 1$):
\begin{align*}
  \tprobs :
  	&\ 
  		\uval{m; \iso{(\Boolt \uplust \Unitt)\uplust\Unitt}}
  	&
  		\tprobt :
  			&\  
  				\iso{(\Boolt \uplust \Unitt)\uplust\Unitt}
  	\\
	\tprobs =
		&\
		\src{\inl{(\inl{(\inl{(\inl{(\inr{\units})})})})}}
	&
	\tprobt =
		&\
		\iso{\inl{(\inl{\truet})}}
\end{align*} 
Intuitively, $\tprobs$ correctly emulates $\tprobt$ but only one level deep: it correctly emulates the outer two $\iso{\inl{}}$ constructors as two $\src{\inl{\inl{ \cdot }}}$ but then bails out by using $\src{\inr{\units}}$, i.e., the right branch of the $\src{\cdots \uplus \Unit}$ in the definition of $\uval{n;\tat}$, which models approximation failure.
For these two terms, we can fulfil the premise of \Cref{thm:inj-ext-sem-pres} for specific $n$ and $m$ and prove that \tprobs and \tprobt are related, but unfortunately we cannot prove the conclusion of the lemma, which amounts to proving that if \tprobt terminates, then \extract{\cdot}\tprobs terminates as well.

Let us first show that the premise of the lemma is satisfied for $\anylogreln{} = \gtrsim{}$ and $n=1$.
This amounts to proving that \tprobs and \tprobt are in the term relation for $\emuldv{m;p;\tat}$, where $\tat = \trg{(\Boolt\uplust\Unitt)\uplust\Unitt}$, $\src{m}=3$ and $\src{p}=\precise$.
For this, we have to prove that they are in the term relation for any world $\W$ whose level is at most $n$, i.e., $1$.
In the case where the level is \com{0} the relation is trivial, since any term is related in a world with no steps.
Since the term relation includes the value relation, it suffices to show that: $(\W, \tprobs, \tprobt)\in\valrel{\emuldv{3;p;\iso{(\Boolt\uplust\Unitt)\uplust\Unitt}}}$.
From the definition of that value relation (\com{n+1} case) it suffices to strip \tprobs of one \src{\inl{\cdot}} and show that the terms are in $\valrel{ \emuldv{2;p;\trg{\Boolt\uplust\Unitt}} \times \emuldv{2;p;\trg{\Unitt}}}$.
From the definition of the value relation for \src{\uplus} it suffices to strip each term of an \com{\inl{\cdot}}, decrease the level of \W by \com{1} (which becomes \com{0}) and show that the resulting terms (\src{\inl{\inl{\inr{\units}}}} and \iso{\inl{\truet}}) are in $\valrel{ \emuldv{2;p;\trg{\Boolt\uplust\Unitt}}}$.
Again from the definition of the value relation for \emuldv{\cdot} (\com{n+1} case) it suffices to strip \tprobs of one \src{\inl{\cdot}} and show that the terms are in $\valrel{ \emuldv{1;p;\trg{\Boolt}} \times \emuldv{1;p;\trg{\Unitt}}}$.
From the definition of the value relation for \src{\uplus} it suffices to strip each term of an \com{\inl{\cdot}} and prove that (\src{\inr{\units}} and \iso{\truet}) are in $\later\valrel{ \emuldv{1;p;\trg{\Boolt}}}$.
This is vacuously true from the definition of $\later\valrel{}$ since the world has \com{0} steps. 
It is worth noting that if we had taken $n > 1$, we would not be able to prove that $\tprobs$ and $\tprobt$ are related, since the premise of $\later\valrel{\cdot}$ would be true, but the conclusion would not be (i.e., \src{\inr{\units}} and \iso{\truet} are not in \valrel{ \emuldv{1;p;\trg{\Boolt}}} for any world).

We now focus on the reductions for the problematic case of extract, for which the conclusion of \Cref{thm:inj-ext-sem-pres} does not hold (note that $\src{\tau} = \src{(\Bools\uplus\Units)\uplus\Units}$).
{\small
\begin{align*}
	&
	\src{\extract{2,\tau}~\tprobs}
	\\
	\src{\redgen{3}\ }
	&
   \src{\inl{ (\extract{2,\Bools\uplus\Units}~(\inl{(\inl{(\inr{\units})})}) )}}
  	\\
  	{\isdef\ }
    &
    \src{\inl{\left(
      \left(\lam{\src{x:\uval{2;\tat}}}{
        \src{
            \casefoldeds{
              \left(\casetag{1;\tau} \src{x}\right)
            }{ 
              \src{\inl{(\extract{1;\Bools} \src{x_1} )}}
            }{ 
              \src{\inr{(\extract{1;\Units} \src{x_2} )}}
            }
          }
        }
      \right)
      ~
      \src{(\inl{(\inl{(\inr{\units})})})}
      \right)}}
  \\
  \src{\redgen{3}\ }
    &
      \src{\inl{\left(\inl{(\extract{1;\Bools} \src{\left(\inr{\units}  \right)} )}  \right)}}
  \\
  \isdef\ 
    &
    \src{\inl{\left(
      \inl{\left(
        \left(\lam{\src{x:\uval{1;\Bools}}}{
            \casetag{0;\Bools} \src{x}
          }
        \right)
        ~
      \src{\left(\inr{\units}  \right)}
      \right)}
      \right)}}
    \\
    \src{\redgen{3}\ }
      &
        \src{\inl{\left(\inl{\myomega_{\uval{0;\Boolt}}}  \right)}}  \text{ which diverges}
\end{align*}}%
This breaks \Cref{thm:inj-ext-sem-pres}, since our goal was to prove that \src{\extract{2,\tau}~\tprobs} terminates.

Intuitively, the problem here is that applying extract to a value like \tprobs will diverge whenever there is an approximation failure in the value, no matter how deep in the value.
This approximation failure is ruled out by the value relation, but only for worlds with a sufficiently large step index.
For smaller worlds, whose step index is not large enough to look at the full depth of the term, the lemma simply does not hold as demonstrated by our example.

Fortunately, the observation relation $\obsfun{\cdot}{\cdot}$ from \Cref{fig:logrels-worlds} resolves this issue, so that we can prove the conclusion of \Cref{thm:inj-ext-sem-pres}.
Specifically, given that \W has level $1$, by the definition of $\obsfun{\W}{\overlogrel}$, we need to show that if $\trg{\tprobt\shrinkt{0}v}$ then \tprobs terminates.
This holds vacuously since the premise of the implication is false: it is not true that $\trg{\tprobt\shrinkt{0}v}$ since $\size{\tprobt} = 2$ and $2\not\leq 0$.
In other words, the new observation relation simply rules out worlds whose step index is not large enough to look at the full depth of the term, leaving us with only larger step indices where the problem does not exist.
\end{example}

The \newterm hypothesis of $\obsfun{\cdot}{\cdot}$ shows up in the technical development in only a few places.
For the interested reader, we now give a brief, very technical and succinct overview of where the change impacts the technical development.
Readers who are not experts or not interested are encouraged to skip ahead to \Cref{sec:backtranslations}.

Concretely, \Cref{thm:inj-ext-sem-pres} relies on two auxiliary lemmas, one for \inject{\cdot} and one for \extract{\cdot}.
The latter is extended with an additional hypothesis that if $\anylogreln{} = \gtrsim{}$, then $\size{\iso{t}}\leq\stepsfun{\W}$, which comes in handy in all the cases for constructors.
For example, when proving relatedness of two terms, knowing $\size{\iso{\inl{t}}} \leq \stepsfun{\W}$ lets us rule out the case when $\stepsfun{\W}=0$.

Dually, in the case for \inject{\cdot} for function types, \extract{\cdot} is called on the argument of the function.
In that case we need to prove that the world under consideration has enough steps to ensure \newterm of the argument of the function.
This fact follows from the additional premise in the value relation for function types.

Finally, in the compatibility lemma for application, we have to fulfil this additional premise for function types and show that the \stlcim function argument \newterms.
We get this fact by unfolding a few definitions: from the definition of logical relation and term relation, in the lemma we have to prove that for any related context, the functions applied to the values are in the observation relation.
From the observation relation for $\gtrsim$ we obtain the assumption that the \stlcim function applied to its value \newterms.
From this fact we obtain that just the value \newterms.

\subsubsection{The Backtranslations}\label{sec:backtranslations}
The backtranslation of a target context based on its type derivation is defined as follows by relying on both \emulate{}{\cdot} and \inject{}.
All three backtranslations follow exactly the same pattern and enjoy the same properties.
As already shown, the only interesting changes are in the sub-parts of the backtranslation (e.g., in the different definitions of inject/extract).
Thus, we only show the backtranslation from \stlcim to \stlcf and we state properties only for this one.
\begin{definition}[Approximate backtranslation for \stlcim contexts into \stlcf]\label{def:ctx-approx-backtr}
  {\small
  \begin{align*}
      &
      \backtrstlc{\ctxt, n} \isdef \src{\emulate{n}{\vdash\ctxt:\trg{\Gat,\compstlc{\src{\tau}}}\to\trg{\Gat',\tat'}}\hole{\inject{n;\tau}\cdot}}
      ~~(\text{provided } \vdash\ctxt:\trg{\Gat,\compstlc{\src{\tau}}}\to\trg{\Gat',\tat'})
  \end{align*}}%
\end{definition}

\begin{wrapfigure}{R}{.55\textwidth}
  \centering
  \begin{tikzpicture}[scale=0.8,every node/.style={scale=.9}]

    \node at (0,2)[anchor = east] (a){ $\src{\emulate{n}{\ctxt}[ }$};
    \node[right =of a.east, xshift = 1.5em, anchor = east](b){ {\inject{n;\tau}}};
    \node[right =of b.east, xshift = -1.3em, anchor = east](c){\src{~(t)}};
    \node[right =of c.east, xshift = -1.4em, anchor = east](c1){\src{]}};

    \draw [decorate,decoration={brace,amplitude=10pt}]  ([yshift=.8em]a.west) -- ([xshift=-.1cm,yshift=.8em]b.east)   node [black,midway,yshift=2em] (em){$\src{\backtrstlcif{\ctxt,n}}$};

    \node [anchor = east, below =1.2 of a.south, xshift=2em] (d) { \ctxt\trg{\big[} };
    \node[right =of d.east, xshift = -2em, anchor = east](e){ };
    \node[right =of e.east, xshift = .2em, anchor = east](f){\trg{\compstlc{t}~}};
    \node[right =of f.east, xshift = -1.3em, anchor = east](f1){\trg{\big]}};

    \node[left =of a.west, anchor = east](t){$\src{\backtrstlcif{\ctxt,n}\hole{t}}$};
    \node[anchor = east, below = of t.north,yshift=-.5em] (gl) { $\anylogreln{n}$ }; %
    \node[anchor = east, below = of t.south] (tt) { \trg{\ctxt\hole{\compstlc{t}}} };

    \node[rounded corners,rounded corners, fill=blue!20,below = of tt.south,xshift = -2.5em,yshift=1.5em] (th1){ \phantom{a}};
    \node[,right = of th1.west,xshift = -1.5em] (th1t){ \cref{thm:compstlc-sem-pres}};
    \node[rounded corners,rounded corners, fill=red!20,right = of th1.east,xshift=3em] (th2){ \phantom{a}};
    \node[,right = of th2.west,xshift = -1.5em] (th2t){ \cref{thm:inj-ext-sem-pres}};
    \node[rounded corners,rounded corners, fill=green!20,right = of th2.east,xshift = 3em] (th3){ \phantom{a}};
    \node[,right = of th3.west,xshift = -1.5em] (th3t){ \cref{lem:emulate-works-ctx}};

    \draw[dashed] (-3.3,-0.5) to node[midway,sloped,above,font=\footnotesize](txt){\Cref{thm:backtr-corr}} (-3.3,3);
    \draw[dashed] (-3.3,-0.5) to node[midway,sloped,below,font=\footnotesize](txt){is expanded to this} (-3.3,3);
  \begin{pgfonlayer}{background}
    \node[rounded corners,fit=(c),rounded corners, fill=blue!20] (th1c){};
    \node[rounded corners,fit=(f),rounded corners, fill=blue!20] (th1f){};
    \draw[rounded corners=2em,line width=1em,blue!20,cap=round] (c.south) -- (f.north) node [black,midway] (r1){$\anylogreln{n}$};
  \end{pgfonlayer}
  \begin{pgfonlayer}{veryback}
    \node[rounded corners,fit=(b)(th1c),rounded corners, fill=red!20] (th1b){};
    \node[rounded corners,fit=(e)(th1f),rounded corners, fill=red!20] (th1e){};
    \draw[rounded corners,line width=1em,red!20,cap=round] (b.south) -- (e.north)  node [black,midway,yshift=.3em] (r2){$\anylogreln{n}$};
  \end{pgfonlayer}
  \begin{pgfonlayer}{veryback2}
    \node[rounded corners,fit=(a)(c1)(th1b),rounded corners, fill=green!20] (th1a){};
    \node[rounded corners,fit=(d)(f1)(th1e),rounded corners, fill=green!20] (th1d){};
    \draw[rounded corners=2em,line width=1em,green!20,cap=round] (a.south) -- (d.east) node [black,midway,yshift=.5em] (r3){$\anylogreln{n}$};
  \end{pgfonlayer}
  \end{tikzpicture}
  \caption{
  \label{fig:relat}Diagram representing the relatedness between different bits of the backtranslation and of the compiler.
  }
  \vspace{-2em}
\end{wrapfigure}
As for the compiler from \stlcf to \stlcem, we can derive the backtranslation from \stlcem to \stlcf by composing the backtranslations through \stlcim.
Thus, $\backtrstlcef{t} = \backtrstlcif{\backtrstlcei{t}}$.
Interestingly, this means that the type of \stlcem terms backtranslated into \stlcf is the same as the one for \stlcem terms backtranslated into \stlcim, i.e., the case for \uvalfe{} for \equi{\mat} should not lose precision (as shown in \Cref{fig:uval}).
Notice that the first backtranslation (\backtrstlcei{\cdot}) directs this, since \uvalic{} is simply a collection of \trg{\psdic{\tat}\uplust\psdic{\tat'}} pseudotypes, the second backtranslation (\backtrstlcif{\cdot}) simply relies on the case for \uval{n;\trgb{\tat\uplus\tat'}}.

Using the same approach for the correctness of emulate, we can state that the backtranslations are correct.
For simplicity, we provide a visual representation of this proof in \Cref{fig:relat} (adapted from the work of \citet{Devriese:2016:FCA:2837614.2837618} to our setting).
All of the infrastructure used by the backtranslation (i.e., \inject{}/\extract{} and the \uval{} helpers) have correctness lemmas that follow the same structure of the one for \emulate{}{\cdot}.
Specifically, they relate terms at \emuldv{}, they transform target environments into source ones via function \toemul{\cdot}{} and they have a condition on the different directions of the approximation (the first line in \Cref{thm:compat-lem-backtr-lam,thm:emul-ctx-sem-pres,thm:emul-sem-pres,thm:fe-compat-lem-backtr-eq}).
\begin{lemma}[Correctness of \backtrstlc{\cdot} ]\label{lem:correctness-back-translation}\label{thm:backtr-corr}
\begin{align*}
  \text{ If }
    &\
    (m \geq n \text{ and } p = \precise) \text{ or } (\anylogrel = \underlogrel \text{ and } p = \imprecise)
  \\
  \text{then }
    &\
    \begin{aligned}[t]
      \text{if }  
        \vdash \ctxt : \trge,\compstlc{\tau} \to \trge,\trgb{\tau}
      \text{ and }
        \srce \vdash\src{t} \anylogreln{n} \trg{t} :\src{\tau}
      \text{ then }
        \srce\vdash \src{\backtrstlc{\ctxt, m}\hole{t}} \anylogreln{n} \ctxht{t} : \emuldv{m;p;\trgb{\tau}}
    \end{aligned} 
\end{align*}
\end{lemma}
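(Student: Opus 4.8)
The plan is to unfold \Cref{def:ctx-approx-backtr} and observe that $\src{\backtrstlc{\ctxt, m}\hole{t}}$ is literally the emulated context $\emulate{m}{\ctxt}$ with its hole plugged by the injected term $\inject{m;\tau}~\src{t}$. The statement then follows by composing two correctness results that are already available — \Cref{thm:inj-ext-sem-pres} (\inject{}/\extract{} are semantics preserving) for the plugged term, and \Cref{thm:emul-ctx-sem-pres} (\emulate{}{\cdot} is semantics preserving for contexts) for the emulated context — glued together by the defining property of the logical relation on program contexts (\Cref{def:logrel-ctx}). This is exactly the layered chain of relatednesses drawn in \Cref{fig:relat}.

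First I would feed the hypothesis $\srce\vdash\src{t}\anylogreln{n}\trg{t}:\src{\tau}$, together with the given side condition, into \Cref{thm:inj-ext-sem-pres} to obtain $\srce\vdash\inject{m;\tau}~\src{t}\anylogreln{n}\trg{t}:\emuldv{m;p;\srctotrgty{\src{\tau}}}$. On a genuine source type $\src{\tau}$ the translation $\srctotrgty{\cdot}$ is just the $\stlcim$ recolouring of $\src{\tau}$, which coincides with the (identity-on-types) action of the compiler, so $\srctotrgty{\src{\tau}}=\compstlc{\tau}$ and the injected term is related to $\trg{t}$ at exactly the $\emuldvtext$-pseudo-type expected at the hole of $\emulate{m}{\ctxt}$. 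Next I would apply \Cref{thm:emul-ctx-sem-pres} to the typed context $\vdash\ctxt:\trge,\compstlc{\tau}\to\trge,\trgb{\tau}$, yielding $\vdash\emulate{m}{\ctxt}\anylogreln{n}\ctxt:\toemul{\trge}{m;p},\emuldv{m;p;\compstlc{\tau}}\to\toemul{\trge}{m;p},\emuldv{m;p;\trgb{\tau}}$; since $\toemul{\trge}{m;p}=\srce$, this is a relatedness of program contexts over empty environments with hole-type $\emuldv{m;p;\compstlc{\tau}}$ and result-type $\emuldv{m;p;\trgb{\tau}}$. Finally, unfolding \Cref{def:logrel-ctx} at $n$ steps, relatedness of program contexts says precisely that plugging terms related at the hole pseudo-type produces terms related at the result pseudo-type; plugging $\inject{m;\tau}~\src{t}\anylogreln{n}\trg{t}$ into the two related contexts therefore gives $\srce\vdash\emulate{m}{\ctxt}\hole{\inject{m;\tau}~\src{t}}\anylogreln{n}\ctxht{t}:\emuldv{m;p;\trgb{\tau}}$, which is the desired $\srce\vdash\src{\backtrstlc{\ctxt, m}\hole{t}}\anylogreln{n}\ctxht{t}:\emuldv{m;p;\trgb{\tau}}$.

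I do not expect a genuinely hard step in this lemma itself: it is the ``glue'' that composes the heavy lemmas, and the only point needing attention is threading the side condition $(m\geq n\wedge p=\precise)\vee(\anylogrel=\underlogrel\wedge p=\imprecise)$ through the two invoked lemmas. It matches \Cref{thm:inj-ext-sem-pres} verbatim; \Cref{thm:emul-ctx-sem-pres} is stated with the strict inequality $m>n$, so in the $\precise$ case one either invokes that lemma at step index $n-1$ (using downward-closure of the logical relation in the step index, which holds directly from \Cref{def:logrel-n-steps}) or relies on the fact that the backtranslation is ultimately only used with $m$ strictly above the relevant step count. The remaining obligations, namely $\srctotrgty{\src{\tau}}=\compstlc{\tau}$ on source types and $\toemul{\trge}{m;p}=\srce$, are immediate structural facts. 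All the real difficulty — in particular the delicate \newterm-based reasoning surfaced in \Cref{ex:need-newterm} — has already been discharged inside \Cref{thm:inj-ext-sem-pres} and the compatibility lemmas for \emulate{}{\cdot}, which this proof uses as black boxes.
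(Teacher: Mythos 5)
Your proof is essentially the paper's own: the paper establishes this lemma exactly by unfolding \Cref{def:ctx-approx-backtr} and composing \Cref{thm:inj-ext-sem-pres} (for the injected term at the hole) with \Cref{thm:emul-ctx-sem-pres} (for the emulated context) via relatedness of program contexts, as depicted in the layered diagram of \Cref{fig:relat}; the compiler-correctness layer appearing there merely instantiates your hypothesis $\srce\vdash\src{t}\anylogreln{n}\trg{t}:\src{\tau}$ with $\trg{t}=\compstlc{\src{t}}$ when the lemma is later applied. The $m>n$ versus $m\geq n$ side-condition wrinkle you flag is genuine but is left implicit by the paper too (its proof is given only diagrammatically, with the reconciliation living in the Coq mechanisation), so your proposal matches the intended argument.
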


With correctness of the backtranslation we can prove the preservation direction of fully-abstract compilation for all compilers, following the proof structure of \Cref{fig:fac-dia}.
\begin{theorem}[\compstlc{\cdot} preserves equivalence]\label{thm:contextual-equivalence-preservation}\label{thm:stlc-pres}
  \[
    \text{If }
        \srce \vdash \src{t_1} \ceqs \src{t_2} : \src{\tau}
      \text{ then }
        \trge \vdash \compstlc{t_1} \ceqt \compstlc{t_2} : \compstlc{\tau}
  \]
\end{theorem}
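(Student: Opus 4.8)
The plan is to run the generic preservation argument of \Cref{fig:fac-dia} (right), plugging in the concrete machinery built for \compstlc{\cdot}. Assume $\srce\vdash\src{t_1}\ceqs\src{t_2}:\src{\tau}$. The well-typedness side conditions of \Cref{def:ceq} for $\compstlc{\src{t_1}}$ and $\compstlc{\src{t_2}}$ come for free, since the conclusion of \Cref{thm:compstlc-sem-pres} already includes $\trge\vdash\compstlc{\src{t_i}}:\compstlc{\src{\tau}}$. It then remains to show, for an arbitrary \stlcim context $\ctxt$ with $\vdash\ctxt:\trge,\compstlc{\src{\tau}}\to\trge,\trgb{\tau'}$, that $\ctxht{\compstlc{\src{t_1}}}\termt$ iff $\ctxht{\compstlc{\src{t_2}}}\termt$; by the symmetric roles of $\src{t_1}$ and $\src{t_2}$ it suffices to prove the forward implication, so I would assume $\ctxht{\compstlc{\src{t_1}}}\termt$.

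First I would apply \Cref{thm:term-def-rel} to obtain an $n$ with $\ctxht{\compstlc{\src{t_1}}}\shrinkt{n}$, and form the backtranslated context $\backtrstlc{\ctxt,n}$ of \Cref{def:ctx-approx-backtr}: a well-typed \stlcf context whose hole expects terms of type $\src{\tau}$ and whose (closed) result type is $\uval{n;\trgb{\tau'}}$. The diagram chase then has three steps. Step (1): take $\anylogrel=\gtrsim$, $p=\precise$, and the backtranslation index equal to the logical-relation index $n$; \Cref{thm:compstlc-sem-pres} gives $\srce\vdash\src{t_1}\overlogreln{n}\compstlc{\src{t_1}}:\src{\tau}$, and \Cref{thm:backtr-corr} (its side condition $m\geq n$ with $p=\precise$ being met by $m=n$) gives $\srce\vdash\src{\backtrstlc{\ctxt,n}\hole{t_1}}\overlogreln{n}\ctxht{\compstlc{\src{t_1}}}:\emuldv{n;\precise;\trgb{\tau'}}$; applying adequacy (\Cref{thm:log-rel-adeq-both}, $\gtrsim$ case) to $\ctxht{\compstlc{\src{t_1}}}\shrinkt{n}$ with $n\geq n$ yields $\src{\backtrstlc{\ctxt,n}\hole{t_1}}\termsl$. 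Step (2): $\backtrstlc{\ctxt,n}$ is a legitimate \stlcf context for terms of type $\src{\tau}$, so source contextual equivalence gives $\src{\backtrstlc{\ctxt,n}\hole{t_2}}\termsl$. Step (3): apply \Cref{thm:term-def-rel} again to obtain an $n'$ with $\src{\backtrstlc{\ctxt,n}\hole{t_2}}\shrinks{n'}$, then switch to $\anylogrel=\lesssim$ and $p=\imprecise$ — for which the side condition of \Cref{thm:backtr-corr} imposes no constraint on the backtranslation index, so I keep the same context $\backtrstlc{\ctxt,n}$ — with logical-relation index $n'$; combining \Cref{thm:compstlc-sem-pres} and \Cref{thm:backtr-corr} gives $\srce\vdash\src{\backtrstlc{\ctxt,n}\hole{t_2}}\underlogreln{n'}\ctxht{\compstlc{\src{t_2}}}:\emuldv{n;\imprecise;\trgb{\tau'}}$, and adequacy ($\lesssim$ case), applied to $\src{\backtrstlc{\ctxt,n}\hole{t_2}}\shrinks{n'}$ with $n'\geq n'$, delivers $\ctxht{\compstlc{\src{t_2}}}\termt$, which is what was needed.

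Essentially all of the difficulty has already been discharged into the lemmas this theorem consumes — compiler correctness (\Cref{thm:compstlc-sem-pres}), backtranslation correctness (\Cref{thm:backtr-corr}), adequacy (\Cref{thm:log-rel-adeq-both}), and the bridge between the two termination judgements (\Cref{thm:term-def-rel}) — so the only genuine obstacle in assembling the theorem itself is the index bookkeeping: I must choose the backtranslation index, the two step bounds $n$ and $n'$ produced by \Cref{thm:term-def-rel}, the precision flag $p$, and the approximation direction $\anylogrel$ so that one and the same backtranslated context $\backtrstlc{\ctxt,n}$ can be used in both step (1) and step (3) — which works precisely because the $\gtrsim/\precise$ branch of \Cref{thm:backtr-corr} only demands $m\geq n$ whereas the $\lesssim/\imprecise$ branch is unconstrained in $m$ — and so that each appeal to adequacy is made at a logical-relation index at least as large as the corresponding \newterm bound. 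Combined with \Cref{thm:stlc-refl} (reflection), this yields that \compstlc{\cdot} is fully abstract; the other two compilers are handled identically (with \Cref{thm:compstlce-sem-pres-ic}/\Cref{thm:ic-compstlc-sem-pres} and their backtranslations), and \compstlcfe{\cdot} also follows by composition.
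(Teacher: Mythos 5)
Your proposal is correct and follows essentially the same route as the paper's proof: the diagram chase of the preservation direction, instantiated with compiler correctness (\Cref{thm:compstlc-sem-pres}), backtranslation correctness (\Cref{thm:backtr-corr}) with $m=n$, $p=\precise$, $\gtrsim$ for step (1) and $p=\imprecise$, $\lesssim$ at index $n'$ for step (3), bridged by \Cref{thm:term-def-rel} and adequacy (\Cref{thm:log-rel-adeq-both}). Your index bookkeeping — taking $n$ directly as the \newterm bound from \Cref{thm:term-def-rel} before fixing the backtranslation index — is, if anything, a slightly tidier rendering of exactly the argument the paper gives.
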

\begin{proof}
  Take \ctxt such that $\vdash \ctxt : \trge,\compstlc{\tau} \to \trge,\trgb{\tau}$.
  We need to prove that $\ctxht{\compstlc{t_1}}\termt \iff \ctxht{\compstlc{t_2}}\termt$.
  By symmetry, we prove only that if $\ctxht{\compstlc{t_1}}\termt$ then $\ctxht{\compstlc{t_2}}\termt$ (HPTT).
  Take $n$ strictly larger than the steps needed for $\ctxht{\compstlc{t_1}}\termt$.
  By \Thmref{thm:compstlc-sem-pres} we have $\srce\vdash\src{t_1}\anylogreln{n}\compstlc{\src{t_1}}:\src{\tau}$.
  Take $m=n$, so we have $(m \geq n \text{ and } p = \precise)$ and therefore $(\anylogrel = \overlogrel)$.
  By \Thmref{thm:backtr-corr} we have $\srce\vdash \src{\backtrstlc{\ctxt, m}\hole{t_1}} \overlogreln{n} \ctxht{\compstlc{\src{t_1}}} : \emuldv{m;p;\trgb{\tau}}$.
  By \Thmref{thm:term-def-rel} with HPTT we have: $\ctxht{\compstlc{t_2}}\shrinkt{\_}$ (HPTS).
  By \Thmref{thm:log-rel-adeq-both} for $\overlogrel$ and HPTS we have: \src{\backtrstlc{\ctxt, m}\hole{t_1}\termsl},
  which by source contextual equivalence gives us $\src{\backtrstlc{\ctxt, m}\hole{t_2}\termsl}$ (HPTS2).
  Given $n'$ the number of steps for HPTS2, by \Thmref{thm:compstlc-sem-pres} we have: $\srce\vdash\src{t_2}\anylogreln{n'}\compstlc{\src{t_2}}:\src{\tau}$.
  So by definition: $\srce\vdash\src{t_2}\underlogreln{n'}\compstlc{\src{t_2}}:\src{\tau}$.
  By \Thmref{thm:backtr-corr} (with $n=n'$, $p=\imprecise$ and $\anylogrel=\underlogrel$) we can conclude $\srce\vdash \src{\backtrstlc{\ctxt, m}\hole{t_2}} \underlogreln{n} \ctxht{\compstlc{\src{t_2}}} : \emuldv{m;p;\trgb{\tau}}$.
  By \Thmref{thm:term-def-rel} with HPTS2 we have: $\src{\backtrstlc{\ctxt, m}\hole{t_2}\shrinks{\_}}$ (HPTT2).
  By \Thmref{thm:log-rel-adeq-both} for $\underlogrel$ with HPTT2 we conclude the thesis.
\end{proof}

\begin{theorem}[\compstlcic{\cdot} preserves equivalence]\label{thm:contextual-equivalence-preservation-ic}\label{thm:stlc-pres-ic}
    \[
    \text{If }
        \trge \vdash \trg{t_1} \ceqt \trg{t_2} : \trg{\tat}
      \text{ then }
        \othe \vdash \compstlcic{t_1} \ceqo \compstlcic{t_2} : \compstlcic{\tau}
    \]
\end{theorem}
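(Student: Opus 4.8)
The plan is to replay the proof of \Cref{thm:stlc-pres} verbatim, shifting every language one step to the right: the compiler's source becomes \stlcim, its target becomes \stlcem, the compiler is \compstlcic{\cdot}, the backtranslation is \backtrstlcei{\cdot}, the cross-language relation is \lrie in place of \lrfi, and each cited \lrfi-lemma is replaced by its \lrie-counterpart — \Cref{thm:compstlce-sem-pres-ic} for semantics preservation, the \lrie-analogue of \Cref{thm:backtr-corr} for the backtranslation \backtrstlcei{\cdot}, and the \lrie-analogue of the adequacy lemma \Cref{thm:log-rel-adeq-both}. First I would fix an \stlcem context \ctxo with $\vdash\ctxo:\othe,\compstlcic{\tat}\to\othe,\oth{\sigma}$ and, by symmetry of $\ceqt$, reduce the goal to showing that $\ctxho{\compstlcic{t_1}}\termo$ implies $\ctxho{\compstlcic{t_2}}\termo$. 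Assuming $\ctxho{\compstlcic{t_1}}\termo$, I would immediately apply \Cref{thm:term-def-rel} to obtain $\ctxho{\compstlcic{t_1}}\shrinko{n}$ for some $n$ (taking the \newterm witness up front, rather than a raw termination step count, sidesteps the minor size-versus-step bookkeeping implicit in the proof of \Cref{thm:stlc-pres}).

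Next I would walk down the right-hand diagram of \Cref{fig:fac-dia}. By \Cref{thm:compstlce-sem-pres-ic}, $\trge\vdash\trg{t_1}\anylogreln{n}\compstlcic{t_1}:\trg{\tat}$; taking $m=n$ and $p=\precise$ (hence $\anylogrel=\overlogrel$), the \lrie-analogue of \Cref{thm:backtr-corr} for \backtrstlcei{\cdot} gives $\trge\vdash\backtrstlcei{\ctxo,m}\hole{t_1}\overlogreln{n}\ctxho{\compstlcic{t_1}}:\emuldvic{m;p;\oth{\sigma}}$, where $\backtrstlcei{\ctxo,m}$ is by construction a well-typed \stlcim context whose hole has type \trg{\tat}. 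Then the $\overlogrel$ adequacy lemma for \lrie, fed with $\ctxho{\compstlcic{t_1}}\shrinko{n}$, yields $\backtrstlcei{\ctxo,m}\hole{t_1}\termt$. Since $\backtrstlcei{\ctxo,m}$ is a legitimate \stlcim context, \stlcim contextual equivalence $\trge\vdash\trg{t_1}\ceqt\trg{t_2}:\trg{\tat}$ transports this to $\backtrstlcei{\ctxo,m}\hole{t_2}\termt$.

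Climbing back up the diagram closes the argument symmetrically. I would apply \Cref{thm:term-def-rel} to $\backtrstlcei{\ctxo,m}\hole{t_2}\termt$ to get $\backtrstlcei{\ctxo,m}\hole{t_2}\shrinkt{n'}$ for some $n'$; then \Cref{thm:compstlce-sem-pres-ic} at level $n'$ gives $\trge\vdash\trg{t_2}\underlogreln{n'}\compstlcic{t_2}:\trg{\tat}$; the \lrie-analogue of \Cref{thm:backtr-corr} with $p=\imprecise$ and $\anylogrel=\underlogrel$ (whose side condition holds for this choice regardless of $m$) gives $\trge\vdash\backtrstlcei{\ctxo,m}\hole{t_2}\underlogreln{n'}\ctxho{\compstlcic{t_2}}:\emuldvic{m;p;\oth{\sigma}}$; and finally the $\underlogrel$ adequacy lemma for \lrie, fed with $\backtrstlcei{\ctxo,m}\hole{t_2}\shrinkt{n'}$, delivers $\ctxho{\compstlcic{t_2}}\termo$, which is the goal.

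As with \Cref{thm:stlc-pres}, the theorem proof itself is essentially mechanical once the supporting machinery is in hand, so I do not expect it to be the bottleneck; the real effort goes into establishing the \lrie-instances of the auxiliary correctness lemmas — emulation of contexts (\Cref{thm:emul-ctx-sem-pres}), \inject{}/\extract{}, and the \uvalic{\cdot} helpers — for the \stlcim/\stlcem interface. Among these, the only genuinely \stlcem-specific obstacle is the compatibility lemma for the coinductive type-equality rule \Cref{tr:t-eq} (and its context analogue \Cref{tr:stlce-typectx-eq}). It is discharged precisely because \uvalic{n;\equi{\mat}} unfolds a recursive type \emph{without} consuming the step index: by the \lrie-analogue of \Cref{thm:tyeq-same-uval}, $\tyeqbin{\tau}{\sigma}$ forces $\uvalic{n;\oth{\tau}}=\uvalic{n;\oth{\sigma}}$ and hence equal term relations, so the emulation can recurse straight through a $\tyeq$ step without ever materialising the infinite derivation. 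I would spend the bulk of the work checking that this goes through and that contractivity of \equi{\mat} keeps the inductions defining \uvalic{\cdot} and \emuldvic{\cdot} well-founded despite the missing step decrement; but since all of these ingredients are already in place from \Cref{sec:bt-type,sec:rel-bt,sec:emul,sec:injext}, assembling the proof amounts to instantiating the template above.
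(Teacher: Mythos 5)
Your proposal is correct and matches the paper's intended argument: the paper proves only \Cref{thm:stlc-pres} in detail and leaves \Cref{thm:stlc-pres-ic} to the same template, which is precisely what you instantiate with \compstlcic{\cdot}, \backtrstlcei{\cdot}, the \lrie relation, and the corresponding correctness/adequacy lemmas (including the $\tyeq$-compatibility point handled via the analogue of \Cref{thm:tyeq-same-uval}). Your choice to take the \newterm witness up front rather than a raw step count is only a cosmetic tightening of the same proof.
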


\begin{theorem}[\compstlcfe{\cdot} preserves equivalence]\label{thm:fe-stlc-pres-fe}
    \[
    \text{If }
        \srce \vdash \src{t_1} \ceqs \src{t_2} : \src{\tau}
      \text{ then }
        \othe \vdash \compstlcfe{t_1} \ceqo \compstlcfe{t_2} : \compstlcfe{\tau}
    \]
\end{theorem}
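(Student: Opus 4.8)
The plan is to exploit that, by construction (\Cref{fig:comp}), the compiler $\compstlcfe{\cdot}$ is literally the composition $\compstlcic{\compstlcfi{\cdot}}$, so the statement drops out by chaining the two preservation results that precede it, exactly as \Cref{thm:stlc-refl-fe} was obtained from the reflection results of the two component compilers. Concretely, assuming $\srce \vdash \src{t_1} \ceqs \src{t_2} : \src{\tau}$, \Cref{thm:stlc-pres} (preservation for $\compstlcfi{\cdot}$) gives $\trge \vdash \compstlcfi{t_1} \ceqt \compstlcfi{t_2} : \compstlcfi{\tau}$; instantiating \Cref{thm:stlc-pres-ic} (preservation for $\compstlcic{\cdot}$) with the \stlcim terms $\compstlcfi{t_1}$ and $\compstlcfi{t_2}$ then yields $\othe \vdash \compstlcic{\compstlcfi{t_1}} \ceqo \compstlcic{\compstlcfi{t_2}} : \compstlcic{\compstlcfi{\tau}}$, which after unfolding the definition of $\compstlcfe{\cdot}$ on terms and types is the desired $\othe \vdash \compstlcfe{t_1} \ceqo \compstlcfe{t_2} : \compstlcfe{\tau}$. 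The only bookkeeping is to check that applying $\compstlcfi{\cdot}$ and then $\compstlcic{\cdot}$ to a term (respectively a context, a type) really coincides with the stand-alone $\compstlcfe{\cdot}$, which is immediate from \Cref{fig:comp}.

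Should one instead prefer an argument that does not route through \stlcim, the plan would be to re-run the diagram of \Cref{fig:fac-dia} (right) for $\compstlcfe{\cdot}$ directly: use semantics preservation of $\compstlcfe{\cdot}$ (\Cref{thm:ic-compstlc-sem-pres}, and \Cref{thm:compstlc-sem-pres-ctx-fe} for contexts) to relate each $\src{t_i}$ with $\compstlcfe{\src{t_i}}$ in both approximation directions; use correctness of the backtranslation $\backtrstlcef{\cdot}$ — which is itself $\backtrstlcif{\backtrstlcei{\cdot}}$, so that its correctness lemma follows by composing correctness of $\backtrstlcei{\cdot}$ with \Cref{thm:backtr-corr} for $\backtrstlcif{\cdot}$ — to transport (non-)termination between the target context plugged with $\compstlcfe{\src{t_i}}$ and its backtranslation plugged with $\src{t_i}$; use \Cref{thm:term-def-rel} to move between termination and \newterm; and close the loop with adequacy (\Cref{thm:log-rel-adeq-both}, in its \lrfe instance). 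The resulting argument would be a near-verbatim copy of the proof of \Cref{thm:stlc-pres}, with the languages and lemma names renamed and the \lrfe versions of the \emulate{}{\cdot}, \inject{}/\extract{} and \uvalfe{} correctness lemmas substituted for the \lrfi ones.

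I do not expect a genuinely new obstacle here. All the delicate points specific to the coinductive setting — the $\equi{\mu}$ case of the backtranslation type \uvalfe{} not consuming a step, the treatment of the type-equality rule \Cref{tr:t-eq} during emulation, and the \newterm bound needed to make \inject{}/\extract{} correct — have already been discharged: in the second route by the \lrfe-specific results \Cref{thm:tyeq-same-uval}, \Cref{thm:fe-compat-lem-backtr-eq} and \Cref{thm:fe-eq-type-eq-rels} together with \Cref{thm:inj-ext-sem-pres}, and in the first route simply by \Cref{thm:stlc-pres-ic}. Thus the main thing to get right is cosmetic: deciding whether to present the one-line composition or the stand-alone re-derivation, and, if the latter, carrying the (already proved) extra hypotheses on the $\gtrsim$ direction — in particular the size-of-argument side conditions introduced to repair \Cref{thm:inj-ext-sem-pres} — through the renamed lemmas so that the proof of \Cref{thm:stlc-pres} transfers cleanly.
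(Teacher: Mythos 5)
Your proposal is correct and matches the paper's own treatment: the paper obtains preservation (and full abstraction) for $\compstlcfe{\cdot}$ exactly by composing \Cref{thm:stlc-pres} and \Cref{thm:stlc-pres-ic}, since $\compstlcfe{\cdot}=\compstlcic{\compstlcfi{\cdot}}$ by definition. Your sketched stand-alone alternative likewise mirrors the \lrfe{} machinery the paper develops, so no gap remains.
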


\subsection{Full Abstraction for the Three Compilers}\label{sec:proofs-fa}
With the two directions of fully-abstract compilation already proved, we can easily show that all three compilers are fully abstract.
As before, full abstraction of \compstlcfe{\cdot} trivially follows from composing full abstraction for the other two compilers.
\begin{theorem}[\compstlc{\cdot} is fully abstract]\label{thm:fac-stlc}
  \[
        \srce \vdash \src{t_1} \ceqs \src{t_2} : \src{\tau} \iff \trge \vdash \compstlc{t_1} \ceqt \compstlc{t_2} : \compstlc{\tau}
      \]
\end{theorem}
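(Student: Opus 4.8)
The plan is to obtain this statement as an immediate corollary of the two directions of fully-abstract compilation that have already been established for \compstlc{\cdot} in isolation. By \Cref{def:fac}, full abstraction is exactly the conjunction of preservation ($\Rightarrow$) and reflection ($\Leftarrow$) of contextual equivalence at type $\src{\tau}$ and its compilation $\compstlc{\tau}$, so the proof splits along the bi-implication with nothing new to prove in either branch.

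For the direction $\trge \vdash \compstlc{t_1} \ceqt \compstlc{t_2} : \compstlc{\tau} \Rightarrow \srce \vdash \src{t_1} \ceqs \src{t_2} : \src{\tau}$ I would invoke \Cref{thm:stlc-refl}. Conversely, for $\srce \vdash \src{t_1} \ceqs \src{t_2} : \src{\tau} \Rightarrow \trge \vdash \compstlc{t_1} \ceqt \compstlc{t_2} : \compstlc{\tau}$ I would invoke \Cref{thm:stlc-pres}. Both are already proved: reflection follows the argument of \Cref{fig:fa-refl} (left), using that every term and program context is logically related to its compilation (\Cref{thm:compstlc-sem-pres,thm:compstlc-sem-pres-ctx}) together with adequacy (\Cref{thm:log-rel-adeq-both}); preservation follows the argument of \Cref{fig:fac-dia} (right), using the approximate backtranslation $\backtrstlc{\cdot}$, its correctness (\Cref{thm:backtr-corr}), correctness of \emulate{}{\cdot} on contexts (\Cref{lem:emulate-works-ctx}), the \inject{}/\extract{} machinery (\Cref{thm:inj-ext-sem-pres}), and the interplay of the two approximation directions $\lesssim$ and $\gtrsim$ mediated by \NewTerm and \Cref{thm:term-def-rel}.

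Because both halves are in hand, there is essentially no obstacle left: the theorem is just their conjunction, and the only thing to check is the trivial bookkeeping that the hypotheses and conclusions of \Cref{thm:stlc-refl,thm:stlc-pres} line up with the two implications of \Cref{def:fac} at matching types. The genuinely difficult part — preservation, and within it the design of the type-indexed family of backtranslation types and the switch to the \NewTerm-based observation relation — has already been discharged, so I expect this final step to be a one-line proof.
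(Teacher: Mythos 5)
Your proposal is correct and matches the paper's own argument exactly: \Cref{thm:fac-stlc} is obtained by combining the reflection direction (\Cref{thm:stlc-refl}) with the preservation direction (\Cref{thm:stlc-pres}), with no additional work beyond checking that the types line up. Nothing further is needed.
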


\begin{theorem}[\compstlcic{\cdot} ms fully abstract]\label{thm:fac-stlc-ic}\label{thm:fac-stlcie}
  \[
        \trge \vdash \trg{t_1} \ceqt \trg{t_2} : \trg{\tat} \iff \othe \vdash \compstlcic{t_1} \ceqo \compstlcic{t_2} : \compstlcic{\tat}
    \]
\end{theorem}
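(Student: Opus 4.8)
The plan is to observe that the statement is precisely the conjunction of the two implications making up \Cref{def:fac}, and to discharge each of them with a result already in hand. Unfolding the biconditional: the right-to-left direction demands that contextually-equivalent \stlcem images arise only from contextually-equivalent \stlcim preimages, which is exactly \Cref{thm:stlc-refl-ic}; the left-to-right direction demands that contextually-equivalent \stlcim terms compile to contextually-equivalent \stlcem terms, which is exactly \Cref{thm:contextual-equivalence-preservation-ic}. So the proof itself is a two-line appeal to those two theorems, with nothing else to check — \compstlcic{\cdot} merely erases \iso{\fold{}}/\iso{\unfold{}} annotations, and both halves have already been developed relative to it.

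The real content lives inside the two cited theorems. Reflection follows the easy pattern of \Cref{fig:fa-refl} (left): one shows that every \stlcim term and context is \lrie-related to its \stlcem compilation (\Cref{thm:compstlce-sem-pres-ic,thm:compstlc-sem-pres-ctx-ie}) and then transports equi-termination through adequacy of the logical relation (\Cref{thm:log-rel-adeq-both}). Preservation follows the harder diagram of \Cref{fig:fac-dia} (right): given a \stlcem context, one builds the approximate backtranslation \backtrstlcei{\cdot} into a \stlcim context, uses the $\gtrsim$ half of the approximation together with source contextual equivalence to bounce termination through the source, then uses the $\lesssim$ half to return, with \Cref{thm:term-def-rel} mediating between termination and \newterm at the two junctions. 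This is the structure of the proof of \Cref{thm:stlc-pres}, replayed for the \stlcim-to-\stlcem compiler.

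The step I expect to be the genuine obstacle — though, by the time we reach this theorem, it has already been paid for — is \Cref{thm:contextual-equivalence-preservation-ic}, and within it the correctness of the backtranslation of \stlcem contexts into \stlcim ones. Two points are delicate. First, \stlcem contexts carry coinductive type-equality derivations (\Cref{tr:stlce-typectx-eq}), which cannot be replayed step-for-step; the resolution is \Cref{thm:tyeq-same-uval}, that $\tyeq$-equal types have identical backtranslation types, so the emulation of a use of \Cref{tr:t-eq} is just the recursive call on the sub-derivation. Second, the backtranslation type \uvalic{n;\equi{\mat}} unfolds recursive types without decrementing the step index, so that the resulting \stlcim contexts need no \iso{\fold{}}/\iso{\unfold{}} of their own; here one must verify well-foundedness of \uvalic{\cdot} via \lMuCount{\cdot} and must thread the \newterm-based observation relation \obsfun{\cdot}{\cdot} through \Cref{thm:inj-ext-sem-pres} to avoid the defect exhibited in \Cref{ex:need-newterm}. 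With those in place, the present theorem is immediate.
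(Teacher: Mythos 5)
Your proposal is correct and matches the paper's own treatment: the theorem is obtained directly by conjoining the reflection direction (\Cref{thm:stlc-refl-ic}) with the preservation direction (\Cref{thm:stlc-pres-ic}), both of which are established beforehand. Your further remarks on what those two theorems rest on (compiler correctness plus adequacy for reflection; the approximate backtranslation, \Cref{thm:tyeq-same-uval}, and the \newterm-based observation relation for preservation) accurately reflect the paper's development.
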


\begin{theorem}[\compstlcfe{\cdot} is fully abstract]\label{thm:fe-fac-stlc-fe}
  \[
        \srce \vdash \src{t_1} \ceqs \src{t_2} : \src{\tau} \iff \othe \vdash \compstlcfe{t_1} \ceqo \compstlcfe{t_2} : \compstlcfe{\tau}
    \]
\end{theorem}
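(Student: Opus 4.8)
The plan is to obtain this result directly from the two primitive full-abstraction theorems, \Cref{thm:fac-stlc} and \Cref{thm:fac-stlcie}, exploiting that $\compstlcfe{\cdot}$ is by construction the composite $\compstlcic{\cdot}\circ\compstlcfi{\cdot}$ (see \Cref{fig:comp} and \Cref{sec:comp-def}) and that every type compiler is the identity, so that the type index used in the statement satisfies $\compstlcfe{\tau} = \compstlcic{\compstlcfi{\tau}}$. Unfolding \Cref{def:fac}, full abstraction of $\compstlcfe{\cdot}$ amounts to a preservation implication and a reflection implication, quantified over all closed $\src{t_1},\src{t_2}$; these are already stated verbatim as \Cref{thm:fe-stlc-pres-fe} and \Cref{thm:stlc-refl-fe}, so the theorem is simply their conjunction.

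For completeness I would recall how each direction is itself obtained by composing through \stlcim. For preservation: from $\srce \vdash \src{t_1 \ceqs t_2} : \src{\tau}$ the preservation half of \Cref{thm:fac-stlc} (namely \Cref{thm:stlc-pres}) gives contextual equivalence of the two \stlcim images $\compstlcfi{t_1},\compstlcfi{t_2}$; applying the preservation half of \Cref{thm:fac-stlcie} (namely \Cref{thm:stlc-pres-ic}) to those two \stlcim terms gives contextual equivalence of their \stlcem images, which are exactly $\compstlcfe{t_1},\compstlcfe{t_2}$ by definition of the composite. For reflection I would run the same chain backwards: the reflection half of \Cref{thm:fac-stlcie} (\Cref{thm:stlc-refl-ic}) turns \stlcem-equivalence of $\compstlcfe{t_1},\compstlcfe{t_2}$ into \stlcim-equivalence of $\compstlcfi{t_1},\compstlcfi{t_2}$, and then the reflection half of \Cref{thm:fac-stlc} (\Cref{thm:stlc-refl}) turns that into $\srce \vdash \src{t_1 \ceqs t_2} : \src{\tau}$.

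The only points requiring care are organisational. First, one must check that the contextual-equivalence judgement in \stlcim produced by one composition step is literally the hypothesis of the next step; this holds because \Cref{def:ceq} is a single, language-generic definition and both directions of each primitive result quantify over closed terms typed under empty environments, so the two notions coincide on the nose. Second, the type indices must line up, i.e.\ $\compstlcfe{\tau}$ and $\compstlcic{\compstlcfi{\tau}}$ must denote the same \stlcem type; this is immediate since the type compilers are identities and \stlcf types carry no recursive types, so they pass unchanged through both phases. I do not anticipate any genuine obstacle here: as noted at the start of \Cref{sec:proofs-fa}, this is the \emph{trivial} corollary, the substance being the approximate-backtranslation arguments behind \Cref{thm:fac-stlc} and \Cref{thm:fac-stlcie}, which are already done. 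The only real pitfall is orientation — reflection must travel \stlcem $\to$ \stlcim $\to$ \stlcf while preservation travels \stlcf $\to$ \stlcim $\to$ \stlcem, with the two primitive compilers used in opposite source/target roles in the two directions.
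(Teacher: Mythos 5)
Your proposal matches the paper's own argument: the paper proves this theorem exactly by composing the full-abstraction results for \compstlcfi{\cdot} and \compstlcic{\cdot} (equivalently, by combining the already-established preservation and reflection directions, \Cref{thm:fe-stlc-pres-fe} and \Cref{thm:stlc-refl-fe}), just as you do. Your additional bookkeeping about the identity type compilers and the intermediate \stlcim contextual equivalence is correct and raises no issues.
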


\section{Mechanisation of the Results}\label{sec:coq}
A full mechanization of all results in this paper in the Coq proof assistant is available at the following url: 
\begin{center}
\url{https://github.com/dominiquedevriese/fixismu-coq}
\end{center}
As the results of this paper are based on the earlier results of \citet{Devriese:2016:FCA:2837614.2837618,popl-journal}, the mechanization is based on the one of \citet{popl-journal}.
It was this mechanization effort which made us notice the errors in our earlier paper-only proofs \citep{isoequi-popl} and it is the mechanization which makes us confident in our current solution based on \NewTerm.
In fact, \NewTerm was first used in the Coq mechanization and subsequently backtranslated - \emph{cough} - to the paper proofs.

The mechanized proof corresponds quite closely to the proofs detailed in this paper, including the addition of \NewTerm.
The main technical challenge is that Coq requires us to be more specific about certain aspects that we gloss over informally on paper.
This includes specifically the fact that all types in \stlcim and \stlcem are closed and that all recursive types must be contractive.
Interestingly, this contractiveness requirement is necessary for our backtranslation from \stlcim to \stlcf to work, but not essential for the meta-theory of \stlcim itself, so we had initially not included the requirement in the definition of the language but treated it only as a precondition of the back-translation. 
This broke down because the meta-theory of \stlcem does not make sense without the contractiveness requirement and embedding potentially uncontractive \stlcim terms into contractive \stlcem terms does not work, so we ended up including the requirement in the definition of \stlcim as well.

\section{Discussion}\label{sec:disc}
At this point, it is useful to take a step back, and reflect on the meaning of our results.
As we have explained, our results demonstrate that iso- and equi-recursive types do not fundamentally alter the expressiveness of the simply typed lambda calculus with term-level recursion.
This result can appear contradictory, since recursive types certainly make it possible to define types and programs that do not exist in the unmodified simply typed lambda calculus.
A simple example is the type of boolean lists $\iso{\mathit{BoolList}} \isdef \iso{\matgen{X}{\Unitt \uplus (\Boolt \times X)}}$.
This type is inexpressible in the simply typed lambda calculus, as is, in fact, any type that can contain values of an a priori unbounded size.
Clearly, the ability to define such types and algorithms that work with it, is useful in a programming language.
But what then does it mean that recursive types do not increase the expressiveness of the language?

To understand this well, it is important to reflect on the meaning of programming language expressiveness.
As we have explained, we use a fully abstract embedding to express equi-expressiveness between the two languages.
Let us investigate the statement of, for example, \Cref{thm:fac-stlc} again, to reflect upon what it means:
\[
    \srce \vdash \src{t_1} \ceqs \src{t_2} : \src{\tau} \iff \trge \vdash \compstlc{t_1} \ceqt \compstlc{t_2} : \compstlc{\tau}
\]

The property states that if two  terms $\src{t_1}$ and $\src{t_2}$ are contextually equivalent in \stlcim{}, then they remain contextually equivalent in \stlcem{}.
To understand what this means for the relative expressiveness of \stlcim{} and \stlcem{}, one should regard the contextual equivalence $\src{t_1} \ceqs \src{t_2}$ as an expressiveness challenge for \stlcim{} contexts.
The property implies that no \stlcim{} context is sufficiently expressive to distinguish the two terms $\src{t_1}$ and $\src{t_2}$.
The fully abstract embedding of \Cref{thm:fac-stlc-ic} then, implies that if such a challenge is unsolvable by \stlcim{} contexts, then it is also unsolvable by \stlcem{} contexts.

It is not difficult to see that other language extensions of \stlcim{} do change the set of contextual equivalences.
For example, adding some form of mutable state would make it easy to distinguish $\src{\lambda f: \Units \to \Units\ldotp f~(f~\units)}$ from $\src{\lambda f:\Units \to \Units\ldotp f~\units}$.
Our results imply that no such expressiveness differences exist between \stlcf{}, \stlcim{} and \stlcem{}.

Essentially, our proof is based on considering how a \stlcem{} or \stlcim{} context solves one of the expressiveness challenges we consider.
Specifically, when a \stlcim{} or \stlcem{} context distinguishes two terms by terminating for one but diverging for another, we cannot simply replicate its behaviour in \stlcf{} because it may have used values of types that are unrepresentable in \stlcf{}.
However, the terminating execution will have taken only a finite amount of steps and in this finite amount of steps, it can only have inspected \stlcim{} or \stlcem{} values up to a finite depth.
Because of this, we can replicate the context's behaviour in \stlcf{} using only finite types by approximating potentially infinite recursive types up to a sufficiently large but finite depth.
It is precisely this approximation of infinite types that we define in our back-translation.

The usage of contextual equivalences as a challenge of expressiveness for program contexts allows us to (1) clarify how our fully abstract embeddings imply a form of equi-expressiveness and (2) understand the limitations of the presented results.
Particularly, the results are crucially based on the observation that the challenge only requires accurately emulating the behaviour of a two particular executions and only up to the point that one terminates while the other doesn't.
We could, for example, consider expressiveness challenges that involve not two programs, but an infinite sequence of programs, in which case, it might not be possible to determine a finite depth of emulation for the back-translation to work.

A well-known infinitary expressiveness challenge, for example, is to take the set of all Turing machines, encoded as integers, and require the context to terminate iff the corresponding Turing machine terminates.
Since \stlcf{} types can only represent finite data types (note the absence of an unbounded integer type), it is not obvious that such a context exists, as Turing machines may use unbounded amounts of memory.
Then again, in the absence of infinite types, it is also impossible to encode the infinite set of Turing machines.
If we did have a type of unbounded naturals or integers, there would automatically be ways to represent infinite memory, for example, as functions of type $\mathbb{N} \to \mathbb{N}$.
As such, it is natural to suspect that such a version of \stlcf{} would be able to semi-decide Turing machine termination, like \stlcim{} and \stlcem{}.

The expressiveness comparison might also yield different results in versions of \stlcf{}, \stlcim{} and \stlcem{} with external effects.
In such a setting, the observable behaviour of an expression might consist of a potentially infinite trace of events rather than termination after a finite amount of steps.
The infinite nature of observable behaviour in such a setting might also make it impossible to determine a bound on the required back-translation depth.
In such a setting, one could imagine an expressiveness challenge that requires the context to produce effectful behaviour that requires an unbounded amount of memory.
For example, we might consider a set of programs that invoke a function in the context, where the context needs to respond to each invocation by printing the full list of values received so far.
If there is an infinite amount of programs without a bound on the amount of values, then the finite memory of \stlcf{} contexts might not allow them to remember all booleans received, unlike \stlcim{} or \stlcem{} contexts.
In such an effectful setting, an infinitary expressiveness challenge might indeed demonstrate a way that recursive types increase the expressiveness of the language. 

In this paper, it is not our goal to investigate in detail such other notions of expressiveness, defined by infinitary expressiveness challenges and/or potentially infinite external effects.
However, it is important to understand that our results naturally pertain to forms of language expressiveness that are measured using finitary expressiveness challenges like full abstraction.
This corresponds to the intuitive understanding that recursive types allow defining potentially infinite types like lists and algorithms that work with them.
We consider it likely that the existence of such types and such algorithms can be detected using appropriately-chosen infinitary expressiveness challenges.
As such, the equi-expressiveness of our full abstraction results should not be taken to mean that recursive types are useless, just that they do not increase the ability of contexts to distinguish pairs of expressions.

\section{Related Work}\label{sec:rw}
Two alternative formulations of equi-recursive types exist: one based on an inductive type equality (which we dub \stlcemind in this section) and one based on a weak type equality (which we dub \stlcemsimp).%
\footnote{
  We typeset these languages in a \eqind{green}, \eqind{verbatim} font, though they appear in this section only.
}
\stlcemind defines an equality relation on types ($\tyeqind$) that, unlike ours, is inductively defined~\citep{syn-con-rec-ty}. %
Types are equal if they are the same (\Cref{tr:eq-t-base-ind,tr:eq-t-var-ind}), when their subparts are equal (\Cref{tr:eq-t-bi-ind,tr:eq-t-mu-ind}) or when one is the unfolding of the other (\Cref{tr:eq-t-fold-ind}).
To keep track of type variables, typing equality is defined with respect to an environment $\eqind{\Delta}\bnfdef\eqinde\mid\eqind{\Delta;\alpha}$.
\begin{center}
\small
\mytoprule{ \tyeqindbin{\Delta }{\tau}{\sigma}}

\typerule{Eq-type-Symmetric}{
  \eqind{\Delta}\vdash\eqind{\tau'}\tyeqind\eqind{\tau}
}{
  \eqind{\Delta}\vdash\eqind{\tau}\tyeqind\eqind{\tau'}
}{eq-t-sym-ind}\and
\typerule{Eq-type-Transitive}{
  \eqind{\Delta}\vdash\eqind{\tau}\tyeqind\eqind{\tau''}
  \\
  \eqind{\Delta}\vdash\eqind{\tau''}\tyeqind\eqind{\tau'}
}{
  \eqind{\Delta}\vdash\eqind{\tau}\tyeqind\eqind{\tau'}
}{eq-t-trans-ind}
\typerule{Eq-type-Bi}{
  \eqind{\star} \in \{\eqind{\to}, \eqind{\times}, \eqind{\uplus}\}
  \\
  \eqind{\Delta}\vdash\eqind{\tau_1}\tyeqind\eqind{\tau_1'}
  &
  \eqind{\Delta}\vdash\eqind{\tau_2}\tyeqind\eqind{\tau_2'}
}{
  \eqind{\Delta}\vdash\eqind{\tau_1\star\tau_2}\tyeqind\eqind{\tau_1'\star\tau_2'}
}{eq-t-bi-ind}\and
\typerule{Eq-type-Base}{
  \eqind{\iota} = \eqind{\Unit} ~\vee~
  \\
      \eqind{\iota} = \eqind{\Bool}
}{
  \eqind{\Delta}\vdash\eqind{\iota}\tyeqind\eqind{\iota}
}{eq-t-base-ind}\and
\typerule{Eq-type-Var}{
  \eqind{\alpha}\in\eqind{\Delta}
}{
  \eqind{\Delta}\vdash\eqind{\alpha}\tyeqind\eqind{\alpha}
}{eq-t-var-ind}\and
\typerule{Eq-type-Mu}{
  \eqind{\Delta,\alpha}\vdash\eqind{\tau}\tyeqind\eqind{\tau'}
}{
  \eqind{\Delta}\vdash\eqind{\mat}\tyeqind\eqind{\matgen{\alpha}{\tau'}}
}{eq-t-mu-ind}\and
\typerule{Eq-type-Unfold}{
  \eqind{\Delta}\vdash\eqind{\tau\subi{\mat}{\alpha}}\tyeqind\eqind{\tau'}
}{
  \eqind{\Delta}\vdash\eqind{\mat}\tyeqind\eqind{\tau'}
}{eq-t-fold-ind}

\end{center}
\citet{Cai:2016:SFE:2914770.2837660} explain that this notion of type equality is strictly weaker than the coinductive one we have used.
For example, they mention two type equalities that do not hold in \stlcemind:
\begin{align*}
  \eqinde \vdash&\ \eqind{\matgen{\alpha}{\alpha \to \Unit}} \ntyeqind \eqind{\matgen{\alpha}{(\alpha \to \Unit) \to \Unit}}
  &
  \eqinde \vdash&\ \eqind{\matgen{\alpha}{\matgen{\beta}{\alpha \to \beta}}} \ntyeqind \eqind{\matgen{\alpha}{\alpha \rightarrow \alpha}}
\end{align*}
To understand why these equalities do not hold in the inductive formulation, consider that no amount of unfolding of a recursive type $\mu$s will ever produce recursive types with a different body. 

\begin{wrapfigure}{R}{.5\textwidth}
\small
\centering
\typerule{Type-\stlcemsimp-fold}{
  \eqind{\Gamma}\vdash\eqind{t}:\eqind{\tau\subi{\mat}{\alpha}}
}{
  \eqind{\Gamma}\vdash\eqind{t}:\eqind{\mat}
}{t-e-fold-s}
\typerule{Type-\stlcemsimp-unfold}{
  \eqind{\Gamma}\vdash\eqind{t}:\eqind{\mat}
}{
  \eqind{\Gamma}\vdash\eqind{t}:\eqind{\tau\subi{\mat}{\alpha}}
}{t-e-unfold-s}
\end{wrapfigure}
\stlcemsimp instead enforces that just a recursive type and its unfolding are equivalent~\citep{ahmedphd,appel-equi,prta,10.1145/800017.800528}.
This leads to more compact typing rules and it does not require a type equivalence relation, effectively this is like \stlcim but without \iso{\fold{}}/\iso{\unfold{}} annotations.

The main difference is that in this last variant, unfoldings can only happen at the top-level of a type of a term (i.e., when terms are of a recursive type themselves).
In both \stlcemind and in our coinductive variant \stlcem, unfoldings can also happen inside the types.
For example, types such as $\equi{(\matgen{\alpha}{B\uplus\alpha})\to B}$ and $\equi{(B\uplus(\matgen{\alpha}{B\uplus\alpha}))\to B}$ are not equivalent in this last variant, because we can unfold $\equi{\matgen{\alpha}{B\uplus\alpha}}$ to $\equi{(B\uplus(\matgen{\alpha}{B\uplus\alpha}))}$ inside the domain of the function type.
These types are however equivalent in \stlcemind and in \stlcem.

Since terms of \stlcemind (or \stlcemsimp) can be typed in \stlcem and their semantics do not vary, our results show that all these different formulations of equi-recursive types are equally expressive. 
Since the approximate backtranslation is needed to deal with the coinductive derivations of \stlcem, we believe that a precise backtranslation akin to that of \citet{max-embed} can be used to prove full abstraction for the compiler from \stlcim to \stlcemind.
We leave investigating this for future work.

As mentioned in \Cref{sec:intro}, the closest work to ours is that of \citet{syn-con-rec-ty}.
Like us, they study the relation between iso- and equi-recursive types and prove that any term typed \stlcim can be typed in \stlcemind and vice versa.
For the backward direction, they insert cast functions which appropriately insert \iso{\fold{}} and \iso{\unfold{}} annotations to make terms typecheck.
Additionally, they use a logic to prove that the terms with the casts are equivalent to the original, but the logic does not come with a soundness proof.
\citeauthor{syn-con-rec-ty} do not connect their results to the operational semantics in any way, unlike ours, and their results cannot be used to derive fully-abstract compilation, as they relate one term and its compilation, not two terms and their compilation.
Finally, it is not clear if \citeauthor{syn-con-rec-ty}'s Theorem 6.8 can be interpreted to imply any form of equi-expressiveness of the two languages.
In fact, what \citeauthor{syn-con-rec-ty} prove is that an equi-recursive term is equal to a back-translated term under a certain equality that is (conjectured to be) almost (but not entirely) sound for observational equivalence in equi-recursive contexts. 
On the other hand, in our setting, the interaction of the same programs with arbitrary contexts provides a measure on the relative expressiveness of those contexts when interacting with the given programs.
This difference is key to make claims about the relative expressive power of languages, as we make.

\smallskip

Fully-abstract compilation derived from fully-abstract semantics models~\citep{faml}, and it has been initially devised to study the relative expressive power of programming languages~\citep{gorla-fa,Mitchell-expr-pow,Felleisen-expr-pow}.%
\footnote{
  Not all these works use the term ``fully-abstract compilation'' but their intuition is the same.
}
Fully-abstract compilation has been widely used to compare process algebras and their relative expressiveness, as surveyed by \citet{exprPA}.
Additionally, researchers have argued that fully-abstract compilation is a feasible criterion for secure compilation~\citep{DBLP:conf/icalp/Abadi98,Kennedy:2006:SNP:1226601.1226605}, as surveyed by~\citet{scsurvey}.

Proofs of fully-abstract compilation are notoriously complex and thus a large amount of work exists in devising proof techniques for it.
Most of these proof techniques require a form of backtranslation \citep{Ahmed:2008:TCC:1411203.1411227,ahmedCPS,nonintfree}.
Precise backtranslations generate source contexts that reproduce the behaviour of the target context faithfully, without any approximation~\citep{max-embed,van_strydonck_linear_2019}.
Approximate backtranslations, instead, generate source contexts that reproduce that behaviour up to a certain number of steps.
The approximate backtranslation proof technique we use was conjectured by \citet{obs-pc-corr-trans} and was used by \citet{popl-journal} to prove full abstraction for a compiler from \stlcf to the untyped lambda calculus (\stlcu).
Unlike these works, we deal with a family of backtranslation types that is indexed by target types.
Additionally, our compilers do not perform dynamic typechecks; they are simply the canonical translation of a term in the source language into the target.
Finally, we remark that our results cannot be derived from \citet{Devriese:2016:FCA:2837614.2837618} since the languages in that paper have no recursive types. 

Interestingly, our current result can be seen as factoring out the first phase of \citet{Devriese:2016:FCA:2837614.2837618}'s compiler;
their result could be seen as composing one of our current results with a second fully abstract compiler from \stlcim to \stlcu, which takes care of dynamic type enforcement.
The full abstraction proof for this second compiler could be a lot simpler with recursive types in the source language, as it would no longer require an approximate backtranslation.
In fact, we believe that reusable sub-results could be factored out from other full abstraction results in the literature too.
For example, we conjecture that one could separate closure conversion from purity enforcement in \citet{max-embed}'s compiler, or separate contract enforcement from universal contract erasure in \citet{van_strydonck_linear_2019}'s compiler.
We hope our experience can inspire other researchers to pay more attention to such factoring opportunities and strive to minimize compiler phases.
In other words, we believe the community could benefit from using a nanopass secure compilation mindset, in the spirit of \emph{nanopass} compilation \citep{sarkar_nanopass_2004}.
Even computationally-trivial nanopasses like ours can be useful as they enrich the power of contexts and simplify secure compilation proofs further downstream.

\section{Conclusion}\label{sec:conc}
This paper demonstrates that the simply typed lambda calculus with iso- and equi-recursive types has the same expressive power.
To do so, it presented three fully-abstract compilers in order to reason about iso- and equi-recursively typed terms interacting over a simply-typed interface and a recursively-typed one. 
The first compiler translates from a simply-typed lambda calculus with a fixpoint operator (\stlcf) to a simply-typed lambda calculus with iso-recursive types (\stlcim).
The second compiler translates from \stlcf to a simply-typed lambda calculus with coinductive equi-recursive types (\stlcem).
These two compilers demonstrate the same expressive power of iso- and equi-recursive types on a simply-typed interface.
The third compiler translates from \stlcim to \stlcem, demonstrating equal expressiveness of iso- and equi-recursive types on a recursively-typed interface.
All fully-abstract compilation proofs rely on a novel adaptation of the approximate backtranslation proof technique that works with families of target types-indexed backtranslation type.

\section*{Acknowledgements}
The authors thank the anonymous reviewers for detailed feedback on an earlier draft as well as Phil Wadler for interesting comments and suggestions and Steven Keuchel for Coq hints.
This work was partially supported: 
  by the German Federal Ministry of Education and Research (BMBF) through funding for the CISPA-Stanford Center for Cybersecurity (FKZ: 13N1S0762),
  by the Italian Ministry of Education through funding for the Rita Levi Montalcini grant (call of 2019);
  by the Air Force Office of Scientific Research under award number FA9550-21-1-0054, and
  by the Fund for Scientific Research - Flanders (FWO).

\newpage
\bibliographystyle{alphaurl}   %
\bibliography{./refs}

\end{document}